\def\DONOTINSERTCOMMENTS{}
\not \isundefined{\disputationsdatum} 
\not \isundefined{\disputationslokal}}   
  \or \boolean{detectedSTOC} \or \boolean{detectedFOCS}
  \or \boolean{detectedSIAM} \or \boolean{detectedIEEE}
  \or \boolean{detectedPoster}}
\or \boolean{detectedSIAM}         \or 
\or \boolean{detectedNOW}          \or 
\or \boolean{detectedACM}          \or
\or \boolean{detectedLIPIcs}       \or
\or \boolean{detectedSigplanconf}}
\or \boolean{detectedFOCS}         \or 
\or \boolean{detectedPoster}       \or
\or \boolean{detectedLMCS}         \or
\or \boolean{detectedNOW}          \or
\or \boolean{detectedThesis}       \or
\or \boolean{detectedACM}          \or 
\or \boolean{detectedAAAI}         \or
\or \boolean{detectedSigplanconf}}
\DeclareMathAlphabet{\mathsfsl}{OT1}{cmss}{m}{sl}
\newcommand{\formatfunctiontoset}[1]{\mathit{#1}}
\newcommand{\introduceterm}[1]{{\emph{#1}}}
\newcommand{\eqperiod}{\enspace .}
\newcommand{\eqcomma}{\enspace ,}
\newcommand{\wrt}{with respect to\xspace}
\newcommand{\eg}{for instance\xspace} %
\newcommand{\ifaoif}{if and only if\xspace}
\newcommand{\bigoh}[1]{\mathrm{O} ( #1 )}
\newcommand{\littleoh}[1]{\mathrm{o} ( #1 )}
\newcommand{\bigtheta}[1]{\Theta ( #1 )}
\newcommand{\Bigomega}[1]{\Omega \bigl( #1 \bigr)}
\newcommand{\bigomega}[1]{\Omega ( #1 )}
\newcommand{\complclassformat}[1]%
        {\textrm{\upshape{\textsf{#1}}}\xspace}
\newcommand{\cocomplclass}[1]%
        {\textrm{\upshape{\textsf{co#1}}}\xspace}
\newcommand{\DTIMEadviceclass}[2]%
    {\ensuremath{\complclassformat{DTIME}\bigl(#1\bigr)/{#2}}}
\newcommand{\NP}{\complclassformat{NP}}
\newcommand{\PSPACE}{\complclassformat{PSPACE}}
\newcommand{\EXPTIME}{\complclassformat{EXPTIME}}
\newcommand{\refsec}[1]{Section~\ref{#1}}
\newcommand{\refapp}[1]{Appendix~\ref{#1}}
\newcommand{\reffig}[1]{Figure~\ref{#1}}
\newcommand{\reftwofigs}[2]{Figures~\ref{#1} and~\ref{#2}}
\newcommand{\refth}[1]{Theorem~\ref{#1}}
\newcommand{\reftwoths}[2]{Theorems~\ref{#1} and~\ref{#2}}
\newcommand{\reflem}[1]{Lemma~\ref{#1}}
\newcommand{\reftwolems}[2]{Lemmas~\ref{#1} and~\ref{#2}}
\newcommand{\refpr}[1]{Proposition~\ref{#1}}
\newcommand{\refdef}[1]{Definition~\ref{#1}}
\newcommand{\reftwodefs}[2]{Definitions~\ref{#1} and~\ref{#2}}
\newcommand{\Reflem}[1]{Lemma~\ref{#1}}
\newcommand{\refeq}[1]{\eqref{#1}}}
\renewcommand{\refeq}[1]{\eqref{#1}}}
\newcommand{\ceiling}[1]{\lceil #1 \rceil}
\newcommand{\Ceiling}[1]{\bigl \lceil #1 \bigr \rceil}
\newcommand{\floor}[1]{\lfloor #1 \rfloor}
\newcommand{\Floor}[1]{\bigl \lfloor #1 \bigr \rfloor}
\newcommand{\FLOOR}[1]{\left \lfloor #1 \right \rfloor}
  \newcommand{\Nplus}     {\mathbb{N}^{+}}
  \newcommand{\Nzero}     {\mathbb{N}_{0}}
\newcommand{\MAXOFEXPR}[2][]{\max_{#1} \left\{ #2 \right\}}
\newcommand{\MINOFEXPR}[2][]{\min_{#1} \left\{ #2 \right\}}
\newcommand{\Maxofexpr}[2][]{\max_{#1} \bigl\{ #2 \bigr\}}
\newcommand{\Minofexpr}[2][]{\min_{#1} \bigl\{ #2 \bigr\}}
\newcommand{\MAXOFSET}[3][:]{\max \left\{ #2 #1 #3 \right\}}
\newcommand{\MINOFSET}[3][:]{\min \left\{ #2 #1 #3 \right\}}
\newcommand{\Maxofset}[3][:]{\max \bigl\{ #2 #1 #3 \bigr\}}
\newcommand{\Minofset}[3][:]{\min \bigl\{ #2 #1 #3 \bigr\}}
\renewcommand{\MAXOFSET}[3][:]%
     {\ifthenelse{\equal{#1}{;}}%
     {\MAXOFEXPR{ #2 \,;\, #3 }}
     {\ifthenelse{\equal{#1}{:}}%
     {\MAXOFEXPR{ #2 \,:\, #3 }}
     {\max \twincommandJN{\left\{}{#2}{\left#1}{\right}{\,#3}{\right\}}}}}
\renewcommand{\MINOFSET}[3][:]%
     {\ifthenelse{\equal{#1}{;}}%
     {\MINOFEXPR{ #2 \,;\, #3 }}
     {\ifthenelse{\equal{#1}{:}}%
     {\MINOFEXPR{ #2 \,:\, #3 }}
     {\min \twincommandJN{\left\{}{#2}{\left#1}{\right}{\,#3}{\right\}}}}}
\renewcommand{\Maxofset}[3][:]%
     {\ifthenelse{\equal{#1}{;}}%
     {\Maxofexpr{ #2 \,;\, #3 }}
     {\ifthenelse{\equal{#1}{:}}%
     {\Maxofexpr{ #2 \,:\, #3 }}
     {\max \twincommandJN{\bigl\{}{#2}{\bigl#1}{\bigr}{\,#3}{\bigr\}}}}}
\renewcommand{\Minofset}[3][:]%
     {\ifthenelse{\equal{#1}{;}}%
     {\Minofexpr{ #2 \,;\, #3 }}
     {\ifthenelse{\equal{#1}{:}}%
     {\Minofexpr{ #2 \,:\, #3 }}
     {\min \twincommandJN{\bigl\{}{#2}{\bigl#1}{\bigr}{\,#3}{\bigr\}}}}}
\newcommand{\gf}[1]{\mathrm{GF} ( #1 )}
\DeclareMathOperator{\Expop}{E}
\newcommand{\twincommandJN}[6]%
    {#1#2#3\vphantom{#2#5}\mspace{-2.05mu}#4.#5#6}
\newcommand{\CondExp}[2]%
    {\Expop\twincommandJN{\bigl[}{#1}{\bigl|}{\bigr}{\,#2}{\bigr]}}
\newcommand{\CONDEXP}[2]%
     {\Expop\twincommandJN{\left[}{#1}{\left|}{\right}{\,#2}{\right]}}
\newcommand{\CondProb}[3][]%
    {\Pr_{#1}\twincommandJN{\bigl[}{#2}{\bigl|}{\bigr}{\,#3}{\bigr]}}
\newcommand{\CONDPROB}[3][]%
    {\Pr_{#1}\twincommandJN{\left[}{#2}{\left|}{\right}{\,#3}{\right]}}
\newcommand{\isdistras}[2]{\ensuremath{#1} \sim \ensuremath{#2}}
\newcommand{\descname}{\formatfunctiontoset{desc}}
\newcommand{\descnode}[2][]{\descname_{#1}(#2)}}
\renewcommand{\descnode}[2][]{\descname_{#1}(#2)}}
\newcommand{\setcompact}[1]{{\ensuremath{\bigl\{ #1 \bigr\}}}}
\newcommand{\setdescrcompact}[3][\mid]{{\setcompact{ #2 #1 #3 }}}
\newcommand{\set}[1]{\{ #1 \}}
\newcommand{\Set}[1]{\bigl\{ #1 \bigr\}}
\newcommand{\setdescr}[3][\mid]{\set{ #2 #1 #3 }}
\newcommand{\Setdescr}[3][|]%
     {\ifthenelse{\equal{#1}{;}}%
     {\Set{ #2 \,;\, #3 }}
     {\ifthenelse{\equal{#1}{:}}%
     {\Set{ #2 \,:\, #3 }}
     {\twincommandJN{\bigl\{}{#2\,}{\bigl#1}{\bigr}{\,#3}{\bigr\}}}}}
\newcommand{\SETDESCR}[3][|]%
     {\twincommandJN{\left\{}{#2\,}{\left#1}{\right}{\,#3}{\right\}}}
\newcommand{\Setdescrbrackets}[3][|]%
     {\twincommandJN{\bigl[}{#2}{\bigl#1}{\bigr}{\,#3}{\bigr]}}
\newcommand{\SETDESCRBRACKETS}[3][|]%
     {\twincommandJN{\left[}{#2}{\left#1}{\right}{\,#3}{\right]}}
\newcommand{\Setsize}[1]{\bigl\lvert#1\bigr\rvert}
\newcommand{\setsize}[1]{\lvert#1\rvert}
\newcommand{\union}{\cup}
\newcommand{\unionSP}{\, \union \, }
\newcommand{\disjointunion}{\overset{.}{\cup}}
\newcommand{\DisjointunionInText}%
    {{\smash{\overset{\mbox{\boldmath{.}}}{\bigcup}}}\vphantom{\bigcup}}
\newcommand{\intnfirst}[1]{[{#1}]}
\newcommand{\Lor}{\bigvee}
\newcommand{\olnot}[1]{\overline{#1}}
\newcommand{\stdnot}[1]{\olnot{#1}}
\newcommand{\nvar}{n}
\newcommand{\nclause}{m}
\newcommand{\clwidth}{k}
\newcommand{\randkcnfnclwrepl}[3][\clwidth]%
        {\ensuremath{\mathcal{F}^{#2, #3}_{#1}}}
\newcommand{\randkcnfnclwreplstd}%
        {\randkcnfnclwrepl{\clwidth}{\nvar}{\nclause}}
\newcommand{\israndkcnfnclwrepl}[4]%
  {\isdistras{#1}{\randkcnfnclwrepl[#2]{#3}{#4}}}
\newcommand{\randkcnfprobcl}[3]%
        {\ensuremath{\mathcal{F}^{#2}_{#1} \bigl(#3 \bigr)}}
\newcommand{\pcfor}[4][to]{for #2 := #3 #1 #4 do}
\newcommand{\pcformath}[4][to]%
    {\pcfor[#1]{\ensuremath{#2}}{\ensuremath{#3}}{\ensuremath{#4}}}
\newcommand{\pcassigncompact}[2]{#1 := #2}
\newcommand{\pcassignmathcompact}[2]%
        {\pcassigncompact{\ensuremath{#1}}{\ensuremath{#2}}}
\newcommand{\inductionformat}[1]{\textit{#1}}
\newcommand{\BASE}[1][]
        {\inductionformat
                {%
                        \ifthenelse{\equal{#1}{}}%
                                {Base case: }%
                                {Base case (#1):}%
                }%
        }
\not \boolean{detectedSTOC}     \and \not \boolean{detectedFOCS}
\not \boolean{detectedPoster}   \and \not \boolean{detectedElsevier} 
\not \boolean{detectedSIAM}     \and \not \boolean{detectedACM}
\not \boolean{detectedIEEE}     \and \not \boolean{detectedNOW}
\not \boolean{detectedToC}      \and \not \boolean{detectedThesis}
\not \boolean{detectedLNCS}     \and \not \boolean{detectedLIPIcs}
\not \boolean{detectedAAAI}     \and \not \boolean{detectedSigplanconf}
\not \boolean{detectedCompCplx}}
\newtheorem{theorem}{Theorem}
\newtheorem{lemma}[theorem]{Lemma}
\newtheorem{proposition}[theorem]{Proposition}
\newtheorem{corollary}[theorem]{Corollary}
\newtheorem{observation}[theorem]{Observation}
\newtheorem{definition}[theorem]{Definition}
\newtheorem{claim}[theorem]{Claim}
\newtheorem{conjecture}{Conjecture}
\newtheorem{openproblem}[conjecture]{Open Problem}
\newcounter{unnumber}
\newtheorem{observation}[theorem]{Observation}
\newtheorem{claim}[theorem]{Claim}
\newtheorem{conjecture}{Conjecture}
\newtheorem{openquestion}{Open Question}
\newtheorem{remarkinner}[theorem]{Remark}
\newtheorem{exampleinner}[theorem]{Example}
\newcommand{\exampleendmarker}{\qquad$\Diamond$}
\newcommand{\remarkendmarker}{\qquad$\Diamond$}
\newenvironment{example}                        
    {\begin{exampleinner} \rm}
    {\exampleendmarker\end{exampleinner}}
\newenvironment{remark}                        
    {\begin{remarkinner} \rm}
    {\remarkendmarker\end{remarkinner}}
\newcounter{unnumber}
\newtheorem{standardlocalcounter}{Dummy}[chapter]
\newtheorem{standardglobalcounter}{Dummy}
\newtheorem{theorem}[standardlocalcounter]{Theorem}
\newtheorem{lemma}[standardlocalcounter]{Lemma}
\newtheorem{proposition}[standardlocalcounter]{Proposition}
\newtheorem{corollary}[standardlocalcounter]{Corollary}
\newtheorem{observation}[standardlocalcounter]{Observation}
\newtheorem{fact}[standardlocalcounter]{Fact}
\newtheorem{conjecturelocalcounter}[standardlocalcounter]{Conjecture}
\newtheorem{conjectureglobalcounter}[standardglobalcounter]{Conjecture}
\newtheorem{conjecture}[standardglobalcounter]{Conjecture}
\newtheorem{openquestion}[standardglobalcounter]{Open Question}
\newtheorem{openproblem}[standardglobalcounter]{Open Problem}
\newtheorem{problem}{Problem}
\newtheorem{property}[standardlocalcounter]{Property}
\newtheorem{definition}[standardlocalcounter]{Definition}
\newtheorem{claim}[standardlocalcounter]{Claim}
\newtheorem{algorithm}[standardlocalcounter]{Algorithm}
\newtheorem{remark}[standardlocalcounter]{Remark}
\newtheorem{example}[standardlocalcounter]{Example}
\renewenvironment{proof}[1][Proof]{\par\trivlist
   \item[\hskip \labelsep{\itshape {#1}.}]\prooffont}
   {\hspace*{0pt plus1fill}\fboxsep2.5pt\fboxrule.5pt\raise3pt\hbox{\fbox{}}\endtrivlist}
\theoremstyle{plain}    
\newtheorem{theorem}[thm]{Theorem}
\newtheorem{lemma}[thm]{Lemma}
\newtheorem{proposition}[thm]{Proposition}
\newtheorem{corollary}[thm]{Corollary}
\newtheorem{observation}[thm]{Observation}
\newtheorem{conjecture}[thm]{Conjecture}
\newtheorem{problem}[thm]{Problem}
\newtheorem{openquestion}{Open Question}
\newtheorem{openproblem}{Open Problem}
\theoremstyle{definition}
\newtheorem{property}[thm]{Property}
\newtheorem{definition}[thm]{Definition}
\newtheorem{claim}[thm]{Claim}
\newtheorem{remark}[thm]{Remark}
\newtheorem{example}[thm]{Example}
\newtheorem{standardlocalcounter}{Dummy}[section]
\newtheorem{standardglobalcounter}{Dummy}
\theoremstyle{plain}    
\newtheorem{theorem}[standardglobalcounter]{Theorem}
\newtheorem{lemma}[standardglobalcounter]{Lemma}
\newtheorem{proposition}[standardglobalcounter]{Proposition}
\newtheorem{corollary}[standardglobalcounter]{Corollary}
\newtheorem{observation}[standardglobalcounter]{Observation}
\newtheorem{fact}[standardglobalcounter]{Fact}
\newtheorem{conjecture}[standardglobalcounter]{Conjecture}
\newtheorem{openquestion}{Open Question}
\newtheorem{openproblem}{Open Problem}
\newtheorem{problem}{Problem}
\theoremstyle{definition}
\newtheorem{property}[standardglobalcounter]{Property}
\newtheorem{definition}[standardglobalcounter]{Definition}
\newtheorem{claim}[standardglobalcounter]{Claim}
\theoremstyle{remark}
\newtheorem{remark}[standardglobalcounter]{Remark}
\newtheorem{example}[standardglobalcounter]{Example}
\newtheoremstyle{meta}%
  {3pt}%
  {3pt}%
  {\scshape \small }%
  {}%
  {\scshape \small }%
  {:}%
  { }%
  {}%
\theoremstyle{meta}
\newtheorem{meta}{Meta comment}
\newtheoremstyle{questions}%
  {3pt}%
  {3pt}%
  {\sffamily \slshape}%
  {}%
  {\bfseries \sffamily \slshape}%
  {:}%
  { }%
  {}%
\theoremstyle{questions}
\newtheorem{questions}{Open questions}
\spnewtheorem*{proofsketch}{Proof sketch}{\itshape}{\rmfamily}
\spnewtheorem{observation}{Observation}{\bfseries}{\itshape}
\spnewtheorem{fact}{Fact}{\bfseries}{\itshape}
\newtheorem{observation}[theorem]{Observation}
\newtheorem{fact}[theorem]{Fact}
\newtheorem{claim}[theorem]{Claim}
\newtheorem{openquestion}{Open Question}
\newtheorem{openproblem}{Open Problem}
\newcounter{unnumber}
\theoremstyle{plain}
\newtheorem{observation}[theorem]{Observation}
\newtheorem{openproblem}[theorem]{Open Problem}
\theoremstyle{definition}
\newtheorem{property}[theorem]{Property}
\renewcommand{\refth}[1]{\expref{Theorem}{#1}}
\renewcommand{\reflem}[1]{\expref{Lemma}{#1}}
\renewcommand{\refpr}[1]{\expref{Proposition}{#1}}
\renewcommand{\refdef}[1]{\expref{Definition}{#1}}
\renewcommand{\Reflem}[1]{\expref{Lemma}{#1}}
\renewcommand{\refsec}[1]{\expref{Section}{#1}}
\renewcommand{\refapp}[1]{\expref{Appendix}{#1}}
\renewcommand{\reffig}[1]{\expref{Figure}{#1}}
\theoremstyle{plain}    
\newtheorem{fact}[theorem]{Fact}
\newtheorem{proposition}[theorem]{Proposition}
\newtheorem{observation}[theorem]{Observation}
\newtheorem{claim}[theorem]{Claim}
\newtheorem{standardlocalcounter}{Dummy}[section]
\newtheorem{standardglobalcounter}{Dummy}
\theoremstyle{plain}    
\newtheorem{theorem}[standardlocalcounter]{Theorem}
\newtheorem{lemma}[standardlocalcounter]{Lemma}
\newtheorem{proposition}[standardlocalcounter]{Proposition}
\newtheorem{corollary}[standardlocalcounter]{Corollary}
\newtheorem{observation}[standardlocalcounter]{Observation}
\newtheorem{fact}[standardlocalcounter]{Fact}
\newtheorem{conjecturelocalcounter}[standardlocalcounter]{Conjecture}
\newtheorem{conjectureglobalcounter}[standardglobalcounter]{Conjecture}
\newtheorem{conjecture}[standardglobalcounter]{Conjecture}
\newtheorem{openquestion}[standardglobalcounter]{Open Question}
\newtheorem{openproblem}[standardglobalcounter]{Open Problem}
\newtheorem{problem}[standardglobalcounter]{Problem}
\newtheorem{question}[standardglobalcounter]{Question}
\theoremstyle{definition}
\newtheorem{property}[standardlocalcounter]{Property}
\newtheorem{definition}[standardlocalcounter]{Definition}
\newtheorem{claim}[standardlocalcounter]{Claim}
\newtheorem{subclaim}[standardlocalcounter]{Subclaim}
\newtheorem{algorithm}[standardlocalcounter]{Algorithm}
\theoremstyle{remark}
\newtheorem{remark}[standardlocalcounter]{Remark}
\newtheorem{example}[standardlocalcounter]{Example}
                          \or \boolean{detectedElsevier}}
\newtheorem{standardlocalcounter}{Dummy}[section]
\newtheorem{standardglobalcounter}{Dummy}
\theoremstyle{plain}    
\newtheorem{theorem}[standardlocalcounter]{Theorem}
\newtheorem{lemma}[standardlocalcounter]{Lemma}
\newtheorem{proposition}[standardlocalcounter]{Proposition}
\newtheorem{corollary}[standardlocalcounter]{Corollary}
\newtheorem{observation}[standardlocalcounter]{Observation}
\newtheorem{conjecturelocalcounter}[standardlocalcounter]{Conjecture}
\newtheorem{conjectureglobalcounter}[standardglobalcounter]{Conjecture}
\newtheorem{conjecture}[standardglobalcounter]{Conjecture}
\newtheorem{openquestion}[standardglobalcounter]{Open Question}
\newtheorem{openproblem}[standardglobalcounter]{Open Problem}
\newtheorem{problem}[standardglobalcounter]{Problem}
\theoremstyle{definition}
\newtheorem{property}[standardlocalcounter]{Property}
\newtheorem{definition}[standardlocalcounter]{Definition}
\newtheorem{claim}[standardlocalcounter]{Claim}
\newtheorem{subclaim}[standardlocalcounter]{Subclaim}
\theoremstyle{remark}
\newtheorem{remark}[standardlocalcounter]{Remark}
\newtheorem{example}[standardlocalcounter]{Example}
\newtheorem{standardlocalcounter}{Dummy}[chapter]
\newtheorem{standardglobalcounter}{Dummy}
\theoremstyle{plain}    
\newtheorem{theorem}[standardlocalcounter]{Theorem}
\newtheorem{lemma}[standardlocalcounter]{Lemma}
\newtheorem{proposition}[standardlocalcounter]{Proposition}
\theoremstyle{definition}
\newtheorem{definition}[standardlocalcounter]{Definition}
\newtheorem{claim}[standardlocalcounter]{Claim}
\theoremstyle{remark}
\newtheoremstyle{meta}%
  {3pt}%
  {3pt}%
  {\scshape \small }%
  {}%
  {\scshape \small }%
  {:}%
  { }%
  {}%
\theoremstyle{meta}
\newtheoremstyle{questions}%
  {3pt}%
  {3pt}%
  {\sffamily \slshape}%
  {}%
  {\bfseries \sffamily \slshape}%
  {:}%
  { }%
  {}%
\theoremstyle{questions}
\or \boolean{detectedThesis} \or 
\or \boolean{detectedToC}    \or 
\or \boolean{detectedAAAI}}
\def\SetTime{\hours=\time
\global\divide\hours by 60
\minutes=\hours
\multiply\minutes by 60
\advance\minutes by-\time
\global\multiply\minutes by-1 }
\def\now{\number\hours:\ifnum\minutes<10 0\fi\number\minutes}
\newcommand{\boundary}[1]{\ensuremath{\partial #1}}
\DeclareFontFamily{OT1}{pzc}{}
\DeclareFontShape{OT1}{pzc}{m}{it}{<-> s * [1.200] pzcmi7t}{}
\DeclareMathAlphabet{\mathpzc}{OT1}{pzc}{m}{it}
\newcommand{\deriveswithall}%
        {\vdash_{\!\!\!{\scriptscriptstyle \forall}}} 
\newcommand{\notderiveswithall}%
        {\nvdash_{\!\!\!{\scriptscriptstyle \forall}}} 
\newcommand{\clcfgtransitioncrammed}[2]%
        {\ensuremath{#1 \!\rightsquigarrow\! #2}}
\newcommand{\varx}{\ensuremath{x}}
\newcommand{\setsofvarsorlitlarge}[2]%
        {\mathit{#1}\left({#2}\right)}
\newcommand{\setsofvarsorlit}[2]%
        {\mathit{#1}({#2})}
\newcommand{\setsofvarsorlitcompact}[2]%
        {\mathit{#1}\bigl({#2}\bigr)}
\newcommand{\setsofvarsorlitsup}[3]%
        {\mathit{#1}^{#2}({#3})}
\newcommand{\setsofvarsorlitsuplarge}[3]%
        {\mathit{#1}^{#2}\left({#3}\right)}
\newcommand{\setsofvarsorlitsupcompact}[3]%
        {\mathit{#1}^{#2}\bigl({#3}\bigr)}
\newcommand{\Vars}[1]{\setsofvarsorlitcompact{Vars}{#1}}
\newcommand{\derivabbrev}[2]{\bigl( #1 \vdash #2 \bigr)}
\newcommand{\derivabbrevsmall}[2]{( #1 \vdash #2 )}
\newcommand{\derivabbrevcompact}[2]{\bigl( #1 \vdash #2 \bigr)}
\newcommand{\refutabbrevsmall}[1]{\derivabbrevsmall{#1}{\falsenum}}
\newcommand{\refutabbrevcompact}[1]{\derivabbrevcompact{#1}{\falsenum}}
\renewcommand{\refutabbrevsmall}[1]{\derivabbrevsmall{#1}{\!\emptycl}}
\renewcommand{\refutabbrevcompact}[1]{\derivabbrevcompact{#1}{\!\emptycl}}
\renewcommand{\refutabbrevsmall}[1]{\derivabbrevsmall{#1}{\!\bot}}
\renewcommand{\refutabbrevcompact}[1]{\derivabbrevcompact{#1}{\!\bot}}
\newcommand{\genericrefsmall}[3]%
    {{\mathit{#1}}_{#2}\refutabbrevsmall{#3}}
\newcommand{\genericrefcompact}[3]%
    {{\mathit{#1}}_{#2}\refutabbrevcompact{#3}}
\newcommand{\genericderiv}[4]%
    {{\mathit{#1}}_{#2}\derivabbrev{#3}{#4}}
\newcommand{\genericderivsmall}[4]%
    {{\mathit{#1}}_{#2}\derivabbrevsmall{#3}{#4}}
\newcommand{\genericderivcompact}[4]%
    {{\mathit{#1}}_{#2}\derivabbrevcompact{#3}{#4}}
\newcommand{\generictaut}[3]%
    {{\mathit{#1}}_{#2}\derivabbrev{}{#3}}
\newcommand{\generictautcompact}[3]%
    {{\mathit{#1}}_{#2}\derivabbrevcompact{}{#3}}
\newcommand{\generictautsmall}[3]%
    {{\mathit{#1}}_{#2}\derivabbrevsmall{}{#3}}
\newcommand{\formulaformat}[1]{\ensuremath{\mathit{#1}}}
\renewcommand{\formulaformat}[1]{\mathit{#1}}
\newcommand{\transitionarrow}{\rightsquigarrow}
\newcommand{\pebcfgtransition}[2]%
    {\ensuremath{#1 \transitionarrow #2}}
\newcommand{\pebcfgtransitionsqueeze}[2]%
    {#1 \! \transitionarrow \! #2}
\newcommand{\formatpebblingprice}[1]{\textsl{\textsf{#1}}}
\newcommand{\Pebblingprice}[1]%
    {\formatpebblingprice{Peb}\bigl(#1\bigr)}
\newcommand{\pebblingpricecompact}[1]%
    {\formatpebblingprice{Peb}\bigl(#1\bigr)}
\newcommand{\Bwpebblingprice}[1]%
    {\formatpebblingprice{BW-Peb}\bigl(#1\bigr)}
\newcommand{\bwpebblingpricecompact}[1]%
    {\formatpebblingprice{BW-Peb}\bigl(#1\bigr)}
\newcommand{\pebpersistentsymbol}{\bullet}
\newcommand{\pebvisitingsymbol}{\emptyset}
\newcommand{\bwpebpricepersistent}[1]%
    {\formatpebblingprice{BW-Peb}^{\pebpersistentsymbol}(#1)}
\newcommand{\Bwpebpricepersistent}[1]%
    {\formatpebblingprice{BW-Peb}^{\pebpersistentsymbol}\bigl(#1\bigr)}
\newcommand{\bwpebpricevisiting}[1]%
    {\formatpebblingprice{BW-Peb}^{\pebvisitingsymbol}(#1)}
\newcommand{\Bwpebpricevisiting}[1]%
    {\formatpebblingprice{BW-Peb}^{\pebvisitingsymbol}\bigl(#1\bigr)}
\newcommand{\pebpricepersistent}[1]%
    {\formatpebblingprice{Peb}^{\pebpersistentsymbol}(#1)}
\newcommand{\Pebpricepersistent}[1]%
    {\formatpebblingprice{Peb}^{\pebpersistentsymbol}\bigl(#1\bigr)}
\newcommand{\pebpricevisiting}[1]%
    {\formatpebblingprice{Peb}^{\pebvisitingsymbol}(#1)}
\newcommand{\Pebpricevisiting}[1]%
    {\formatpebblingprice{Peb}^{\pebvisitingsymbol}\bigl(#1\bigr)}
\newcommand{\bwpebblingpriceempty}[1]%
    {\formatpebblingprice{BW-Peb}^{\pebvisitingsymbol}(#1)}
\newcommand{\bwpebblingpriceemptycompact}[1]%
    {\formatpebblingprice{BW-Peb}^{\pebvisitingsymbol}\bigl(#1\bigr)}
\newcommand{\pebdeg}{\ensuremath{d}}
\newcommand{\pebaxcompact}[2]%
        [\pebdeg]{\ensuremath{\formulaformat{Ax}^{#1} \bigl(#2 \bigr)}}
\newcommand{\pqrxvar}[6]%
    {\ensuremath{\stdnot{\varx({#1})}_{#2} \lor \stdnot{\varx({#3})}_{#4} \lor %
    \sourceclausexvar[#6]{#5}}}
\newcommand{\pqr}[6]%
    {\ensuremath{\stdnot{#1}_{#2} \lor \stdnot{#3}_{#4} \lor %
    \sourceclausenodisplay[#6]{#5}}}
\newcommand{\pqrstd}{\pqr{p}{i}{q}{j}{r}{l}}
\newcommand{\pqrall}[6]%
        {\setdescrcompact
        {\pqr{#1}{#2}{#3}{#4}{#5}{#6}}{#2,#4 \in \intnfirst{\pebdeg}}}
\newcommand{\pqrallstd}%
        {\setdescrcompact{\pqrstd}{i,j \in \intnfirst{\pebdeg}}}
\newcommand{\sourceclausexvar}[2][n]%
        {\Lor_{#1 = 1}^{\pebdeg} \varx({#2})_{#1}}
\newcommand{\subsourceclausexvar}[3][n]%
        {\Lor_{#1 = {#2}}^{\pebdeg} \varx({#3})_{#1}}
\newcommand{\sourceclausexvarnodisplay}[2][n]%
        {\textstyle \Lor_{#1 = 1}^{\pebdeg} \varx({#2})_{#1}}
\newcommand{\sourceclausenodisplay}[2][n]%
        {\textstyle \Lor_{#1 = 1}^{\pebdeg} #2_{#1}}
\newcommand{\relativisation}[1]%
    {\ensuremath{\formulaformat{Rel}\bigl(#1 \bigr)}}
\newcommand{\extPHPnot}[2]
    {\ensuremath{\extendedversion{\formulaformat{PHP}}^{#1}_{#2}}}
\newcommand{\GraphOntoPHPnot}[1][G]%
    {\text{$\formulaformat{Onto}$-$\formulaformat{PHP}$}_{#1}}
\renewcommand{\extPHPnot}[2]%
    {\ephpnot{#1}{#2}}
\newcommand{\ephpnot}[2]%
    {\vphantom{\extendedversion{\formulaformat{PHP}}}
      {\smash{\extendedversion{\formulaformat{PHP}}}
        \vphantom{\formulaformat{PHP}}}^{#1}_{#2}}
\newcommand{\efphpnot}[2]%
    {\vphantom{\extendedversion{\formulaformat{FPHP}}}
      {\smash{\extendedversion{\formulaformat{FPHP}}}
        \vphantom{\formulaformat{FPHP}}}^{#1}_{#2}}
\newcommand{\ontophpnot}[2]%
    {\formulaformat{Onto}\text{-}\formulaformat{PHP}^{#1}_{#2}}
\newcommand{\ontofphpnot}[2]%
    {\formulaformat{Onto}\text{-}\formulaformat{FPHP}^{#1}_{#2}}
\newcommand{\extendedversion}[1]{\widetilde{#1}}
\newcommand{\formatfunctiontosubconfiguration}[1]{\mathsf{#1}}
\newcommand{\formatfunctiontomulti}[1]{\mathcal{#1}}
\DeclareMathOperator{\dummystar}{*}
\newcommand{\pebblingcontrNT}[2][G]%
 {\ensuremath{\dummystar\!\!\formulaformat{Peb}^{#2}_{#1}}}
\newcommand{\somenodetrueclausedeg}[2]{\formulaformat{All}_{#1}^{+}({#2})}
\newcommand{\slashedstrickenletter}[1]{{\backslash\mkern-9mu #1}}
\newcommand{\strikethroughcommand}[1]{\slashedstrickenletter{#1}}
\newcommand{\abovevertices}[2][G]%
    {{#1}_{#2}^{\hspace{-0.2 pt}\triangledown}}
\newcommand{\aboveverticesNR}[2][G]%
    {{#1}_{\strikethroughcommand{#2}}^{\hspace{-0.3 pt}\triangledown}}
\newcommand{\belowvertices}[2][G]%
    {{#1}^{#2}_{\hspace{-0.6 pt}\vartriangle}}
\newcommand{\belowverticesNR}[2][G]%
    {{#1}^{\strikethroughcommand{#2}}_{\hspace{-0.6 pt}\vartriangle}}
\newcommand{\lpebblingpricecompact}[1]%
    {\formatpebblingprice{L-Peb}\bigl(#1\bigr)}
\newcommand{\scnot}[2]{#1 \langle #2 \rangle}
\newcommand{\scnotcompact}[2]{#1 \bigl\langle #2 \bigr\rangle}
\newcommand{\spcanonconfcompact}[1]%
        {\formatfunctiontosubconfiguration{canon}\bigl({#1}\bigr)}
\newcommand{\spprojsubsub}[4]%
    {\formatfunctiontosubconfiguration{proj}_{\scnot{#1}{#2}}(\scnot{#3}{#4})}
\newcommand{\spprojsubsubcompact}[4]%
    {\formatfunctiontosubconfiguration{proj}_{\scnot{#1}{#2}}%
    \bigl(\scnot{#3}{#4}\bigr)}
\newcommand{\spprojsubconf}[3]%
    {\formatfunctiontosubconfiguration{proj}_{\scnot{#1}{#2}}({#3})}
\newcommand{\spprojsubconfcompact}[3]%
    {\formatfunctiontosubconfiguration{proj}_{\scnot{#1}{#2}}\bigl({#3}\bigr)}
\newcommand{\spprojconfsub}[3]%
    {\formatfunctiontosubconfiguration{proj}_{#1}(\scnot{#2}{#3})}
\newcommand{\spprojconfsubcompact}[3]%
    {\formatfunctiontosubconfiguration{proj}_{#1}\bigl(\scnot{#2}{#3}\bigr)}
\newcommand{\spprojconfconf}[2]%
    {\formatfunctiontosubconfiguration{proj}_{#1}({#2})}
\newcommand{\spprojconfconfcompact}[2]%
    {\formatfunctiontosubconfiguration{proj}_{#1}\bigl({#2}\bigr)}
\newcommand{\spclossubcompact}[2]%
        {\formatfunctiontoset{cl}\bigl(\scnotcompact{#1}{#2}\bigr)}
\newcommand{\spintersubcompact}[2]%
        {\formatfunctiontoset{int}\bigl(\scnotcompact{#1}{#2}\bigr)}
\newcommand{\spcoversubcompact}[2]%
        {\formatfunctiontoset{cover}\bigl(\scnotcompact{#1}{#2}\bigr)}
\newcommand{\spcoverconfcompact}[1]%
        {\formatfunctiontoset{cover}\bigl({#1}\bigr)}
\newcommand{\spinducedblack}[1]%
    {\formatfunctiontoset{Bl} (#1)}
\newcommand{\spinducedwhite}[1]%
    {\formatfunctiontoset{Wh} (#1)}
\newcommand{\spinducedblackcompact}[1]%
    {\formatfunctiontoset{Bl} \bigl(#1 \bigr)}
\newcommand{\spinducedwhitecompact}[1]%
    {\formatfunctiontoset{Wh} \bigl(#1 \bigr)}
\newcommand{\pathclausedeg}[2][\pebdeg]%
    {\somenodetrueclausedeg[#1]{\vertexpath{#2}}}
\newcommand{\pathclauseNRdeg}[2][\pebdeg]%
    {\somenodetrueclausedeg[#1]{\vertexpathNR{#2}}}
\newcommand{\blacktruthdegexplicit}[4]%
        {\setdescrcompact
        {{\textstyle \Lor_{#2 = 1}^{#3} {#1}_{#2}}}
        {{#1} \in {#4}}}
\newcommand{\binsubtree}[1]{T^{#1}}
\newcommand{\vertexpath}[1]{{P}^{#1}}
\newcommand{\vertexpathNR}[1]{{P}_{*}^{#1}}
\newcommand{\unrelatedNP}[1]%
        {T \setminus \bigl(\binsubtree{#1} \unionSP \vertexpath{#1} \bigr)}
\newcommand{\unrelatedsmallNP}[1]%
        {T \setminus (\binsubtree{#1} \unionSP \vertexpath{#1} )}
\newcommand{\abovelevelblockerminsizecompact}%
    [2]{L_{\succeq{#1}}\bigl({#2}\bigr)}
\newcommand{\necessaryhidingvert}[2]%
{{#1}{\scriptstyle{\llfloor {#2} \rrfloor}}}
\newcommand{\Klawepropertyprefix}{Limited hiding-cardinality\xspace}
\newcommand{\klawepropacronym}{LHC property\xspace}
\newcommand{\nongenklaweprop}%
{non-generalized \Klawepropertyprefix property\xspace}
\newcommand{\nongenklawepropacronym}%
{non-generalized \klawepropacronym}
\newcommand{\nongenklawepropacronymWithParam}%
{(non-generalized) \klawepropacronym}
\newcommand{\siblingnonreachabiblitypropertynoref}%
{Sibling non-reachability property\xspace}
\newcommand{\Siblingnonreachabiblitypropertynoref}%
{Sibling non-reachability property\xspace}
\newcommand{\siblingnonreachabiblityproperty}%
{\siblingnonreachabiblitypropertynoref~%
\ref{property:sibling-non-reachability-property}\xspace}
\newcommand{\Siblingnonreachabiblityproperty}%
{\Siblingnonreachabiblitypropertynoref~%
\ref{property:sibling-non-reachability-property}\xspace}
\newcommand{\introducetermanmpctext}%
    {a \introduceterm{\mpctext{}}\xspace}
\newcommand{\introducetermamultipebblingtext}%
  {a \introduceterm{\multipebblingtext{}}\xspace}
\newcommand{\blobpebblingtext}{blob-pebbling\xspace}
\newcommand{\multipebblingtext}{\blobpebblingtext}
\newcommand{\mpcostblack}[1]%
        {\formatpebblingprice{cost}_{\mpcblacks}( #1 )}
\newcommand{\mpcostwhite}[1]%
        {\formatpebblingprice{cost}_{\mpcwhites}( #1 )}
\newcommand{\blobpebblingpricecompact}[1]%
    {\formatpebblingprice{Blob-Peb}\bigl(#1\bigr)}
\newcommand{\multipebblingpricecompact}[1]%
    {\formatpebblingprice{Blob-Peb}\bigl(#1\bigr)}
\newcommand{\mpcblacks}{\formatfunctiontomulti{B}}
\newcommand{\mpcwhites}{\formatfunctiontomulti{W}}
\newcommand{\mpscnotcompact}[2]%
        {\big[ {#1} \big] \bigl\langle {#2} \bigr\rangle}
\newcommand{\mpctext}{\blobpebblingtext con\-fig\-u\-ra\-tion\xspace}
\newcommand{\chargeablevertices}[1]%
{\formatfunctiontoset{chargeable}({#1}) }
\newcommand{\chargeableverticescompact}[1]%
{\formatfunctiontoset{chargeable}\bigl({#1}\bigr) }
\newcommand{\blackschargedfor}[1][]%
    {\mpcblacks_{#1}}
\newcommand{\whiteschargedfor}[1][]%
    {\mpcwhites_{#1}^{\hspace{-0.3 pt}\vartriangle}}
\newcommand{\whitesbelowjustblocked}%
    {\mpcwhites_{B}^{\hspace{-0.3 pt}\vartriangle}}
\newcommand{\whitesbelowhidden}%
    {\mpcwhites_{H}^{\hspace{-0.3 pt}\vartriangle}}
\newcommand{\whitestight}%
    {\mpcwhites_{T}^{\hspace{-0.3 pt}\vartriangle}}
\theoremstyle{plain}    
\newtheorem*{lem:expanderexists}{Lemma~\ref{lem:expanderexistnew} (restated)}
\newtheorem*{lem:closedset}{Lemma~\ref{lem:ClosedSet} (restated)}
\newcommand{\firstplayer}{Player~1\xspace}   %
\newcommand{\secondplayer}{Player~2\xspace}  %
\newcommand{\defi}{:=}
\newcommand{\XOR}{XOR\xspace}
\newcommand{\XORification}{XOR\-ification\xspace}
\newcommand{\xorification}{\XORification}
\newcommand{\xorified}{XORified\xspace}
\newcommand{\CNF}{CNF\xspace}
\newcommand{\FO}{\ensuremath{\mathsf L}}
\newcommand{\FOk}{\ensuremath{\mathsf L\!^{\pebblesk}}}
\newcommand{\FOcnt}{\ensuremath{\mathsf C}}
\newcommand{\FOcntk}{\ensuremath{\mathsf C^{\pebblesk}}}
\newcommand{\tuple}[1]{\ensuremath{\vec #1}}
\newcommand{\colourtype}{\ensuremath{\mathsf t}}
\newcommand{\wldim}{\ensuremath{k}} %
\newcommand{\wldimtext}{\ensuremath{k}} %
\newcommand{\xorwidth}{\ensuremath{\ell}}
\newcommand{\xorwidthalt}{m}
\newcommand{\Boolvala}{a}
\newcommand{\xformf}{\ensuremath{F}}
\newcommand{\xformg}{\ensuremath{H}}
\newcommand{\digraphtoxor}{\formulaformat{xor}}
\newcommand{\digraphxor}{\digraphtoxor(\digraph)}
\newcommand{\digraphxorarg}[1]{\digraphtoxor ( {#1} )}
\newcommand{\Digraphxorarg}[1]{\digraphtoxor \bigl( {#1} \bigr)}
\newcommand{\labelling}{\mathcal{M}}
\newcommand{\pyrquot}[2]{q_{#1}(#2)}
\newcommand{\pyrrem}[2]{r_{#1}(#2)}
\newcommand{\layernumber}{L}
\newcommand{\layernumberalt}{\layernumber'}
\newcommand{\rownumber}{\layernumber}
\newcommand{\rownumberalt}{\layernumberalt}
\newcommand{\pyheight}{h}
\newcommand{\setS}{S}
\newcommand{\dimension}{d}
\newcommand{\strucA}{\mathcal A}
\newcommand{\strucAelement}{u}
\newcommand{\strucB}{\mathcal B}
\newcommand{\strucBelement}{v}
\newcommand{\digraph}{\mathcal G}
\newcommand{\graph}{\mathcal G}
\newcommand{\graphalt}{\mathcal H}
\newcommand{\indegreenumber}{d}
\newcommand{\pyramid}{\mathcal P}
\newcommand{\binarytree}{\mathcal T}
\newcommand{\expandergraph}{\mathcal G}
\newcommand{\restrictedexpander}{\expandergraph'}
\newcommand{\EF}{Ehrenfeucht-Fra\"iss\'e\xspace}
\newcommand{\Spoiler}{Spoiler\xspace}
\newcommand{\Duplicator}{Duplicator\xspace}
\newcommand{\partmap}{p}
\newcommand{\partmapalt}{p'}
\newcommand{\bijection}{f}
\newcommand{\parammax}{\mathrm{hi}}
\newcommand{\parammin}{\mathrm{lo}}
\newcommand{\initialxorwidth}{p}
\newcommand{\pebblesk}{k} 
\newcommand{\pebblesl}{\ell} 
\newcommand{\indexn}{n}
\newcommand{\indexlhs}{m}
\newcommand{\indexrhs}{n}
\newcommand{\roundlowerbound}{r}
\newcommand{\leftsize}{n}  
\newcommand{\rightsize}{m}
\renewcommand{\leftsize}{\indexlhs}%
\renewcommand{\rightsize}{\indexrhs}%
\newcommand{\leftvertexset}{U}
\newcommand{\leftvertexsubset}{U'}
\newcommand{\rightvertexset}{V}
\newcommand{\rightvertexsubset}{V'}
\newcommand{\expansionguarantee}{s}
\newcommand{\expansionfactor}{c}
\newcommand{\expanderdegree}{\Delta}
\newcommand{\Ker}{\operatorname{Ker}}
\renewcommand{\ker}{\operatorname{Ker}}
\newcommand{\closure}{\gamma}
\newcommand{\expandersubgraph}[2]{#1\setminus #2}
\newcommand{\substituted}{[\expandergraph]}
\newcommand{\subst}[2]{#1[#2]}
\newcommand{\constraint}{c}
\renewcommand{\constraint}{C}
\newcommand{\constraintsubst}{\constraint\substituted}
\newcommand{\valuea}{a}
\newcommand{\valueb}{b}
\newcommand{\variables}{\operatorname{Vars}}
\renewcommand{\variables}{\mathit{Vars}}
\newcommand{\pebblestd}{k}
\renewcommand{\pebblestd}{\pebblesk}
\newcommand{\roundstd}{r} %
\newcommand{\xformsubst}{\xformf\substituted}
\newcommand{\xformorig}{\xformf}
\newcommand{\roundorig}{\roundstd}
\newcommand{\pos}{\alpha}
\newcommand{\posalt}{\beta}
\newcommand{\posorig}{\alpha}
\newcommand{\possubst}{\beta}
\newcommand{\startpossuffix}{\mathrm{start}}
\newcommand{\stoppossuffix}{\mathrm{end}}
\newcommand{\posorigstart}{\posorig_{\startpossuffix}}
\newcommand{\kstart}{\leftvertexset_{\startpossuffix}}
\newcommand{\posorigend}{\posorig_{\stoppossuffix}}
\newcommand{\kend}{\leftvertexset_{\stoppossuffix}}
\newcommand{\possubstpart}{\possubst'}
\newcommand{\possubstextended}{\possubst_{\mathrm{ext}}}
\newcommand{\posset}{\mathit{Cons}}
\newenvironment{subproof}[1][\proofname]{%
  \begin{proof}[#1]%
}{%
  \end{proof}%
}
\newcommand{\pebbleslower}{\pebblesk_{\parammin}}
\newcommand{\pebblesupper}{\pebblesk_{\parammax}}
\newcommand{\pebbleslowerlhs}{\pebblesl_{\parammin}}
\newcommand{\pebblesupperlhs}{\pebblesl_{\parammax}}
\newcommand{\pebblezero}{k_0}
\renewcommand{\pebblezero}{K_0}
\newcommand{\alternativetoell}{t}
\newcommand{\roundi}{i}
\newcommand{\leftvertexsubsetsize}{\ell}
\newcommand{\eulernumber}{2.72}
\renewcommand{\eulernumber}{e}
\newcommand{\eulernumberfourdigits}{2.718}
\newcommand{\eulernumbervariable}{e}
\newcommand{\smallepsilon}{\varepsilon}
\newcommand{\smallepsilonalt}{\smallepsilon'}
\newcommand{\nbhd}{\mathcal N}
\newcommand{\mindegree}{\expanderdegree_0}
\newcommand{\theauthorCB}{the first author\xspace}
\newcommand{\TheauthorCB}{The first author\xspace}
\newcommand{\theauthorJN}{the second author\xspace}
  \newcommand{\mysubsection}[1]{\paragraph{#1}}
  \renewcommand{\mysubsection}[1]{\textbf{\textit{#1}}}
  \renewcommand{\mysubsection}[1]{\paragraph{\textbf{#1}}}
  \newcommand{\mysubsection}[1]{\subsection{#1}}
\newtheoremstyle{metacommenttheoremstyle}%
    {3pt}%
    {3pt}%
    {\sffamily \itshape \scriptsize
    }%
    {}%
    {\bfseries \scshape \footnotesize }%
    {:}%
    { }%
    {}%
\theoremstyle{metacommenttheoremstyle}
\newtheorem{jncommentcontainer}{Jakob's comment}
\newtheorem{cbcommentcontainer}{Christoph's comment}
\newcounter{rbcounter}
\newcommand{\randbem}[3]%
{\stepcounter{rbcounter}%
  \parbox[t]{0mm}{$^{\arabic{rbcounter}}$}%
  \marginpar%
  {\textcolor{#1}%
    {\raggedright\footnotesize$\mathbf{#2}^{\arabic{rbcounter}}$: #3}%
  }
}
  \newcommand{\jncomment}[1]%
  {\begin{jncommentcontainer} \textcolor{blue}{#1} \end{jncommentcontainer}}
  \newcommand{\cbcomment}[1]%
  {\begin{cbcommentcontainer} \textcolor{magenta}{#1} \end{cbcommentcontainer}}
  \newcommand{\jn}[1]{\randbem{blue}{J}{#1}}
  \newcommand{\chr}[1]{\randbem{magenta}{C}{#1}}
  \newcommand{\jncomment}[1]{}
  \newcommand{\cbcomment}[1]{}
  \newcommand{\chr}[1]{}
  \newcommand{\jn}[1]{}
\numberwithin{equation}{section}
\begin{document}

\title{Near-Optimal Lower Bounds on Quantifier Depth and
    Weisfeiler--Leman Refinement Steps%
    \thanks{This is the full-length version of a paper with the same
      title which appeared in \emph{Proceedings of the 31st Annual ACM/IEEE 
        Symposium on Logic in Computer Science (LICS~'16)}.}}

\author{%
  Christoph Berkholz \\
  Humboldt-Universität zu Berlin
  \and
  Jakob Nordström \\
  KTH Royal Institute of Technology}

\date{\today}

\maketitle

\thispagestyle{empty}

\pagestyle{fancy}
\fancyhead{}
\fancyfoot{}
\renewcommand{\headrulewidth}{0pt}
\renewcommand{\footrulewidth}{0pt}

\fancyhead[CE]{\slshape 
  NEAR-OPTIMAL LOWER BOUNDS ON QUANTIFIER DEPTH 
}
\fancyhead[CO]{\slshape \nouppercase{\leftmark}}
\fancyfoot[C]{\thepage}

\setlength{\headheight}{13.6pt}

\makeatletter{}%
\ifthenelse{\boolean{conferenceversion}}
{%
  \begin{abstract}  
    We prove near-optimal trade-offs for quantifier depth versus number
    of variables in first-order logic by exhibiting pairs of $n$-element
    structures that can be distinguished by a $k$-variable first-order
    sentence but where every such sentence requires quantifier depth at
    least~$n^{\bigomega{k/\log k}}$. \mbox{Our trade-offs} also apply to
    first-order counting logic, and by the known connection to the
    $k$-dimensional Weisfeiler--Leman algorithm imply near-optimal lower
    bounds on the number of refinement iterations.
    A key component in our proof is the hardness condensation technique
    recently introduced by [Razborov~'16] in the context of proof
    complexity.  We apply this method to reduce the domain size of
    relational structures while maintaining the quantifier depth
    required to distinguish them.
  \end{abstract}
}
{%
  \begin{abstract}  
    We prove near-optimal trade-offs for quantifier depth versus number
    of variables in first-order logic by exhibiting pairs of $n$-element
    structures that can be distinguished by a $k$-variable first-order
    sentence but where every such sentence requires quantifier depth at
    least~$n^{\bigomega{k/\log k}}$. Our \mbox{trade-offs} also apply to
    first-order counting logic, and by the known connection to the
    $k$-dimensional Weisfeiler--Leman algorithm imply near-optimal lower
    bounds on the number of refinement iterations.
    
    A key component in our proof is the hardness condensation technique
    recently introduced by [Razborov~'16] in the context of proof
    complexity.  We apply this method to reduce the domain size of
    relational structures while maintaining the minimal quantifier depth
    to distinguish them in finite variable logics.
  \end{abstract}
}
\makeatletter{}%

\section{Introduction}
\label{sec:intro}

The $\pebblesk$-variable fragment of first-order logic 
\FOk{}
consists of
those first-order sentences that use at most $\pebblesk$~different variables.
A simple example is the  $\FO\!^2$ sentence 
\begin{equation}
  \label{eq:FOkExample}
  \exists x \exists y(Exy \wedge \exists x (Eyx\wedge \exists y
  (Exy\wedge \exists x Eyx))) 
\end{equation}
stating that there exists a directed path of length~$4$ in a
digraph. 
Extending~$\FOk$ with counting quantifiers~$\exists^{\geq i}x$
yields~$\FOcntk$, which can be more economical in terms of
variables. As an illustration,
the $\FO\!^8$~sentence 
\begin{equation}
  \exists x\exists y_1 \cdots \exists y_7 
  \bigl(
  \textstyle\bigwedge_{i\neq j} y_i\neq y_j 
  \land
  \textstyle\bigwedge_i Exy_i 
  \bigr)
\end{equation}
stating the existence of a vertex of degree at least 7 in a graph
can be written more succinctly 
as
the \mbox{$\FOcnt^2$ sentence} 
\ifthenelse{\boolean{conferenceversion}}
{\begin{equation} \exists x \exists^{\geq 7}y Exy \eqperiod \end{equation}}
{\begin{equation} \exists x \exists^{\geq 7}y Exy \eqperiod \end{equation}}
Bounded variable fragments of first order logic have found numerous
applications in finite model theory and related areas (see
\cite{Grohe.1998} for a survey).  Their importance stems from the fact
that the model checking problem (given a finite relational structure
$\strucA$ and a sentence $\varphi$, does $\strucA$ satisfy $\varphi$?)
can be decided in polynomial time \cite{Immerman.1982,Vardi.1995}.
Moreover, the 
equivalence problem (given two finite relational
structures $\strucA$ and $\strucB$, do they satisfy the same
sentences?) for \FOk{} and \FOcntk{}
can be decided in 
\mbox{time $n^{O(\pebblesk)}$ \cite{Immerman.1990}}, \ie polynomial for constant~$\pebblesk$. 

\mysubsection{Quantifier Depth}
If $\strucA$ and $\strucB$ are not equivalent in $\FOk$ or $\FOcntk$,
then there exists a sentence $\varphi$ that defines a distinguishing
property, 
\ie such that
 $\strucA\models\varphi$ and
$\strucB\not\models\varphi$, which certifies that the structures are
non-isomorphic.  
But how complex can such a sentence be?  
In particular, 
what is the minimal quantifier depth of an  \FOk{} or 
\mbox{\FOcntk{} sentence} 
that distinguishes two $n$-element relational structures
  $\strucA$ and $\strucB$?
The best upper bound for the quantifier depth of $\FOk$ and $\FOcntk$
is $n^{k-1}$ \cite{Immerman.1990}, while to the best of our knowledge
the strongest lower bounds have been only linear in $n$
\cite{Cai.1992,Grohe.1996,Furer.2001}.  
In this paper we present a near-optimal lower bound
of~$\indexn^{\Omega(\pebblesk / \log \pebblesk)}$.

\begin{theorem}
  \label{thm:maintheorem}
  There 
  exist
  $\varepsilon>0$, $\pebblezero\in\mathbb N$ such that for all 
  $\pebblesk,\indexn$ 
  with
  $\pebblezero\leq \pebblesk \leq \indexn^{1/12}$ 
  there is a pair of $\indexn$-element 
  \ifthenelse{\boolean{conferenceversion}}
  {$(\pebblesk\!-\!1)$-ary}
  {$(\pebblesk-1)$-ary}
  relational structures 
  $\strucA_\indexn, \strucB_\indexn$
  that 
  can be distinguished in
  $\pebblesk$-variable first-order logic
  but satisfy the same
  $\FO^\pebblesk$ and $\FOcnt^\pebblesk$ sentences up to
  quantifier depth
  $\indexn^{\varepsilon \pebblesk / \log \pebblesk}$. 
\end{theorem}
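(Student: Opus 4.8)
The plan is to work throughout with the standard \EF-type characterisation of quantifier depth: $\strucA$ and $\strucB$ satisfy the same $\FO^\pebblesk$ sentences up to quantifier depth $\roundstd$ exactly when \Duplicator wins the $\roundstd$-round $\pebblesk$-pebble game on $(\strucA,\strucB)$, and the analogous equivalence holds for $\FOcnt^\pebblesk$ with the $\pebblesk$-pebble \emph{bijective} game in place of the ordinary one. Since the bijective game is only harder for \Duplicator, it suffices to build, for the given $\pebblesk$ and $\indexn$, a pair of $\indexn$-element $(\pebblesk-1)$-ary structures $\strucA_\indexn,\strucB_\indexn$ such that (i) \Spoiler wins the unbounded-round $\pebblesk$-pebble game, so that the structures are separated by some $\FO^\pebblesk$ sentence, while (ii) \Duplicator survives the $\pebblesk$-pebble bijective game for at least $\indexn^{\varepsilon\pebblesk/\log\pebblesk}$ rounds; \Spoiler then needs strictly more than that many rounds, i.e.\ that much quantifier depth, to win.

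\textbf{The base instances.}
First I would fix a bounded in-degree DAG with good pebbling behaviour --- concretely a pyramid $\pyramidgraph[\pyheight]$ of height $\pyheight$ --- and attach to it a pair of structures in Cai--F\"urer--Immerman style by \emph{XORifying} it: every vertex carries a parity gadget, and the two structures differ only in a single global parity defect. The key claim to establish here is that the bijective $\pebblesk$-pebble game on this pair is equivalent, up to constant factors, to a pebbling-type game on $\pyramidgraph[\pyheight]$: \Duplicator maintains a partial isomorphism away from the defect and can slide the defect freely as long as the currently pebbled vertices together with their closure under the gadget dependencies do not pin it down, and she is forced to give up only once \Spoiler has in effect pebbled a source-to-sink structure. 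This yields bounded-arity structures on $\Theta(\pyheight^2)$ elements that are already distinguished in $\FO^{O(1)}$, while \Duplicator survives $\Theta(\pyheight^2)$ bijective-game rounds as long as the pebble budget stays below the pyramid's pebbling threshold.

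\textbf{Condensing the domain.}
Next I would shrink the domain from $\Theta(\pyheight^2)$ down to $\indexn$ without losing (much of) the survival count, by importing Razborov's hardness condensation. The step composes the underlying graph with a bipartite expander $\expandergraph$ --- its existence with the required parameters being the content of the expander lemma --- replacing each vertex by a short tuple of condensed vertices indexed by expander neighbourhoods; this multiplies the arity and the needed pebble budget by a bounded factor but shrinks the vertex set super-polynomially. Expansion enters precisely through the closed-set lemma: a set of pebbles in the condensed game projects to a vertex set in the original graph whose closure, by expansion, is still too small to assist \Spoiler, so \Duplicator can lift her strategy round for round. Applying this recursively $\Theta(\log\pebblesk)$ times takes the bounded-arity base construction to a $(\pebblesk-1)$-ary, $\indexn$-element pair; and because the strong (unique-neighbour-type) expansion the lifting argument needs is available only when the original domain is at most $\indexn^{O(\pebblesk/\log\pebblesk)}$, the number of surviving rounds that remains is $\indexn^{\Omega(\pebblesk/\log\pebblesk)}$. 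The hypotheses $\pebblezero\leq\pebblesk$ and $\pebblesk\leq\indexn^{1/12}$ are exactly what keep every level consistent --- there is room for $\Theta(\log\pebblesk)$ levels, expanders of the right quality exist at each of them, the arities grow to precisely $\pebblesk-1$, and \Spoiler still wins the unbounded game.

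\textbf{Where the difficulty lies.}
The hard part will be the game analysis through the recursion: maintaining \Duplicator's invariant for the bijective game over the condensed structures, that is, proving that small pebble sets genuinely have small closures after composition with the expander, simultaneously across all $\Theta(\log\pebblesk)$ levels and the whole working range $[\pebblesmin,\pebblesmax]$ of pebble numbers and arities. This is where the boundary expansion of $\expandergraph$ is entirely consumed and where the $1/\log\pebblesk$ loss in the exponent is forced rather than an artefact of a lossy argument. A secondary obstacle is verifying that XORification faithfully reduces the bijective pebble game on the structures to the combinatorial game on the graph --- so that counting quantifiers give \Spoiler no extra leverage --- and that \Spoiler really does win the unbounded game in finite quantifier depth, which is what certifies that $\strucA_\indexn$ and $\strucB_\indexn$ are distinguishable in $\pebblesk$-variable first-order logic to begin with.
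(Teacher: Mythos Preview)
Your overall architecture --- encode an XOR formula as a pair of structures, analyse the pebble game on the formula, then apply Razborov's condensation --- matches the paper. But two concrete choices in your plan diverge from the paper in ways that make the argument break.

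\textbf{The base instance is wrong.} You take a $2$-dimensional pyramid $\pyramidgraph[\pyheight]$ and claim that Duplicator survives $\Theta(\pyheight^2)$ rounds ``as long as the pebble budget stays below the pyramid's pebbling threshold.'' This is off on two counts. First, the relevant game on the XOR formula is \emph{not} the black pebble game: Player~1 already wins the $3$-pebble game on $\digraphxorarg{\pyramidgraph[\pyheight]}$ regardless of the black-pebbling threshold, so ``below the threshold'' does not buy Duplicator infinite survival, and ``above the threshold'' is not what forces Spoiler to win. Second, and more seriously, the paper explicitly notes that Cook's open-path lower-bound argument for black pebbling on $2$-dimensional pyramids does \emph{not} transfer to the XOR game (fixing values on a row together with the two outer vertices one row below already determines the remaining row via the XOR constraints, so there is no genuinely ``free'' path). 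There is no established $\Theta(\pyheight^2)$-round, nor even $\Theta(\pyheight)$-round, lower bound against $k$ pebbles for the XOR formula over a $2$-dimensional pyramid. The paper's fix is to use Immerman's \emph{high-dimensional} pyramids $\pyramid^d_h$ with $d=\lceil\log k\rceil$: these have in-degree~$2$ (so Player~1 wins with $3$ pebbles) and locally look like depth-$d$ binary trees, which is exactly what makes the consistent-labelling argument go through and yield a $\approx m^{1/\log k}$ round lower bound against $2^d\ge k$ pebbles on $m$ vertices. This is where the $1/\log k$ loss actually originates.

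\textbf{Condensation is applied once, not recursively.} The paper does \emph{not} iterate condensation $\Theta(\log k)$ times with constant-degree expanders; it applies it a single time with degree $\expanderdegree\approx k/3$, compressing the $m$-variable high-dimensional-pyramid formula down to $n\approx m^{3/\expanderdegree}$ variables while essentially preserving the $m^{1/\log k}$ round lower bound, which then reads as $n^{\Omega(k/\log k)}$. Your attribution of the $1/\log k$ loss to ``$\Theta(\log k)$ levels'' of condensation is therefore the wrong mechanism; moreover, a recursive scheme would have to re-verify the expander-existence and closure bounds at every level with moving parameters, and the per-level $r\mapsto r/(2k)$ loss in the round bound would compound. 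The single-shot condensation with the high-dimensional-pyramid base is both what the paper does and what makes the bookkeeping close.
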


Note that any two non-isomorphic $n$-element $\sigma$-structures 
$\strucA$ and~$\strucB$
can always be distinguished by a simple $n$\nobreakdash-variable
first-order sentence of quantifier depth~$n$, namely
\ifthenelse{\boolean{conferenceversion}}
{%
\begin{multline}
\label{eq:distinguishing-formula}
  \exists x_1
  \cdots
  \exists x_n 
  \Biggl(%
    \bigwedge_{i\neq j}x_i\neq x_j
    \ \wedge 
    \!\!\!\!\
    \bigwedge_{\substack{R\in\sigma, \\ (v_{i_1},\ldots,v_{i_{r}})\in
        R^\strucA}}
    \!\!\!\!
    Rx_{i_1},\ldots,x_{i_{r}} \\
    \wedge
    \!\!\!\!
    \bigwedge_{\substack{R\in\sigma, \\ (v_{i_1},\ldots,v_{i_{r}})\notin
        R^\strucA}}
    \!\!\!\!
    \neg Rx_{i_1},\ldots,x_{i_{r}}
    \Biggr)%
  \eqperiod
\end{multline}
}
{%
\begin{equation}
\label{eq:distinguishing-formula}
  \exists x_1
  \cdots
  \exists x_n 
  \Biggl(%
    \bigwedge_{i\neq j}x_i\neq x_j
    \ \wedge \!\!\!\!
    \bigwedge_{\substack{R\in\sigma, \\ (v_{i_1},\ldots,v_{i_{r}})\in
        R^\strucA}}
    \!\!\!\!
    Rx_{i_1},\ldots,x_{i_{r}} 
    \ \wedge \!\!\!\!
    \bigwedge_{\substack{R\in\sigma, \\ (v_{i_1},\ldots,v_{i_{r}})\notin
        R^\strucA}}
    \!\!\!\!
    \neg Rx_{i_1},\ldots,x_{i_{r}}
    \Biggr)%
  \eqperiod
\end{equation}%
}
Since our $\indexn^{\Omega(\pebblesk / \log \pebblesk)}$ lower
bound for $\pebblesk$-variable logics grows significantly 
faster
than this
trivial upper bound~$n$ 
on the quantifier depth as the number of
variables increases,
\refth{thm:maintheorem}
also describes a trade-off in the super-critical regime
above worst-case investigated by 
Razborov~\cite{Razborov16NewKind}: 
If one reduces one complexity measure (the number of variables), then
the other complexity parameter (the quantifier depth) increases
sharply even beyond its worst-case upper bound.

The equivalence problem for $\FOcnt^{\wldim+1}$ is known to be closely
related to the
\emph{\wldimtext\nobreakdash-dimensional Weisfeiler--Leman
  algorithm} (\wldimtext\nobreakdash-WL) for testing non-isomorphism of graphs
and, more generally, relational structures.  It was shown by Cai,
Fürer, and Immerman \cite{Cai.1992} that two structures are
distinguished by \wldimtext-WL if and only if there exists a
$\FOcnt^{\wldim+1}$~sentence that
differentiates between them.
Moreover, the quantifier depth of such a sentence also relates to the
complexity of the WL~algorithm in that the number of iterations 
\wldimtext-WL needs to  
tell $\strucA$ and~$\strucB$  apart
coincides with the
minimal quantifier depth of a distinguishing
$\FOcnt^{\wldim+1}$~sentence.  
Therefore, \refth{thm:maintheorem} also implies a near-optimal
lower bound on the
number of refinement steps 
required in the Weisfeiler--Leman algorithm.
We discuss this next.

\mysubsection{The Weisfeiler--Leman Algorithm}
The Weisfeiler--Leman algorithm, independently introduced by Babai in
1979 and by Immerman and Lander in~\cite{Immerman.1990}
(cf.~\cite{Cai.1992} and~\cite{Babai16GraphIsomorphism} for historic notes), is a
hierarchy of methods for isomorphism testing  that iteratively refine
a partition 
(or colouring) of the vertex set, ending with a \emph{stable colouring} that
classifies \emph{similar vertices}. 
Since no isomorphism can
map non-similar vertices to each
other, this reduces the search space.
Moreover, if two structures end up with different stable colourings,
then we 
can immediately deduce
that the structures are non-isomorphic. 
The $1$\nobreakdash-dimensional Weisfeiler--Leman algorithm, 
better known as \emph{colour refinement}, 
initially colours the vertices according to their degree (clearly, no
isomorphism identifies vertices of different degree).  
The vertex colouring is then refined based on the colour classes of
the neighbours.
For example, two degree-$5$ vertices get different colours in the next
step if they have a different number of degree-$7$ neighbours. 
This refinement step is repeated until the colouring stays stable 
(\ie every pair of equally coloured vertices have the same number of
neighbours in every other colour class).  This algorithm is already
quite strong and is extensively used in practical graph isomorphism
algorithms.

In \wldimtext-dimensional WL this idea is generalized to colourings of
\emph{\wldimtext-tuples} of vertices.
Initially the \wldimtext\nobreakdash-tuples are coloured by their isomorphism
type, \ie two tuples $\tuple{v}=(v_1,\ldots,v_{\wldim})$ and
$\tuple{w}=(w_1,\ldots,w_{\wldim})$ get different colours if the
mapping $v_i\mapsto w_i$ is not an isomorphism 
on the substructures induced on 
$\{v_1,\ldots,v_{\wldim}\}$ and 
 $\{w_1,\ldots,w_{\wldim}\}$.
In the refinement step, we consider for each \wldimtext-tuple
$\tuple{v}=(v_1,\ldots,v_{\wldim})$ and every vertex $v$ the colours
of the tuples
$\tuple{v}_j\defi(v_1,\ldots,v_{j-1},v,v_{j+1},\ldots,v_{\wldim})$,
where 
$v$ is substituted at the $j$th position in the tuple~$\tuple{v}$.
We refer to  the tuple $(c(\tuple{v}_1),\ldots,c(\tuple{v}_\wldim))$
of these $\wldim$ colours  as the \emph{colour type}  
$\colourtype(\tuple{v},v)$ and let $v$ be a
$\colourtype$\nobreakdash-neighbour of 
$\tuple{v}$ if $\colourtype=\colourtype(\tuple{v},v)$. 
Now two tuples $\tuple{v}$ and~$\tuple{w}$ get different colours if
they are already coloured differently, or if there exists a colour
type~$\colourtype$ such that $\tuple{v}$ and $\tuple{w}$ have a
different number of $\colourtype$\nobreakdash-neighbours. 
The refinement step is repeated until the colouring stays stable. 
Since in every round the number of colour classes grows, the
process stops after at most $n^{\wldim}$ steps.  
The colour names can be chosen in such a way that the stable colouring
is canonical, which means that two isomorphic structures end up with
the same colouring, and such a 
canonical stable colouring
can be computed in time~$n^{O(\wldim)}$.

This simple combinatorial algorithm is surprisingly powerful.
Grohe \cite{Grohe12FixedPointDefinability} showed that for every
nontrivial graph class that excludes some minor (such as planar
graphs or graphs of bounded treewidth) there exists some $\pebblesk$ such
that $\pebblesk$-WL computes a different colouring for all non-isomorphic
graphs, and hence solves graph isomorphism in polynomial time on that
graph class. Weisfeiler--Leman has also been used as a subroutine in
algorithms that solve graph isomorphism on all graphs. 
As one part of his very recent graph isomorphism algorithm, 
Babai~\cite{Babai16GraphIsomorphism} applies $\pebblesk$-WL for
polylogarithmic~$\pebblesk$
to relational ($\pebblesk$-ary) structures and makes use of 
the  quasi-polynomial 
running time of this algorithm.

Given the importance of 
the Weisfeiler--Leman procedure,
it is a natural question to
ask whether the trivial $n^{\wldim}$~upper bound on the number of
refinement steps is tight.  
By the correspondence between the number of refinement steps of
\wldimtext-WL and the quantifier depth 
of~$\FOcnt^{\wldim+1}$~\cite{Cai.1992}, our main result implies a
near-optimal lower bound even up to 
polynomial, but still sublinear, values of~$\wldim$
(\ie $\wldim = \indexn^\delta$ for small enough constant~$\delta$).  

\begin{theorem}\label{thm:mainWLtheorem}
  There 
  exist
  $\varepsilon>0$, $\pebblezero\in\mathbb N$ such that for all 
  $\pebblesk,\indexn$ with
  $\pebblezero\leq \pebblesk \leq \indexn^{1/12}$ 
  there is an $\indexn$-element $\wldim$-ary relational structure
  $\strucA_\indexn$
  for which
  the $\wldim$-dimensional Weisfeiler--Leman algorithm
  needs $\indexn^{\varepsilon
    \wldim / \log \wldim}$ refinement steps to compute the
  stable colouring. 
  \end{theorem}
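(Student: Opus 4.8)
The plan is to obtain \refth{thm:mainWLtheorem} as a direct consequence of \refth{thm:maintheorem} together with the Cai--Fürer--Immerman correspondence \cite{Cai.1992} between the $\wldim$-dimensional Weisfeiler--Leman algorithm and the counting logic $\FOcnt^{\wldim+1}$. Concretely, I would apply \refth{thm:maintheorem} with $\pebblesk = \wldim + 1$, so that the $(\pebblesk-1)$-ary structures it produces are precisely $\wldim$-ary, giving for every $\wldim$ in the appropriate range a pair of $\indexn$-element $\wldim$-ary structures $\strucA_\indexn, \strucB_\indexn$ that are separated by some $\FO^{\wldim+1}$ sentence (hence in particular by $\FOcnt^{\wldim+1}$, and hence by $\wldim$-WL) but that agree on every $\FOcnt^{\wldim+1}$ sentence of quantifier depth at most $\indexn^{\varepsilon(\wldim+1)/\log(\wldim+1)}$. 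Since the number of refinement iterations that $\wldim$-WL needs in order to tell $\strucA_\indexn$ and $\strucB_\indexn$ apart coincides with the minimal quantifier depth of a distinguishing $\FOcnt^{\wldim+1}$ sentence \cite{Cai.1992}, and such a sentence exists but none of small depth does, it follows that $\wldim$-WL requires strictly more than $\indexn^{\varepsilon(\wldim+1)/\log(\wldim+1)}$ rounds to distinguish these two structures.

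It remains to convert this lower bound for distinguishing a \emph{pair} of structures into a lower bound on the number of refinement steps needed to reach the stable colouring of a \emph{single} structure. For this I would run $\wldim$-WL on the disjoint union $\mathcal{C}_\indexn \defi \strucA_\indexn \disjointunion \strucB_\indexn$, which is a $2\indexn$-element $\wldim$-ary structure. The standard observation is that once the colouring of $\mathcal{C}_\indexn$ has stabilized, say after $t$ rounds, the multiset of colours that $\wldim$-WL assigns to the $\wldim$-tuples living inside the $\strucA_\indexn$ part, and the analogous multiset for the $\strucB_\indexn$ part, no longer change; so if these two multisets ever differ then they already differ at round~$t$, and otherwise they agree forever. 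As $\strucA_\indexn$ and $\strucB_\indexn$ are not $\FOcnt^{\wldim+1}$-equivalent, the two multisets must eventually differ, and therefore $\wldim$-WL already distinguishes $\strucA_\indexn$ from $\strucB_\indexn$ within $t$ rounds. Combined with the previous paragraph this yields $t > \indexn^{\varepsilon(\wldim+1)/\log(\wldim+1)}$. Writing $\indexn' = 2\indexn$ for the size of $\mathcal{C}_\indexn$ and shrinking $\varepsilon$ slightly to absorb both the factor $2$ and the passage from $\wldim+1$ to $\wldim$ in the exponent (legitimate once $\wldim \geq \pebblezero$ for a suitable constant), we obtain a lower bound of the form $(\indexn')^{\varepsilon'\wldim/\log\wldim}$, and renaming $\indexn' \mapsto \indexn$ gives \refth{thm:mainWLtheorem}.

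I do not anticipate a genuine obstacle here, since all the real work is carried by \refth{thm:maintheorem}: the two ingredients used above, namely the logical characterization of $\wldim$-WL and the reduction via disjoint union from ``hard to distinguish'' to ``slow to stabilize'', are both standard. The only points requiring a little care are (i) verifying that setting $\pebblesk = \wldim + 1$ and then doubling the universe only affects arity, size, and the hidden constants in the prescribed way, so that after adjusting $\pebblezero$ and $\varepsilon$ the parameter window $\pebblezero \leq \wldim \leq \indexn^{1/12}$ still holds, and (ii) writing down carefully the folklore fact that on a disjoint union the two parts cannot be separated by the colour classes before the colouring globally stabilizes. Both are routine.
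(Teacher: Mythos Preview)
Your proposal is correct, but the paper takes a shorter route that avoids the disjoint-union detour entirely. The paper works directly from \reflem{lem:MainTheoremXor} (applied with $\pebbleslower=\pebblesupper=\wldim+1$) and \reflem{lem:EquivalentCharacterisations}, obtaining the pair $\strucA(\xformf_\indexn),\strucB(\xformf_\indexn)$ on which $\wldim$-WL eventually distinguishes but needs at least $\indexn^{\varepsilon(\wldim+1)/\log(\wldim+1)}\geq \indexn^{\varepsilon\wldim/\log\wldim}$ rounds to do so. It then simply observes that \emph{either structure by itself} already witnesses the stabilization lower bound: if the WL refinement on, say, $\strucA(\xformf_\indexn)$ terminated strictly earlier, then---since the colour names are canonical and the two structures are eventually separated---they would be separated already by that earlier round, contradicting the distinguishing lower bound. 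So one just takes $\strucA_\indexn:=\strucA(\xformf_\indexn)$ and is done, with no doubling of the domain and no need to analyse mixed tuples in a disjoint union.

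Compared with this, your argument is a perfectly legitimate alternative, and the ``folklore'' pieces you flag (relating separation in the disjoint union to separation of the two structures, and absorbing the factor~$2$ and the shift $\wldim\!\to\!\wldim+1$ into the constants) do go through. The cost is just a somewhat heavier reduction and a halving of the domain-size parameter that then has to be repaired. The paper's observation that the distinguishing lower bound \emph{already} forces slow stabilization on each of the two input structures is the simpler move you did not exploit.
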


In addition to the near-optimal lower bounds for a specific dimension
(or number of variables)~$\wldim$, we also obtain the following
trade-off between the dimension and the number of refinement steps:   
If we fix two parameters $\ell_1$ and $\ell_2$ (possibly depending
on~$\indexn$) satisfying 
$\ell_1\leq\ell_2 \leq \indexn^{1/6}/\ell_1$,
then there are $\indexn$\nobreakdash-element structures such that $\wldim$-WL
needs $\indexn^{\bigomega{\ell_1/\log \ell_2}}$ refinement steps for
all $\ell_1\leq\wldim\leq\ell_2$. 
A particularly interesting choice of parameters is $\ell_1 = \log^c\indexn$
for some constant $c>1$
and $\ell_2 = \indexn^{1/7}$. 
This implies the following quasi-polynomial lower bound on the number
of refinement steps for Weisfeiler--Leman from polylogarithmic
dimension all the way up to dimension~$\indexn^{1/7}$.

\begin{theorem}\label{thm:mainTheoremWLquasipolyTradeoff}
  For every $c>1$ 
  there is a sequence of $\indexn$-element relational structures
  $\strucA_\indexn$
  for which
  the $\wldim$-dimensional Weisfeiler--Leman algorithm needs
  $n^{\bigomega{\log^{c-1} n}}$ refinement steps to compute the stable
  colouring for all $\wldim$ with $\log^c n \leq \wldim \leq n^{1/7}$.  
\end{theorem}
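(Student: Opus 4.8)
The plan is to obtain \refth{thm:mainTheoremWLquasipolyTradeoff} as a direct specialization of the general dimension-versus-refinement-steps trade-off stated above, which itself is extracted from the construction underlying \refth{thm:maintheorem} via the Cai--F\"urer--Immerman correspondence~\cite{Cai.1992} between the number of $\wldim$-WL refinement steps on a structure and the minimal quantifier depth of a distinguishing $\FOcnt^{\wldim+1}$~sentence. Concretely, I would instantiate the two window parameters as $\ell_1 \defi \log^c \indexn$ and $\ell_2 \defi \indexn^{1/7}$ and feed these into the trade-off.

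First I would check that the hypothesis $\ell_1 \leq \ell_2 \leq \indexn^{1/6}/\ell_1$ holds for all large enough~$\indexn$. The bound $\log^c\indexn \leq \indexn^{1/7}$ is immediate, and $\indexn^{1/7} \leq \indexn^{1/6}/\log^c\indexn$ rearranges to $\log^c\indexn \leq \indexn^{1/6 - 1/7} = \indexn^{1/42}$, which again holds once $\indexn$ is sufficiently large; the finitely many small~$\indexn$ can be absorbed into the implied constant since the statement is asymptotic. Hence the trade-off supplies a single sequence of $\indexn$-element relational structures $\strucA_\indexn$ on which $\wldim$-WL requires at least $\indexn^{\bigomega{\ell_1/\log\ell_2}}$ refinement steps for \emph{every} $\wldim$ with $\ell_1 \leq \wldim \leq \ell_2$, i.e., for every $\wldim$ with $\log^c\indexn \leq \wldim \leq \indexn^{1/7}$.

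It then only remains to evaluate the exponent. Since $\log(\indexn^{1/7}) = (1/7)\log\indexn$, we have $\ell_1/\log\ell_2 = \log^c\indexn / \bigl((1/7)\log\indexn\bigr) = 7\log^{c-1}\indexn = \bigtheta{\log^{c-1}\indexn}$, so $\indexn^{\bigomega{\ell_1/\log\ell_2}} = \indexn^{\bigomega{\log^{c-1}\indexn}}$, which is exactly the claimed (super-polynomial, since $c>1$) lower bound. The arity of $\strucA_\indexn$ grows with~$\indexn$, which is unavoidable once the dimension window stretches polynomially far and is precisely the regime relevant to the use of WL on relational structures mentioned earlier.

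The hard part is not in the specialization above, which is pure bookkeeping, but in the general trade-off it relies on. There one must show that a \emph{single} hardness-condensed instance --- built from one sufficiently good bipartite expander by XOR-ification and then compressed using Razborov's hardness condensation --- forces many refinement steps \emph{uniformly} across the whole interval $[\ell_1,\ell_2]$ of dimensions, rather than only at a single target dimension. I expect the main obstacle to be controlling how the governing pebbling/XOR-width parameter of the condensed structure degrades as $\wldim$ increases, and choosing the condensation parameters so that the resulting bound $\indexn^{\bigomega{\ell_1/\log\wldim}}$ survives even at the top of the window $\wldim = \ell_2$, where $\log\wldim$ is largest and the bound weakest. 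Granting that uniform trade-off, \refth{thm:mainTheoremWLquasipolyTradeoff} follows at once.
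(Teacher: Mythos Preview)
Your proposal is correct and follows essentially the same route as the paper: both instantiate the trade-off (which is \reflem{lem:MainTheoremXor} composed with \reflem{lem:EquivalentCharacterisations}) with $\ell_1 \approx \log^c n$ and $\ell_2 \approx n^{1/7}$, verify the side condition $\ell_2 \leq n^{1/6}/\ell_1$, and read off the exponent $\ell_1/\log\ell_2 = \Theta(\log^{c-1} n)$. The paper's proof is terser---it just names the parameters $\pebbleslower = \lfloor\log^c n\rfloor + 1$ and $\pebblesupper = \lceil n^{1/7}\rceil + 1$ to plug into \reflem{lem:MainTheoremXor}---but the content is the same, and your identification of the ``hard part'' as the underlying trade-off lemma (proved via the pyramid construction and hardness condensation) is exactly right.
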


\mysubsection{Previous Lower Bounds}

In their seminal work~\cite{Cai.1992}, Cai, Fürer and Immerman 
established the existence of non-isomorphic \mbox{$n$-vertex} graphs 
that cannot be distinguished by any first-order counting sentence with
$\littleoh{n}$~variables.
Since every pair of non-isomorphic $n$-element structures can be
distinguished by a $\FOcnt^n$ (or even~$\FO\!^n$) sentence
(as shown in \refeq{eq:distinguishing-formula} above), this result
also implies a linear lower bound on the quantifier depth of $\FOcntk$
\mbox{if $k=\Omega(n)$.}  
For all constant $k\geq 2$, a linear $\Omega(n)$ lower bound on the
quantifier depth of $\FOcntk$ follows implicitly from an
intricate
construction of Grohe \cite{Grohe.1996}, which was used to show that
the equivalence 
problems for $\FOk$ and $\FOcntk$ are complete for
polynomial time. 
An explicit linear lower bound based on a simplified construction was
subsequently 
presented
by Fürer~\cite{Furer.2001}.  

For the special case of $\FOcnt^2$, Krebs and
Verbitsky~\cite{Krebs.2015} recently obtained an improved $(1-o(1))n$
lower bound on the quantifier depth, nearly matching the upper bound~$n$.  
In contrast,
Kiefer and Schweitzer~\cite{Kiefer.2016}
showed that if two \mbox{$n$-vertex}
graphs
can be distinguished by 
a
$\FOcnt^3$~sentence, 
then there is always a distinguishing sentence of
quantifier depth $O(n^2/\log n)$.  Hence, the trivial \mbox{$n^2$~upper}
bound is not tight in this case.
{As far as we are aware, the current paper presents the first lower
  bounds that are super-linear in the domain size~$n$.}

\mysubsection{Discussion of Techniques}
The hard instances we construct are based on
propositional \XOR{} (exclusive or) formulas,
which can  alternatively be viewed as systems of linear equations
over~$\gf{2}$.  
There is a long history of using \XOR formulas
for proving lower bounds in 
different
areas of theoretical
computer science such as, e.g., finite model theory, 
proof complexity, and combinatorial optimization/hardness of
approximation.
Our main technical insight is to combine two methods that, to the
best of our knowledge, have not been used together before, namely
\EF games on structures based on \XOR formulas  
and hardness amplification by variable substitution. 

More than three decades ago, Immerman~\cite{Immerman.1981} presented a way to
encode an \XOR formula into two graphs that are isomorphic if and only if 
the formula is satisfiable. 
This can then be used to show that the
two graphs cannot be distinguished by a sentence with few variables or
low quantifier depth using \EF games.
Arguably the most important application of this method is the result
in~\cite{Cai.1992} establishing 
that a linear number of variables is
needed to distinguish two  
graphs in first-order counting logic. 
Graph constructions based on \XOR formulas have also been used to
prove lower bounds on the quantifier depth of  
$\FOcntk$ \cite{Immerman.1981,Furer.2001}.
We remark that for our result
we have to use a slightly different encoding of 
\ifthenelse{\boolean{conferenceversion}}
{formulas}  
{\XOR formulas}  
into relational structures rather than graphs.

In proof complexity, various flavours of \XOR formulas 
(usually called \introduceterm{Tseitin formulas} when used to encode the 
\introduceterm{handshaking lemma} 
saying that the sum of all vertex degrees in an
undirected graph has to be an even number) 
have been 
employed
to obtain lower bounds for
proof systems such as  
resolution~\cite{Urquhart87HardExamples},
polynomial calculus~\cite{BGIP01LinearGaps},
and
bounded-depth Frege~\cite{Ben-Sasson02HardExamples}.
Such formulas have also played an important role in many lower bounds for
the Positivstellensatz/sums-of-squares proof system
\cite{Grigoriev01LinearLowerBound,KI06LowerBounds,Schoenebeck08LinearLevel}  
corresponding to the Lasserre semidefinite programming 
hierarchy, which has been the focus of much recent interest in the
context of combinatorial optimization.%
\footnote{%
  No proof complexity is needed in this paper, 
  and so readers unfamiliar with these proof systems need not
  worry---this is just 
  an informal overview.} 
Another use of \XOR in proof complexity has been for hardness amplification,
where one takes a (typically non-\XOR) formula that is moderately hard with
respect to some complexity measure, substitutes all variables by
exclusive ors
over pairwise distinct sets of variables, 
and then shows that the new
\introduceterm{\xorified{}} formula must be very hard \wrt some
other (more important) complexity measure.
This technique was perhaps first made explicit
in~\cite{Ben-Sasson02SizeSpaceTradeoffsJOURNALREF}  
(attributed there to personal communication with 
Michael~Alekhnovich and Alexander~Razborov, with a note that it is also 
very similar in spirit to an approach used in \cite{BW01ShortProofs}) 
and has later appeared in, e.g., 
\cite{BP07Complexity,BN08ShortProofs,BN11UnderstandingSpace,BNT13SomeTradeoffs,FLMNV13TowardsUnderstandingPC}.

An even more crucial role in proof complexity is played by well-connected
so-called \introduceterm{expander graphs}.
For instance, given a 
formula in conjunctive normal form (CNF)
one can look at its bipartite clause-variable incidence graph (CVIG),
or some variant of the CVIG derived from the combinatorial structure
of the formula, and prove that if this graph is an expander, then this
implies that the formula must be hard for proof systems such as
resolution~\cite{BW01ShortProofs} and polynomial
calculus~\cite{AR03LowerBounds,MN15GeneralizedMethodDegree}.

In a striking recent paper~\cite{Razborov16NewKind}, Razborov combines 
\xorification
and expansion in a simple (with hindsight)
but amazingly powerful way. Namely, instead of replacing every
variable by an XOR over new, fresh variables, 
he recycles variables from a much smaller pool, thus decreasing the
total number of variables.
This means that the hardness amplification proofs no longer work,
since they crucially use that all new substitution variables
are distinct.
But here expansion come into play. If the pattern of variable
substitutions is described by a strong enough bipartite expander, it
turns out that locally there is enough ``freshness'' even among the
recycled variables to make the hardness amplification go through over
a fairly wide range of the parameter space. 
And since the formula has not only become harder but has also had the
number of variables decreased, 
this can be viewed as a kind of 
\introduceterm{hardness compression}
or
\introduceterm{hardness condensation}.

What we do in this paper is to first revisit
Immerman's old quantifier depth lower bound for first-order counting
logic~\cite{Immerman.1981} and observe that the construction can be
used to obtain an improved scalable lower bound for the
\mbox{$\pebblesk$-variable} fragment.  We then translate Razborov's hardness
condensation technique~\cite{Razborov16NewKind} into the language of
finite variable logics and use it---perhaps somewhat amusingly applied
to \xorification of \XOR formulas, which is usually not the case in
proof complexity---to reduce the domain size of relational structures
while maintaining the minimal quantifier depth required to distinguish
them.

\mysubsection{Outline of This Paper}
The rest of this paper is organized as follows.
In \refsec{sec:prelims} we describe how to translate \XOR formulas to
relational structures and play combinatorial games on these
structures.
This then allows us to state our main technical lemmas
in \refsec{sec:lower-bound-proof} and show how these lemmas yield our
results.
Turning to the proofs of these technical lemmas,  
in \refsec{sec:pyramids}
we present a version of Immerman's quantifier depth lower bound for
\XOR formulas, and in
\refsec{sec:hardness-condensation}
we apply Razborov's hardness condensation technique to these formulas.
Finally, in
\refsec{sec:conclusion}
we make some concluding remarks and discuss possible directions for
future research.
\ifthenelse{\boolean{conferenceversion}}
{Due to space constraints, we omit some of the more standard technical
  proofs in this conference version, referring the reader to the
  upcoming full-length version for the missing details.}
{Some proofs of technical results needed in the paper are deferred to
  \refapp{app:existence-expander}.}
\makeatletter{}%

\section{From \XOR Formulas to Relational Structures}
\label{sec:prelims}

In this paper all structures are finite and defined over a relational
signature $\sigma$.   
We use the letters $X$,~$E$, and $R$ for unary, binary, and \mbox{$r$-ary}
relation symbols, respectively, and let 
$X^\strucA$, $E^\strucA$, and~$R^\strucA$ be their interpretation in a
structure $\strucA$.  
We write~$V(\strucA)$ to denote 
the domain of the structure~$\strucA$. 
The \emph{$\pebblesk$\nobreakdash-variable 
fragment of first-order logic}~$\FOk$ consists
of all first-order formulas that use at most $\pebblesk$~different variables
(possibly re-quantifying them as in Equation~\eqref{eq:FOkExample}).  
We also consider \emph{$\pebblesk$-variable first-order counting logic}
$\FOcntk$, which is the extension of $\FOk$ by counting quantifiers
$\exists^{\geq i}x \varphi(x)$,  stating that there exist at least $i$
elements $\strucAelement\in V(\strucA)$ such that
$(\strucA,\strucAelement)\models \varphi(x)$. 
For a survey of finite variable logics and their applications we refer
the reader to, e.g.,~\cite{Grohe.1998}. 

An \introduceterm{\xorwidth-\XOR{} clause} is a tuple
$(x_1,\ldots,x_\xorwidth,\Boolvala)$ consisting of $\xorwidth$~Boolean
variables and a Boolean value $\Boolvala\in\{0,1\}$. 
We refer to $\xorwidth$ as the \introduceterm{width} of the clause. 
An assignment~$\pos$ \emph{satisfies}
$(x_1,\ldots,x_\xorwidth,\Boolvala)$ 
if $\pos(x_1) + \cdots + \pos(x_\xorwidth) \equiv \Boolvala \pmod{2}$.  
An $\xorwidth$-\XOR{} formula~$\xformf$ is a conjunction of \XOR~clauses of
width at most~$\xorwidth$ and is satisfied by
an assignment~$\pos$  if $\pos$ satisfies all
clauses in~$\xformf$. 

For every $\xorwidth$-\XOR{} formula $\xformf$ on $n$~variables we can
define a pair of $2n$-element structures \mbox{$\strucA=\strucA(\xformf)$}
and $\strucB=\strucB(\xformf)$ that are isomorphic \ifaoif $\xformf$ is
satisfiable. 
The domain of the structures contains two elements $x_i^0$ and $x^1_i$
for each Boolean variable $x_i$.  
There is one unary
 predicate~$X_i$ for every variable~$x_i$ identifying 
the corresponding two elements  $x_i^0$ and~$x^1_i$.
 Hence these unary relations partition the domain of the structures into
 two-element sets, \ie $X_i^\strucA = X_i^\strucB =
\{x_i^0,x_i^1\}$. 
To encode the \XOR clauses, we introduce one $\xorwidthalt$-ary
relation $R_\xorwidthalt$ for every $1\leq \xorwidthalt \leq
\xorwidth$ and set  
\begin{subequations}
\begin{align}
  \label{eq:structure-A}
  \ifthenelse{\boolean{conferenceversion}}
  {\!R_\xorwidthalt^\strucA \!&=\!}                           
  {R_\xorwidthalt^\strucA &=}                           
  \Setdescr{\bigl(x^{\Boolvala_1}_{i_1},\ldots,x^{\Boolvala_\xorwidthalt}_{i_\xorwidthalt}\bigr)} 
  {(x_{i_1},\ldots,x_{i_\xorwidthalt},\Boolvala)
  \ifthenelse{\boolean{conferenceversion}}
  {\!\in\! \xformf, \textstyle\sum_i\! \Boolvala_i \!\equiv\! 0 }
  {\in \xformf,\;\textstyle\sum_i \Boolvala_i \equiv 0 \pmod{2}}} 
  \\ 
\shortintertext{and}
  \label{eq:structure-B}
  \ifthenelse{\boolean{conferenceversion}}
  {\!R_\xorwidthalt^\strucB \!&=\!}
  {R_\xorwidthalt^\strucB &=}
  \Setdescr{\bigl(x^{\Boolvala_1}_{i_1},\ldots,x^{\Boolvala_\xorwidthalt}_{i_\xorwidthalt}\bigr)}
  {(x_{i_1},\ldots,x_{i_\xorwidthalt},\Boolvala)
  \ifthenelse{\boolean{conferenceversion}}
  { \!\in\! \xformf, \textstyle\sum_i\! \Boolvala_i \!\equiv\! \Boolvala }  
  { \in \xformf,\;\textstyle\sum_i \Boolvala_i \equiv \Boolvala \pmod{2}}} 
  \ifthenelse{\boolean{conferenceversion}}
  {}                         
  {\eqperiod}
\end{align}
\end{subequations}
\ifthenelse{\boolean{conferenceversion}}
{(where the sums are taken $\bmod{}\ 2$). Every}
{Every}
bijection $\beta$ between the domains of $\strucA(\xformf)$ and
$\strucB(\xformf)$ that preserves the unary relations~$X_i$ 
can be translated
to an assignment $\pos$ for the \XOR{} formula via the correspondence
\ifthenelse{\boolean{conferenceversion}}
{$\pos(x_i) = 0 \Leftrightarrow \beta(x_i^0) = x_i^0 \Leftrightarrow
  \beta(x_i^1) = x_i^1$ and  $\pos(x_i) = 1 \Leftrightarrow \beta(x_i^0)
  = x_i^1 \Leftrightarrow \beta(x_i^1) = x_i^0$.} 
{\begin{subequations}
    \begin{gather}
      \label{eq:correspondence-1}
      \pos(x_i) = 0 \Leftrightarrow \beta(x_i^0) = x_i^0 \Leftrightarrow
      \beta(x_i^1) = x_i^1
      \\
      \shortintertext{and}
      \label{eq:correspondence-2}
      \pos(x_i) = 1 \Leftrightarrow \beta(x_i^0)
      = x_i^1 \Leftrightarrow \beta(x_i^1) = x_i^0
      \eqperiod
    \end{gather}
  \end{subequations}
} 
\ifthenelse{\boolean{conferenceversion}}
{It is} 
{Moreover, it is} 
not hard to show that such a bijection defines an
isomorphism between 
$\strucA(\xformf)$ and $\strucB(\xformf)$ if and only if the 
corresponding assignment satisfies $\xformf$.
\ifthenelse{\boolean{conferenceversion}}
{}
{See \reffig{fig:xor_encoding_example} for a small example
  illustrating the construction.

\makeatletter{}%

\begin{figure}[t] 
 \centering
  \begin{tikzpicture}[vertex/.style={circle,draw=black,fill=black,inner  sep=0pt,minimum  size=4pt},bluevertex/.style={rectangle,draw=black,fill=blue,inner  sep=0pt,minimum  size=4pt},redvertex/.style={circle,draw=black,fill=red,inner  sep=0pt,minimum  size=4pt},
   diredge/.style={->,shorten <=.5pt, shorten >=.5pt}] %
  
  \node[bluevertex,label=above:{\small $x_7^0$}] (x0a) at (0,2) {};
  \node[bluevertex,label=above:{\small $x_7^1$}] (x1a) at (.5,2) {};
  \node[redvertex,label=below:{\small $x_8^0$}] (y0a) at (0,1.25) {};
  \node[redvertex,label=below:{\small $x_8^1$}] (y1a) at (.5,1.25) {};
 
 \node at (.25,0.25) {$\strucA$};

  \node[bluevertex,label=above:{\small $x_7^0$}] (x0b) at (0+2,2) {};
  \node[bluevertex,label=above:{\small $x_7^1$}] (x1b) at (.5+2,2) {};
  \node[redvertex,label=below:{\small $x_8^0$}] (y0b) at (0+2,1.25) {};
  \node[redvertex,label=below:{\small $x_8^1$}] (y1b) at (.5+2,1.25) {};

 \node at (2.25,0.25) {$\strucB$};

  \draw[diredge] (x0a) -- (y0a);
  \draw[diredge] (x1a) -- (y1a);
  \draw[diredge] (x0b) -- (y1b);
  \draw[diredge] (x1b) -- (y0b);

\end{tikzpicture}
  \caption{Structure encoding of $\xformf=\{(x_7,x_8,1)\}$.}
  \label{fig:xor_encoding_example}
\end{figure}
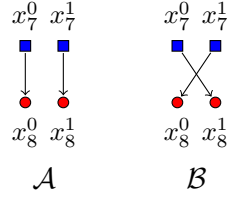

}

This kind of
encodings of \XOR formulas into relational structures have been
very useful for proving lower bounds for finite variable logics in the
past.  
Our transformation of
\XOR clauses of width~$\xorwidth$ into
$\xorwidth$\nobreakdash-ary relational structures resembles the way 
Gurevich and Shelah~\cite{Gurevich.1996} encode \XOR formulas as hypergraphs. 
It is also closely related to the way Cai, Fürer, and
Immerman~\cite{Cai.1992} obtain two non-isomorphic graphs $\graph$
and~$\graphalt$  
from
an unsatisfiable \mbox{$3$-\XOR} formula $\xformf$
in the sense that $\graph$ and $\graphalt$ can be seen to be the
incidence  graphs of our structures  $\strucA(\xformf)$
and~$\strucB(\xformf)$.

In order to prove our main result, we make use of the combinatorial
characterization of quantifier depth of finite-variable logics  in
terms of pebble games for $\FOk$ and $\FOcntk$, 
which are played on two given relational structures. 
Since in our case the structures are based on \XOR{} formulas,
\ifthenelse{\boolean{conferenceversion}}
{for convenience we consider a simplified combinatorial game that is
  played directly on the formulas} 
{for convenience we will consider a simplified combinatorial game that is
  played directly on the \XOR{} formulas} 
rather than on their structure encodings. We first describe this game and then 
show
in \reflem{lem:EquivalentCharacterisations} that this
yields an equivalent characterization. 

The 
\introduceterm{$\roundstd$-round $\pebblesk$-pebble game} is played on an \XOR
formula $\xformf$ by two players,
whom we will refer to as
\firstplayer and \secondplayer.  
A~position in the game is a partial assignment~$\pos$ of at most
$\pebblesk$~variables of~$\xformf$ and the game starts with the empty
assignment.  In each round, \firstplayer can delete some variable
assignments from the current position (he chooses some
$\pos'\subseteq\pos$).  If the current position assigns values to
exactly $\pebblesk$~variables, then \firstplayer has to delete at
least one variable assignment.  Afterwards, \firstplayer chooses 
some currently unassigned variable~$x$ and 
asks for its value.
\secondplayer answers by either $0$ or~$1$ 
(independently of any previous answers to the same question)
and adds this 
\ifthenelse{\boolean{conferenceversion}}
{assignment} 
{variable assignment} 
to the current position.

A winning position for \firstplayer is an assignment falsifying some
clause from $\xformf$. 
\firstplayer wins the $\roundstd$-round $\pebblesk$-pebble game if
he has a strategy to win every play of the $\pebblesk$-pebble game
within at most $\roundstd$~rounds. 
Otherwise, we say that \secondplayer wins (or survives) the
$\roundstd$-round $\pebblesk$-pebble game. 
\firstplayer \emph{wins the $\pebblesk$-pebble game} if he wins the
$\roundstd$-round $\pebblesk$-pebble game within a finite number of
rounds~$\roundstd$. 
Note that if \firstplayer wins the $\pebblesk$-pebble game, then he
can always win the $\pebblesk$-pebble 
within $2^\pebblesk n^{\pebblesk+1}$~rounds, 
because there are 
at most 
$
\sum_{i=0}^{\pebblesk} 2^i \binom{n}{i} \leq
2^\pebblesk n^{\pebblesk+1}
$
different positions with at most $\pebblesk$~pebbles on
$n$\nobreakdash-variable \XOR formulas.    
We say that \firstplayer \emph{can reach a position} $\posalt$ from a
position $\pos$ within $\roundstd$ rounds if he has a strategy such
that in every play of the $\roundstd$\nobreakdash-round
$\pebblesk$\nobreakdash-pebble game starting from position $\pos$ he
either wins or ends up with position~$\posalt$. 

\ifthenelse{\boolean{conferenceversion}}
{}
{%
  As a side remark, we
  note that if we expand the \XOR formula 
  to 
  \CNF,
  then our pebble game is the same as the 
  so-called
  \emph{Boolean existential pebble game} played on 
  this
  \CNF encoding and
  therefore also characterizes the resolution width 
  required for
  the corresponding
  \CNF formula as shown in~\cite{AD08CombinatoricalCharacterization}.  
  Intuitively, it is this correspondence that enables us to apply the
  proof complexity techniques from~\cite{Razborov16NewKind} in our
  setting.  We will not need to use any concepts from proof complexity
  in this paper, however, but will present a self-contained proof, and
  so we do not elaborate further on this connection.
} %
  
Let us now show that the game described above is equivalent to the
pebble game for $\FO^{\pebblesk}$ and to the bijective pebble game for
$\FOcnt^{\pebblesk}$ played on the structures $\strucA(\xformf)$ and
$\strucB(\xformf)$. 

\begin{lemma}\label{lem:EquivalentCharacterisations}
  Let $k, \initialxorwidth, r$ be integers such that
  $r>0$ and $k\geq \initialxorwidth$ 
  and  let $\xformf$ be 
  a  $\initialxorwidth$-\XOR formula 
  giving rise to structures
  $\strucA=\strucA(\xformf)$
  and
  $\strucB=\strucB(\xformf)$
  as described in the paragraph preceding  
  \refeq{eq:structure-A}--\refeq{eq:structure-B}.
  Then the following statements are equivalent: 
  \begin{enumerate}[label=(\alph*)]
  \item
    \label{item:equiv-char-1}
    Player 1 wins the $\roundstd$-round $\pebblesk$-pebble game on
    $\xformf$.  
  \item
    \label{item:equiv-char-2}
    There is a $\pebblesk$-variable first-order sentence $\varphi\in
    \FO^{\pebblesk}$ of quantifier depth $\roundstd$ such that
    $\strucA(\xformf)\models \varphi$ and $\strucB(\xformf)\not\models
    \varphi$. 
  \item
    \label{item:equiv-char-3}
    There is a $\pebblesk$-variable sentence in first-order counting
    logic $\varphi\in \FOcnt^{\pebblesk}$ of quantifier depth $\roundstd$ such
    that $\strucA(\xformf)\models \varphi$ and
    $\strucB(\xformf)\not\models \varphi$. 
  \item
    \label{item:equiv-char-4}
    The $(\pebblesk-1)$-dimensional Weisfeiler--Leman procedure 
    can
    distinguish between $\strucA(\xformf)$ 
    and~$\strucB(\xformf)$ within $\roundstd$ refinement steps. 
  \end{enumerate}
\end{lemma}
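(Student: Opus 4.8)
The plan is to prove the equivalence \ref{item:equiv-char-1}$\,\Leftrightarrow\,$\ref{item:equiv-char-2}, the trivial implication \ref{item:equiv-char-2}$\,\Rightarrow\,$\ref{item:equiv-char-3}, the main implication \ref{item:equiv-char-3}$\,\Rightarrow\,$\ref{item:equiv-char-1}, and the known correspondence \ref{item:equiv-char-3}$\,\Leftrightarrow\,$\ref{item:equiv-char-4}. Underlying everything is a dictionary between positions of the $\pebblesk$-pebble game on $\xformf$ and positions of the ordinary $\pebblesk$-pebble game played on the structures $\strucA=\strucA(\xformf)$ and $\strucB=\strucB(\xformf)$: a partial assignment $\pos$ with domain $\{x_i : i\in I\}$ and $|I|\leq\pebblesk$ corresponds to the position in which Player~1 has pebbled the elements $\{x_i^{0} : i\in I\}$ in $\strucA$ and Player~2 has answered with $\{x_i^{\pos(x_i)} : i\in I\}$ in $\strucB$; more generally, a pebbled pair $(x_i^{b},x_i^{b'})$ records the value $\pos(x_i)=b\oplus b'$, exactly as in \refeq{eq:correspondence-1}--\refeq{eq:correspondence-2}. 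Since the unary predicate $X_i$ singles out the pair $\{x_i^{0},x_i^{1}\}$ in both structures, Player~2 is forced to answer inside the two-element block containing the pebbled element, so every position that Player~2 has not already lost is of this shape. A direct inspection of the definitions \refeq{eq:structure-A}--\refeq{eq:structure-B} then shows that the pebble map is a partial isomorphism of the induced substructures \emph{exactly when} $\pos$ falsifies no clause of $\xformf$ all of whose variables lie in $I$; this is where the hypothesis $\pebblesk\geq\initialxorwidth$ is used, since it guarantees that each clause is narrow enough for all of its variables to carry pebbles at once. One also observes that deleting more variable assignments than forced is never advantageous for Player~1 (it only relaxes the constraints on Player~2), and hence the two games have matching round structure and the same winner after $\roundstd$ rounds.

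Granting this dictionary, \ref{item:equiv-char-1}$\,\Leftrightarrow\,$\ref{item:equiv-char-2} is precisely the classical pebble-game characterisation of the $\pebblesk$-variable fragment (see, e.g., \cite{Grohe.1998}): Player~2 wins the $\roundstd$-round $\pebblesk$-pebble game on $\xformf$ if and only if $\strucA$ and $\strucB$ satisfy the same $\FOk$ sentences of quantifier depth at most $\roundstd$. Taking the contrapositive, using that $\FOk$ is closed under negation, and padding any shallower distinguishing sentence with dummy quantifiers then yields statement \ref{item:equiv-char-2}. The implication \ref{item:equiv-char-2}$\,\Rightarrow\,$\ref{item:equiv-char-3} is immediate, since every $\FOk$ sentence is an $\FOcntk$ sentence of the same quantifier depth.

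The substantive part is \ref{item:equiv-char-3}$\,\Rightarrow\,$\ref{item:equiv-char-1}, namely that on these structures counting quantifiers buy nothing. By the bijective pebble-game characterisation of $\FOcntk$ (cf.\ \cite{Cai.1992}), statement \ref{item:equiv-char-3} is equivalent to Player~1 winning the $\roundstd$-round \emph{bijective} $\pebblesk$-pebble game on $(\strucA,\strucB)$, so it suffices to upgrade a winning strategy of Player~2 for the non-counting $\roundstd$-round pebble game on $\xformf$ into a winning strategy of Player~2 for the bijective game. I would argue this directly. At any reachable position, identified through the dictionary with a partial assignment $\pos'$ on fewer than $\pebblesk$ variables from which Player~2 still wins, that strategy supplies, for every variable $x_j$, a bit $c_j$ such that answering $c_j$ to the query $x_j$ remains winning, and for $j$ already in the domain of $\pos'$ we set $c_j=\pos'(x_j)$. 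When Player~1 asks to pick up a pebble pair, Player~2 responds with the permutation $\beta$ defined by $\beta(x_j^{b})=x_j^{b\oplus c_j}$, which respects every $X_i$; for \emph{every} vertex $x_j^{b}$ the pebbled pair $(x_j^{b},\beta(x_j^{b}))$ records exactly the extension $\pos'\cup\{x_j\mapsto c_j\}$, which is again a winning position for Player~2. The essential point is that each block has precisely two elements, so a single bit $c_j$ both fixes how Player~2 maps that block and covers either pebble Player~1 could place there; iterating for $\roundstd$ rounds shows that Player~2 survives, which is the contrapositive of \ref{item:equiv-char-3}$\,\Rightarrow\,$\ref{item:equiv-char-1}. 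The hard part will be the bookkeeping: reconciling the ``delete an arbitrary subset, then add one variable'' move of the formula game with the ``pick up one pebble pair, then place one pebble pair'' move of the bijective game, and in particular checking that Player~1 picking up an already-placed pebble pair is faithfully mirrored by deleting the single associated variable in the formula game.

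Finally, \ref{item:equiv-char-3}$\,\Leftrightarrow\,$\ref{item:equiv-char-4} is the correspondence between $\FOcntk$ and $(\pebblesk-1)$-dimensional Weisfeiler--Leman due to Cai, F\"urer, and Immerman~\cite{Cai.1992} and recalled in \refsec{sec:intro}: two structures are separated by the $(\pebblesk-1)$-dimensional Weisfeiler--Leman procedure within $\roundstd$ refinement steps if and only if they are distinguished by an $\FOcntk$ sentence of quantifier depth $\roundstd$. Chaining the four parts then establishes the lemma.
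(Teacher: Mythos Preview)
Your proposal is correct and follows essentially the same approach as the paper: both set up the same dictionary between partial assignments and pebble positions, reduce to the standard \EF characterisations of $\FOk$ and $\FOcntk$, exploit that each $X_i$-block has exactly two elements so a Duplicator bijection is just a bit $c_j$ per block (your $\beta(x_j^b)=x_j^{b\oplus c_j}$ is precisely what the paper sketches), and cite Cai--F\"urer--Immerman for \ref{item:equiv-char-3}$\Leftrightarrow$\ref{item:equiv-char-4}. Your bookkeeping concern about deleting several assignments versus one pebble pair is handled in the paper exactly as you suggest (it only helps \secondplayer), and the paper's proof is likewise only a sketch at that level of detail.
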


\begin{proof}[Proof sketch]
  Let us start by briefly recalling known characterizations in terms
  of \EF games  of
  $\FOk$~\cite{Barwise.1977,Immerman.1982} 
  and~$\FOcntk$~\cite{Cai.1992,Hella.1996}.
  In both cases the game is played by two players,
  \ifthenelse{\boolean{conferenceversion}}
  {referred to as}
  {called} 
  \Spoiler and
  \Duplicator, on the two structures $\strucA$ and~$\strucB$.
  Positions in the games are partial mappings 
  $\partmap =
  \Set{(\strucAelement_1,\strucBelement_1), \ldots, 
    (\strucAelement_i,\strucBelement_i)}$  
  from $V(\strucA)$ to $V(\strucB)$ of size at most~$\pebblesk$.  
  The games start from the empty position and proceed in rounds.
  At the beginning of each round in both games, \Spoiler chooses
  $\partmapalt\subseteq \partmap$ with
  $\setsize{\partmapalt}<\pebblesk$.  
  
 \begin{itemize}
   \item 
     In the $\FOk$-game, \Spoiler then 
     selects
     either some
     $\strucAelement\in V(\strucA)$ or some $\strucBelement\in
     V(\strucB)$ and \Duplicator responds by choosing an element
     $\strucBelement\in V(\strucB)$ or $\strucAelement\in V(\strucA)$
     in the other structure. 
    \item
      In the $\FOcntk$-game, \Duplicator first 
      selects
      a global
      bijection \mbox{$\bijection:V(\strucA)\to V(\strucB)$} and \Spoiler
      chooses some pair $(\strucAelement, \strucBelement)\in
      \bijection$.  
      (If $\setsize{V(\strucA)} \neq \setsize{V(\strucB)} $, 
      \Spoiler
      wins the $\FOcntk$-game immediately.)
 \end{itemize}
 The new position is 
 $\partmapalt\cup\{(\strucAelement, \strucBelement)\}$. 
 Spoiler wins the $\roundstd$-round $\FOk\,/\,\FOcntk$ game if he has
 a strategy to reach within $\roundstd$~rounds 
 \ifthenelse{\boolean{conferenceversion}}
 {a position}
 {a position~$\partmap$}
 that does not define an isomorphism on the induced substructures.  
 Both games characterize equivalence in the corresponding logics:
 Spoiler wins the $\roundstd$-round $\FOk\,/\,\FOcntk$ game if and only
 if  there is a 
 \ifthenelse{\boolean{conferenceversion}}
 {\mbox{depth-$\roundstd$} sentence $\varphi\in \FOk\,/\,\FOcntk$}
 {sentence $\varphi\in \FOk\,/\,\FOcntk$ of quantifier
   depth~$\roundstd$} 
 such that $\strucA\models \varphi$ and
 $\strucB\not\models \varphi$.  

When these games are played on the two structures $\strucA(\xformf)$
and~$\strucB(\xformf)$ obtained from an \XOR formula~$\xformf$, it is
not  hard to  
verify
that both games are 
equivalent to the $\pebblesk$-pebble game on $\xformf$.
To see this,
we identify \Spoiler with \firstplayer, \Duplicator with
\secondplayer, and partial mappings 
\mbox{$\partmap=\setdescr{(x^{a_i}_i,x^{b_i}_i) }{ i\leq \ell}$} 
with partial assignments 
$\pos = \setdescr{x_i\mapsto a_i\oplus b_i}{i\leq\ell}$.  
Because of the $X_i$-relations, we can assume that partial assignments
of any other form will not occur as they are losing positions for
\Duplicator.

If \Spoiler asks for some $x_i^0$ or $x_i^1$ in the
\mbox{$\FOk$-game}, which corresponds 
to a choice by \firstplayer of
$x_i\in\variables(\xformf)$, the only meaningful 
action
for \Duplicator is to choose either $x_i^0$ or $x_i^1$ in the other
structure, corresponding to an assignment to~$x_i$ by \secondplayer.  
With any other choice \Duplicator would lose immediately because of the unary
relations~$X_i$.
Thus, there is a natural correspondence between strategies in the 
\mbox{$\FOk$-game} and the $\pebblesk$\nobreakdash-pebble game.

The players in the $\pebblesk$\nobreakdash-pebble game can be
assumed to have perfect knowledge of the strategy of the other
player. This means that at any given position in the game, without
loss of generality we can think of \firstplayer as being given a
complete truth value assignment to the remaining variables, out of
which he can pick one variable assignment. By the correspondence 
\ifthenelse{\boolean{conferenceversion}}
{discussed above}
{in \refeq{eq:correspondence-1}--\refeq{eq:correspondence-2}}
we see that this can be translated to a bijection $\bijection$
chosen by \Duplicator in the $\FOcntk$-game (which has to preserve the
$X_i$ relations). Therefore, \Spoiler picking some pair of the form
$(x_i^a,x_i^b)$ from $\bijection$ can be viewed as \firstplayer asking
about the assignment to~$x_i$ and getting a response from
\secondplayer in the game on~$\xformf$  
(again using the above-mentioned correspondence between partial mappings
$\partmap$ and partial assignments $\pos$).
Finally, we observe that by design a partial mapping that preserves
the $X_i$-relations defines a local isomorphism if and only if the corresponding
$\pos$ does not falsify any \XOR clause. 

Formalizing the proof sketch above, it is not hard to show that
\ifthenelse{\boolean{conferenceversion}}
{statements \ref{item:equiv-char-1}--\ref{item:equiv-char-3}}
{statements \ref{item:equiv-char-1}--\ref{item:equiv-char-3}
  in
  the lemma} 
are all equivalent.  The equivalence between 
\ref{item:equiv-char-3} and \ref{item:equiv-char-4}
was proven in~\cite{Cai.1992}. 
The lemma follows.
\end{proof} 

\makeatletter{}%

\ifthenelse{\boolean{conferenceversion}}
{\section{Proofs of Main Theorems}}
{\section{Technical Lemmas and Proofs of Main Theorems}}
\label{sec:lower-bound-proof}

To prove our lower bounds of the quantifier depth of finite variable
logics in \refth{thm:maintheorem} and the number of refinement steps
of the Weisfeiler--Leman algorithm 
in \reftwoths{thm:mainWLtheorem}{thm:mainTheoremWLquasipolyTradeoff},
we utilize the characterization in
\reflem{lem:EquivalentCharacterisations} and show that there are 
$\indexn$\nobreakdash-variable \XOR
formulas on which \firstplayer is able to win the $\pebblesk$-pebble
game but cannot do so in significantly less than
$\indexn^{\pebblesk/\log \pebblesk}$~rounds.  
The next lemma states this formally and also provides a trade-off as
the number of pebbles increases.

\begin{lemma}[Main technical lemma]
  \label{lem:MainTheoremXor}
  There is an absolute constant
  $\pebblezero \in \Nplus$ such that 
  \ifthenelse{\boolean{conferenceversion}}
  {for}
  {for integers}
  $\pebbleslower$,
  $\pebblesupper$,
  and $\indexn$ 
  satisfying 
  $\pebblezero \leq
  \pebbleslower
  \leq \pebblesupper 
  \leq \indexn^{1/6}/\pebbleslower$
  there is an \XOR formula 
  $\xformf$ with $\indexn$~variables such that \firstplayer{} wins the
  $\pebbleslower$-pebble game on~$\xformf$, but does not
  win the $\pebblesupper$-pebble game 
  within
  \mbox{$\indexn^{\pebbleslower/(10\log \pebblesupper)-1/5}$ rounds}. 
\end{lemma}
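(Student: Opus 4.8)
The plan is to prove Lemma~\ref{lem:MainTheoremXor} by taking the pyramid-based hard formula built in \refsec{sec:pyramids} and feeding it through the hardness-condensation operation of \refsec{sec:hardness-condensation}, optimizing all parameters only at the very end. First I would record what \refsec{sec:pyramids} provides. For a pyramid graph $\pyramidgraph[h]$ of height $h$ one associates an XOR pebbling formula, call it $\xformorig$, on some number $m$ of variables such that: Player~1 wins the $\bigoh{h}$-pebble game on $\xformorig$ — he simply plays out a space-$\bigoh{h}$ black pebbling of $\pyramidgraph[h]$, pushed through the XOR substitution — whereas \emph{every} winning Player~1 strategy, with any number of pebbles, needs at least $\Omega(m)$ rounds, essentially because the sink of a pyramid lies above every vertex, so every vertex must be pebbled at some moment, and the XOR encoding rules out shortcuts. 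This is the scalable form of the linear bound of Immerman~\cite{Immerman.1981}; by choosing the substitution degree appropriately the base formula can be made to live on as large a variable set $m$ as the later steps demand, with the pebble cost still governed by $h$.

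Next I would invoke the condensation lemma of \refsec{sec:hardness-condensation}. Fix a bipartite graph $\expandergraph$ with $m$ left vertices — one for each variable of $\xformorig$ — and $\indexn$ right vertices, required to be a good enough expander (every left set of size up to roughly $\pebblesupper$ expands by strictly more than the substitution width); the existence of such a graph in the relevant parameter window is exactly \reflem{lem:expanderexistnew}, deferred to \refapp{app:existence-expander}. Substituting the variables of $\xformorig$ along $\expandergraph$ — so that each of the $\indexn$ new variables is recycled to stand for many old ones — yields the formula $\xformsubst$ on $\indexn$ variables. Two things then have to be checked. (i) Player~1 still wins with $\pebbleslower = \bigoh{h}$ pebbles, because the pyramid black pebbling lifts through the substitution, the only overhead being controlled by the left-degree of $\expandergraph$. (ii) As long as Player~1 uses at most $\pebblesupper$ pebbles, the expansion of $\expandergraph$ lets Player~2 locally treat the recycled variables as if they were still all distinct — this is precisely what \reflem{lem:ClosedSet} buys, since it shows that any not-too-large pebbled configuration extends to a consistent one — so the condensed game is at least as hard as the uncondensed one, and, because of recycling, a Player~1 strategy is now forced to revisit the small variable pool over and over, which pushes the round lower bound up from $\Omega(m)$ to the required value super-polynomial in $\indexn$.

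Finally I would choose the parameters. Taking the pyramid height $h \asymp \pebbleslower$ fixes $\pebbleslower$ as the effective pebble number; the quality of the expander then dictates both how many pebbles the hardness survives against and how aggressively one may compress, and tracking these constraints — good expansion on sets of size up to $\pebblesupper$, a compression ratio $m/\indexn$ as large as possible, and a left-degree small enough that Player~1's budget stays $\bigoh{\pebbleslower}$ — is exactly what produces the hypothesis $\pebbleslower \leq \pebblesupper \leq \indexn^{1/6}/\pebbleslower$ and the conclusion that the surviving round bound is $\indexn^{\pebbleslower/(10\log \pebblesupper)-1/5}$, the $\log \pebblesupper$ in the exponent being the apparently unavoidable loss incurred per layer of recycled variables. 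The hard part, I expect, is precisely this three-way balancing act: keeping $\expandergraph$ expanding on sets of size $\pebblesupper$, keeping the induced blow-up in Player~1's pebble requirement below $\pebbleslower$, and making the compression ratio — hence the round lower bound — as large as the stated bound demands, all inside a parameter regime in which \reflem{lem:expanderexistnew} actually delivers such a graph.
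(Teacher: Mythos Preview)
Your proposal misreads both ingredients.

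First, the base formula from \refsec{sec:pyramids} does not have the properties you claim. The high-dimensional pyramid $\pyramid^d_h$ has in-degree~$2$, so by \reflem{lem:UpperBoundPebblingDAG} Player~1 wins the \emph{$3$-pebble} game, not an $O(h)$-pebble game. And the round lower bound in \reflem{lem:pyramids} is not $\Omega(m)$: against $\ell$ pebbles it is only about $m^{1/\log\ell}$. This is where the $\log \pebblesupper$ loss in the final exponent originates---it is baked into the high-dimensional pyramid construction, not ``incurred per layer of recycled variables'' as you suggest. (Your ``every vertex must be pebbled'' argument is a black-pebbling intuition that does not transfer to the XOR pebble game; see also the footnote in \refsec{sec:pyramids} on why ordinary $2$-dimensional pyramids fail here.)

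Second, you have the direction of hardness condensation backwards. \Reflem{lem:hardnessCondensationXOR} does \emph{not} push the round lower bound up; it takes a round bound $r$ on an $m$-variable formula and returns a bound $r/(2\pebblesupperlhs)$---slightly \emph{smaller}---on a formula with only $\lceil m^{3/\Delta}\rceil$ variables. The gain is purely that the same bound, re-expressed in the new variable count $n$, reads $n^{\Omega(\Delta/\log\pebblesupper)}$. The price is that Player~1's pebble requirement gets multiplied by $\Delta$: the $3$-pebble win on the base formula becomes a $3\Delta$-pebble win after substitution, and this is exactly what forces $\Delta \approx \pebbleslower/3$ and hence fixes the compression ratio.

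So in the actual proof one sets $m = n^{\lfloor \pebbleslower/9\rfloor}$, applies \reflem{lem:pyramids} with $\pebblesupperlhs = \pebblesupper$ to get a $3$-XOR formula on $m$ variables with a $3$-pebble win and an $m^{1/\log \pebblesupper}$-round lower bound against $\pebblesupper$ pebbles, and then condenses with $\Delta = 3\lfloor \pebbleslower/9\rfloor$ to land on $n$ variables with a $\pebbleslower$-pebble win and essentially the same round bound, which in terms of $n$ is now $n^{\pebbleslower/(10\log \pebblesupper)-1/5}$. The constraint $\pebblesupper \leq n^{1/6}/\pebbleslower$ is what makes the expander in \reflem{lem:expanderexistnew} exist with these parameters.
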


Note that there is a limit to how far
$\pebbleslower$
and
$\pebblesupper$
can be from each other for the lemma to make sense---the statement
becomes vacuous if 
$\pebbleslower \leq 2 \log \pebblesupper$.
Let us see  how this lemma yields the theorems in \refsec{sec:intro}.

\begin{proof}[Proof of \refth{thm:maintheorem}]
  This theorem 
  \ifthenelse{\boolean{conferenceversion}}
  {can be seen to follow immediately from}
  {follows immediately from}
  \reftwolems
  {lem:EquivalentCharacterisations}
  {lem:MainTheoremXor},
  but let us write out the details for clarity.
  By setting 
  $\pebbleslower = \pebblesupper = \pebblesk$
  in \reflem{lem:MainTheoremXor},
  we can find  \XOR formulas with
  $\indexrhs$~variables such that \firstplayer wins the
  $\pebblesk$-pebble game on $\xformf_\indexrhs$ but needs more than
  \mbox{$\indexrhs^{\varepsilon \pebblesk/\log \pebblesk}$
    rounds}
  in order to do so 
  (provided we choose $\varepsilon<1/10$ and $\pebblezero$ large enough). 
  We can then plug these \XOR formulas into
  \reflem{lem:EquivalentCharacterisations}
  to obtain $\indexrhs$-element structures 
  $\strucA_\indexrhs = \strucA(\xformf_\indexrhs)$
  and
  $\strucB_\indexrhs = \strucB(\xformf_\indexrhs)$
  that can be distinguished in the 
  $\pebblesk$-variable fragments of first-order logic~$\FO^{\pebblesk}$
  and first-order counting logic~$\FOcnt^{\pebblesk}$, 
  but where this requires sentences of quantifier depth at 
  \mbox{least $\indexrhs^{\varepsilon \pebblesk / \log \pebblesk}$}.
\end{proof}

\begin{proof}[Proof of \refth{thm:mainWLtheorem}]
  If we let $\xformf_\indexrhs$ be the \XOR formula from
  \reflem{lem:MainTheoremXor} for 
  $\pebbleslower = \pebblesupper = \pebblesk + 1$, 
  then by
  \reflem{lem:EquivalentCharacterisations} it holds that 
  the structures $\strucA(\xformf_\indexrhs)$ and $\strucB(\xformf_\indexrhs)$
  will be distinguished 
  by the $\pebblesk$-dimensional Weisfeiler--Leman algorithm, but only
  after  $\indexrhs^{\varepsilon(\pebblesk+1) / \log
    (\pebblesk+1)}\geq \indexrhs^{\varepsilon\pebblesk / \log\pebblesk}$
  refinement steps.  
  Hence, computing the stable colouring of either of these structures
  requires at least 
  $\indexrhs^{\varepsilon \pebblesk / \log \pebblesk}$ 
  refinement steps (since they would be distinguished earlier if at
  least one of the computations terminated earlier). 
\end{proof}

\begin{proof}[Proof of \refth{thm:mainTheoremWLquasipolyTradeoff}]
  This is similar to the proof of \refth{thm:mainWLtheorem}, but setting
  $\pebbleslower = \lfloor\log \indexrhs^c\rfloor + 1$ 
  and
  \mbox{$\pebblesupper = \Ceiling{\indexrhs^{1/7}} + 1$} 
  in \reflem{lem:MainTheoremXor}. 
\end{proof}

\ifthenelse{\boolean{conferenceversion}}
{The proof of \reflem{lem:MainTheoremXor} splits into two steps.}  
{The proof of the trade-off between the number of pebbles versus number
  of rounds in \reflem{lem:MainTheoremXor} splits into two steps.}  
We first establish a rather weak lower bound on the number of rounds in
the pebble game played on suitably chosen
$\indexlhs$\nobreakdash-variable \XOR formulas 
\mbox{for $\indexlhs \gg \indexrhs$.}  
We then transform this into a much
stronger lower bound for formulas over $\indexrhs$~variables using
hardness condensation.
To help the reader keep track of 
which
results are proven in which
setting, 
in what follows we will write $\pebbleslowerlhs$
and~$\pebblesupperlhs$ to denote parameters depending on~$\indexlhs$
and $\pebbleslower$ and~$\pebblesupper$ to denote parameters
depending on~$\indexrhs$.

To implement the first step in our proof plan, we use tools developed by
Immerman~\cite{Immerman.1981} to establish a lower bound as stated in
the next lemma.

\begin{lemma}
  \label{lem:pyramids}
  For all $\pebblesupperlhs, \indexlhs \geq 3$ 
  there is an 
  $\indexlhs$-variable 3-\XOR{} formula
  $\xformf^{\pebblesupperlhs}_{\indexlhs}$ 
  on which
  \firstplayer
  \begin{enumerate}[label=(\alph*)]
  \item
    \label{item:immerman-a}
    wins the \mbox{$3$-pebble} game, but
  \item
    \label{item:immerman-b}
    \ifthenelse{\boolean{conferenceversion}}{
    does not win the $\bigl(\tfrac{1}{\ceiling{\log \pebblesupperlhs}}
    \indexlhs^{1/(1+\ceiling{\log \pebblesupperlhs})}\! -\! 2\bigr)$-round \mbox{$\pebblesupperlhs$-pebble} game.  
    }{
    does not win the $\pebblesupperlhs$-pebble game
    within
    $\max\bigl(3, \tfrac{1}{\ceiling{\log \pebblesupperlhs}}
    \indexlhs^{1/(1+\ceiling{\log \pebblesupperlhs})} - 2\bigr)$
    rounds.
    }  
  \end{enumerate}
\end{lemma}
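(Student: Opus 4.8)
The plan is to realize $\xformf^{\pebblesupperlhs}_{\indexlhs}$ as a \emph{pebbling contradiction} written as a $3$-XOR formula over a pyramid, following the construction implicit in Immerman~\cite{Immerman.1981}. Let $\pyramidgraph[\pyheight]$ denote the pyramid DAG of height~$\pyheight$: one sink (the \emph{apex}~$z$), $\pyheight+1$ sources in the bottom layer, and every non-source vertex having in-degree exactly~$2$. I would choose $\pyheight$ as a function of $\indexlhs$ (and possibly of $\pebblesupperlhs$) and pad with isolated variables so that the total number of variables is exactly~$\indexlhs$; with the simplest choice $\pyheight = \Theta\bigl(\sqrt{\indexlhs}\,\bigr)$. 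The formula has the width-$1$ clause $x_s = 0$ for every source~$s$, the width-$3$ clause $x_v \oplus x_{u_1} \oplus x_{u_2} = 0$ for every internal vertex~$v$ with predecessors $u_1,u_2$, and the width-$1$ clause $x_z = 1$. Propagating the source values upward forces every vertex, and in particular the apex, to the value~$0$, so the formula is unsatisfiable, and by construction all clauses have width at most~$3$.

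Part~\ref{item:immerman-a} I would prove by an explicit ``descent'' strategy for \firstplayer using only three pebbles. He first asks for~$x_z$; by the clause $x_z = 1$, \secondplayer must answer~$1$ or lose immediately. Inductively, suppose \firstplayer holds a single pebble on a vertex~$v$ that \secondplayer has been forced to set to~$1$, and let $u_1,u_2$ be its predecessors. \firstplayer spends two rounds pebbling $u_1$ and~$u_2$ (three pebbles on the board); if \secondplayer violates $x_v \oplus x_{u_1} \oplus x_{u_2} = 0$ he loses, and otherwise exactly one of $u_1,u_2$ carries the value~$1$, so \firstplayer deletes the pebbles on~$v$ and on the $0$-valued predecessor and keeps the one witnessing value~$1$, which restores the inductive hypothesis one layer lower. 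After at most $2\pyheight$ rounds this reaches the bottom layer, where some source is forced to~$1$ while its sibling is forced to~$0$ by their unit clauses, contradicting the incident width-$3$ clause. Thus \firstplayer wins the $3$-pebble game, hence also the $\pebblesupperlhs$-pebble game for every $\pebblesupperlhs \geq 3$, in $O(\pyheight)$ rounds, which is well below all the round thresholds we need.

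The core of the lemma is the round lower bound in part~\ref{item:immerman-b}, for which I would design a strategy for \secondplayer that survives for many rounds against every \firstplayer strategy that uses at most $\pebblesupperlhs$ pebbles, again adapting Immerman's method~\cite{Immerman.1981}. The idea is to maintain, on each reachable position (a partial assignment to at most $\pebblesupperlhs$ variables), the invariant that it is consistent and moreover ``far from any contradiction'', quantified by a potential built from the layers of the pebbled vertices. Here one exploits the recursive base structure of a pyramid — the apex value depends on layer~$\ell$ only through the $2^{s_2(\ell)}$ vertices~$v_{\ell,i}$ with $\binom{\ell}{i}$ odd, a small set precisely when $\ell$ has few binary digits, which is what quotient/remainder bookkeeping $\pyrquot{b}{\ell},\pyrrem{b}{\ell}$ on layer indices keeps track of. One shows that with $\pebblesupperlhs$ pebbles \firstplayer can collapse at most about $\ceiling{\log\pebblesupperlhs}$ layers' worth of this recursion in a single round, whereas forcing a falsified clause requires driving the recursion through all~$\pyheight$ layers; balancing the two and re-expressing $\pyheight$ in terms of $\indexlhs$ gives the stated bound $\tfrac{1}{\ceiling{\log\pebblesupperlhs}}\indexlhs^{1/(1+\ceiling{\log\pebblesupperlhs})}-2$, with the $\max$ with~$3$ only absorbing degenerate small cases.

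The step I expect to be the main obstacle is exactly the design and verification of this potential. It has to be simultaneously (i) an invariant — more precisely, \secondplayer's current partial assignment must stay extendable to a satisfying assignment of the relevant sub-pyramid — under an arbitrary move of \firstplayer whenever the board is not full, (ii) decreasing by at most the claimed amount in any single round, and (iii) bounded away from the regime where some clause can be falsified until its value has dropped all the way. Making (i)--(iii) hold together is where the recursive geometry of the pyramid, and the precise way $\pebblesupperlhs$ controls the depth of the recursion, must be used carefully; this is the technical heart inherited from~\cite{Immerman.1981}, and everything else in the proof of \reflem{lem:pyramids} is bookkeeping around it.
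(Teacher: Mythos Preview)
Your proof of part~\ref{item:immerman-a} is fine and matches the paper's argument (Lemma~\ref{lem:UpperBoundPebblingDAG}): the descent strategy with three pebbles works on any in-degree-$2$ DAG. One small slip: once a source~$s$ is forced to~$1$, the unit clause $(s,0)$ is already falsified --- no sibling is needed.

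The real problem is the construction itself and hence part~\ref{item:immerman-b}. You are using the ordinary $2$-dimensional pyramid~$\pyramidgraph[\pyheight]$ with $\pyheight=\Theta(\sqrt{\indexlhs})$, and then hoping that a Lucas-theorem style argument about odd binomial coefficients provides \secondplayer with a survival strategy against $\pebblesupperlhs$ pebbles. This is not what the paper does, and it is not clear that it can work. The paper in fact remarks explicitly (see the footnote after Lemma~\ref{lem:UpperBoundPebblingDAG}) that the ``open-path'' intuition from black pebbling fails on $2$-dimensional pyramids in the \XOR game: once you fix the two outer vertices of a short row and the two outer vertices of the row just below, the \XOR constraints already pin down the remaining vertices in between, so \firstplayer can make progress that the open-path picture would forbid. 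Your Lucas-type observation that the apex depends only on a sparse subset of a layer does not by itself bound how much of \secondplayer's freedom is destroyed per round; the ``$\lceil\log\pebblesupperlhs\rceil$ layers per round'' claim is asserted, not proved, and it is precisely the statement that breaks for $2$-dimensional pyramids.

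The paper's construction is genuinely different: it uses the \emph{$(\dimension{+}1)$-dimensional} pyramids $\pyramid^\dimension_h$ of Definition~\ref{def:high-dim-pyramid} with $\dimension=\lceil\log\pebblesupperlhs\rceil$ and $h\approx\indexlhs^{1/(\dimension+1)}$. The quotient/remainder bookkeeping $\pyrquot{\dimension}{\layernumber},\pyrrem{\dimension}{\layernumber}$ you invoked is not base-$2$ digit arithmetic on a $2$-dimensional pyramid; it is the mechanism by which the $\dimension$-dimensional cuboid at layer~$\layernumber$ grows one dimension at a time. The point of going to dimension~$\dimension$ is that every vertex together with its depth-$\dimension$ predecessors forms a complete binary tree of height~$\dimension$, which yields the key combinatorial fact (Lemma~\ref{lem:ImmermansMainTechnicalLemma}): any consistent labelling of layers $0,\ldots,\layernumber$ extends to one that sets \emph{any} chosen set of $2^\dimension-1$ vertices on or below layer $\layernumber+\dimension$ to~$0$. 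That is exactly the invariant \secondplayer maintains in Lemma~\ref{lem:pyramidLowerBound}, and it immediately gives $h/(\dimension-1)-1$ rounds against $2^\dimension\geq\pebblesupperlhs$ pebbles. With $|V(\pyramid^\dimension_h)|\leq(h+1)^{\dimension+1}\leq\indexlhs$, this is the bound $\tfrac{1}{\lceil\log\pebblesupperlhs\rceil}\indexlhs^{1/(1+\lceil\log\pebblesupperlhs\rceil)}-2$. Your plan should be reworked around this higher-dimensional construction; the $2$-dimensional pyramid with a Lucas-style potential does not deliver the required invariant.
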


We defer the proof of \reflem{lem:pyramids} to \refsec{sec:pyramids}, but
at this point an expert reader might wonder why we would need to prove
this lower bound at all, since a much stronger $\bigomega{\indexlhs}$ 
bound on the number of rounds in the 
pebble game
on \mbox{$4$-\XOR} formulas was 
already obtained by Fürer~\cite{Furer.2001}.  The reason is that in Fürer's
construction 
\firstplayer cannot
win the game with 
few pebbles.
However, it is crucial for the second step of our proof,
where we boost the lower bound but also significantly increase 
the number of pebbles that are needed to win the game,
that \firstplayer is able to win the original game with
very few pebbles.

The second step in the proof of our main technical lemma is carried
out by using the techniques developed by
Razborov~\cite{Razborov16NewKind} and applying them to the \XOR
formulas in \reflem{lem:pyramids}.
Roughly speaking,
if we set
$\pebbleslower = \pebblesupper = \pebblesk$ 
for simplicity, then the number of variables decreases from~$\indexlhs$ 
to \mbox{$\indexrhs \approx \indexlhs^{1/\pebblesk}$}, whereas the
\mbox{$\indexlhs^{1/\log \pebblesk}$ round} lower bound for the
$\pebblesk$-pebble game stays essentially the same and hence becomes
$\indexrhs^{\pebblesk/\log \pebblesk}$ in terms of the new number
of variables~$\indexrhs$.  
The properties of hardness condensation are
summarized in the next lemma, 
which we prove in \refsec{sec:hardness-condensation}.  
To demonstrate the flexibility of this tool we state the lemma in its
most general form---readers who want to see 
an example of
how to apply it to the
\XOR formulas in \reflem{lem:pyramids} can mentally fix
$\initialxorwidth = 3$, 
$\pebbleslowerlhs = 3$,
$\pebblesupperlhs = \pebblesupper$,
$\roundlowerbound \approx \indexlhs^{1/\log \pebblesupper}$, 
and
$\expanderdegree \approx \pebblesupper / 3$
when reading the statement of the lemma below.

\begin{lemma}[Hardness condensation lemma]
  \label{lem:hardnessCondensationXOR}
  There 
  exists
  an absolute constant $\mindegree \in \Nplus$ such that the following holds.
  Let $\xformf$ be an
  \mbox{$\indexlhs$-variable} \mbox{$\initialxorwidth$-\XOR{}} formula 
  and suppose that we can choose parameters
  \mbox{$\pebbleslowerlhs > 0$},
  \mbox{$\pebblesupperlhs\geq \mindegree\pebbleslowerlhs$}
  and
  \mbox{$\roundlowerbound$}
  such that  \firstplayer 
  \begin{enumerate}[label=(\alph*)]
  \item \label{item:condensationlhs-a}
    has a winning strategy for the
    $\pebbleslowerlhs$-pebble game on~$\xformf$, but
  \item \label{item:condensationlhs-b}
    does not win the $\pebblesupperlhs$-pebble game on~$\xformf$
    within $\roundlowerbound$ rounds.    
  \end{enumerate}
  Then for
  any %
  $\expanderdegree$ satisfying
  $\mindegree\leq \expanderdegree \leq \pebblesupperlhs/\pebbleslowerlhs$ and 
  \mbox{$(2\pebblesupperlhs
  \expanderdegree)^{2\expanderdegree} \leq \indexlhs$} 
  there is an $(\expanderdegree\initialxorwidth)$\nobreakdash-\XOR{} formula
  $\xformg$
  with
  $\Ceiling{\indexlhs^{3/\expanderdegree}}$ variables such that 
  \firstplayer
  \begin{enumerate}[label=(\alph*')]
  \item \label{item:condensationrhs-a}
    has a winning strategy for the
    $(\expanderdegree\pebbleslowerlhs)$-pebble game
    on~$\xformg$, but  
  \item \label{item:condensationrhs-b}
    does not win the $\pebblesupperlhs$-pebble game on~$\xformg$ 
    within 
    ${\roundlowerbound}/{(2\pebblesupperlhs)}$ rounds.
  \end{enumerate}
\end{lemma}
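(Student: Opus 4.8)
The plan is to realize Razborov's hardness condensation by variable substitution along a bipartite expander. First I would invoke the expander existence result (\reflem{lem:expanderexistnew}, restated in \refapp{app:existence-expander}) to obtain, under the hypotheses $\expanderdegree \ge \mindegree$ and $(2\pebblesupperlhs\expanderdegree)^{2\expanderdegree} \le \indexlhs$, a bipartite graph $G$ with $\indexlhs$ left vertices (one per variable of $\xformf$), $\Ceiling{\indexlhs^{3/\expanderdegree}}$ right vertices (the new variable pool), left-degree exactly~$\expanderdegree$, and strong boundary/unique-neighbour expansion for all left-sets of size up to $\Theta(\pebblesupperlhs)$. I then set $\xformg = \subst{\xformf}{G}$ by replacing each variable $x_i$ of $\xformf$ with the parity $\bigoplus_{j \in N_G(i)} y_j$ over its $\expanderdegree$ neighbours and reducing each clause mod~$2$, so that repeated right-variables cancel. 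A clause of width $\le\initialxorwidth$ then becomes a clause over the symmetric difference $\triangle_i N_G(i)$ of the involved neighbourhoods, which has size $\le\expanderdegree\initialxorwidth$; unique-neighbour expansion keeps these symmetric differences non-empty, so $\xformg$ is a genuine $(\expanderdegree\initialxorwidth)$-\XOR formula on $\Ceiling{\indexlhs^{3/\expanderdegree}}$ variables, exactly as required.

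For the upper bound \ref{item:condensationrhs-a}, I would have \firstplayer simulate his winning $\pebbleslowerlhs$-pebble strategy on $\xformf$ block by block: to place a virtual pebble on $x_i$ he successively pebbles the $\le\expanderdegree$ real variables in $N_G(i)$, reading off the virtual answer $\palpha(x_i) = \bigoplus_{j\in N_G(i)}\pbeta(y_j)$ from \secondplayer's replies, and he removes a whole block $N_G(i)$ whenever the virtual strategy drops $x_i$. This uses at most $\expanderdegree\pebbleslowerlhs \le \pebblesupperlhs$ real pebbles at any time, and when the virtual play falsifies a clause $C$ of $\xformf$ the substituted clause $C[G]$ is simultaneously falsified with all its variables (a subset of the currently pebbled blocks) in place, so \firstplayer wins. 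The number of rounds blows up by a factor $\expanderdegree$, but that is irrelevant for this part.

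The heart of the argument is the lower bound \ref{item:condensationrhs-b}. By hypothesis \ref{item:condensationlhs-b}, \secondplayer has a strategy $\tau$ surviving $\roundlowerbound$ rounds of the $\pebblesupperlhs$-pebble game on $\xformf$, and I would use it to survive more than $\roundlowerbound/(2\pebblesupperlhs)$ rounds on $\xformg$. Given a real position $\pbeta$ with pebbled set $U$, $|U|\le\pebblesupperlhs$, let the \emph{closure} $\Ker(U)$ be the saturation of $U$ under the rule ``a left vertex almost all of whose neighbours are already determined gets added''. The closed-set lemma for $G$ (\reflem{lem:ClosedSet}) should give $|\Ker(U)| \le \pebblesupperlhs$, so $\Ker(U)$ can serve as the pebbled set of a legal virtual position, together with the fact that any partial assignment of $\Ker(U)$ that is consistent with $\pbeta$ extends to a full ``lift'' with complete freedom on the private unpebbled neighbours of left vertices outside $\Ker(U)$. \secondplayer maintains the invariant that the virtual position $\palpha$ played against $\tau$ has domain $\Ker(U)$ and is lift-consistent with $\pbeta$. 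A \firstplayer move on $\xformg$ (delete some real pebbles, then query a new $y_j$) is answered by deleting the virtual pebbles that fell out of the closure, then, if $\Ker(U\cup\{y_j\})$ grew by left vertices $i_1,\dots,i_t$, querying $x_{i_1},\dots,x_{i_t}$ against $\tau$ (one round each, at most $\pebblesupperlhs + 1 \le 2\pebblesupperlhs$ virtual rounds in total) and finally committing to a value of $y_j$ consistent with a lift. If \firstplayer ever falsifies a clause $C[G]$ of $\xformg$, lift-consistency forces the corresponding clause $C$ of $\xformf$ to be falsified by $\palpha$, and the saturation property guarantees that the relevant left vertices already lie in $\Ker(U)$, so $\tau$ would have lost within $\roundlowerbound$ rounds — a contradiction.

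The main obstacle, and where the genuine work lies, is the closed-set lemma \reflem{lem:ClosedSet}: deriving from the expansion of $G$ that (i) closures of $\le\pebblesupperlhs$-sets have size $\le\pebblesupperlhs$, (ii) any $\Ker(U)$-consistent partial assignment lifts with enough remaining freedom to answer one further query, and (iii) a falsified substituted clause always pulls back to a falsified original clause. This bookkeeping is exactly what makes recycling variables from a small pool safe, and it is where the quantitative strength of the expander — slack bounded below by the absolute constant controlling $\mindegree$, hence the requirement $\expanderdegree\ge\mindegree$ — is spent. Verifying that \firstplayer never exceeds $\expanderdegree\pebbleslowerlhs$ pebbles in the simulation, and that the parameter constraints $\expanderdegree\le\pebblesupperlhs/\pebbleslowerlhs$ and $(2\pebblesupperlhs\expanderdegree)^{2\expanderdegree}\le\indexlhs$ are precisely what the expander construction and the closure accounting consume, is then routine.
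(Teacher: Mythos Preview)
Your overall plan matches the paper's: take $\xformg = \subst{\xformf}{\expandergraph}$ for a boundary expander~$\expandergraph$ from \reflem{lem:expanderexistnew}, prove \ref{item:condensationrhs-a} by block-simulating \firstplayer's $\pebbleslowerlhs$-pebble strategy (this part is fine and essentially identical to the paper), and prove \ref{item:condensationrhs-b} by having \secondplayer maintain a virtual position on~$\xformf$ consistent with the real one on~$\xformg$. The lower-bound sketch, however, garbles the closure machinery in a way that leaves a real gap.

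You conflate two different objects. In the paper, the kernel $\Ker(\rightvertexsubset) \subseteq \leftvertexset$ is simply the set of left vertices whose \emph{entire} neighbourhood lies in the right-side set~$\rightvertexsubset$; the closure $\closure(\rightvertexsubset) \supseteq \rightvertexsubset$ of \reflem{lem:ClosedSet} is a \emph{right}-side superset chosen so that the residual graph $\expandersubgraph{\expandergraph}{\closure(\rightvertexsubset)}$ is still a $(\pebblesupperlhs,1)$-boundary expander. The virtual domain is $\Ker\bigl(\closure(\variables(\possubst))\bigr)$, not your ``saturation by adding left vertices almost all of whose neighbours are determined''. The closure is exactly what defuses the obstacle you do not address: if you work with $\Ker(U)$ alone, a single newly queried right vertex~$v$ can simultaneously complete the neighbourhoods of several left vertices $u_1,u_2,\ldots$ whose only remaining unassigned neighbour is~$v$, and then one bit of freedom (the value of~$v$) cannot match whatever values the survival strategy~$\tau$ dictates for all of them. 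Passing to $\closure(\rightvertexsubset)$ first restores expansion in the residual graph, after which the peeling lemma (\reflem{lem:peeling_lemma})---absent from your sketch---orders the new left vertices so that each one has a private, still-unassigned right neighbour, giving \secondplayer the degree of freedom she needs at every step. Finally, your bound ``at most $\pebblesupperlhs+1$ virtual rounds'' is not justified; the paper's $2\pebblesupperlhs$ per real round comes from two separate phases of at most $\pebblesupperlhs$ rounds each (Subclaims~\ref{subclaim:statement-a} and~\ref{subclaim:statement-b}), because $\closure$ is not monotone in its argument and both the deletion sub-step and the addition sub-step can each change the kernel by up to $\pebblesupperlhs$ vertices.
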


Taking \reftwolems{lem:pyramids}{lem:hardnessCondensationXOR}
on faith for now, we are ready to prove our main technical lemma
yielding an $n^{\bigomega{\pebblesk/\log \pebblesk}}$ lower bound on
the number of rounds in the $\pebblesk$-pebble game.

\begin{proof}[Proof of \reflem{lem:MainTheoremXor}]
 Let $\mindegree$ be the  constant in
 \reflem{lem:hardnessCondensationXOR}.  
 We let 
 \begin{equation}
   \label{eq:K0-first}
   \pebblezero \geq 3\mindegree +9
 \end{equation}
 be an absolute constant, the
 precise value of which will be determined by calculations later in
 the proof.
 We are given $\pebblesupper$, $\pebbleslower$, and $\indexrhs$
 satisfying the conditions 
 \begin{equation}
   \label{eq:main-lemma-inequality}
   \pebblezero 
   \leq
   \pebbleslower
   \leq
   \pebblesupper
   \leq
   \indexrhs^{1/6} / \pebbleslower
 \end{equation}
 in \reflem{lem:MainTheoremXor}.
 Let us set
 \begin{subequations}
 \begin{align}
   \label{eq:l-high-value}
   \pebblesupperlhs & \defi \pebblesupper  
   \\
   \shortintertext{and}
   \label{eq:m-value}
   \indexlhs &\defi
   \indexrhs^{\lfloor\pebbleslower/9\rfloor}  
 \end{align}
 and apply
 \reflem{lem:pyramids}
 (which is in order since
 $\pebbleslowerlhs \geq 3$ 
 and
 $\indexlhs \geq 3$ 
 by
 \refeq{eq:K0-first}  
 and~\refeq{eq:main-lemma-inequality}).
 This yields an $\indexlhs$\nobreakdash-variable \mbox{$3$-XOR}
 formula  on which 
 \firstplayer wins the \mbox{$3$-pebble} game but cannot win the
 $\pebblesupperlhs$-pebble game within 
 \begin{equation}
   \label{eq:r-value}
   \roundlowerbound
   \defi
   \tfrac{1}{\ceiling{\log \pebblesupperlhs}}
   \indexlhs^{1/(1 + \ceiling{\log \pebblesupperlhs})} - 2   
 \end{equation}
 \end{subequations}
 rounds.
 As a side remark we note that this lower bound term might vanish if $\pebbleslower$ and $\pebblesupper$
 were to far apart from each other ($\pebbleslower \leq 2 \log
 \pebblesupper$), but recall that in this case the statement of
 Lemma~\ref{lem:MainTheoremXor} becomes vacuous anyway.
 Now we can apply hardness condensation as in 
 \reflem{lem:hardnessCondensationXOR}
 to the 
 formula provided by \reflem{lem:pyramids},
 where we fix parameters
 \begin{subequations}
 \begin{align}
   \label{eq:p-value}
   \initialxorwidth &\defi 3   \eqcomma
   \\
   \label{eq:l-low-value}
   \pebbleslowerlhs &\defi 3    \eqcomma
   \\
   \shortintertext{and}
   \label{eq:delta-value}
   \expanderdegree &\defi  3\floor{\pebbleslower/9}
   \eqperiod
 \end{align}
 \end{subequations}
 To verify that our choice of parameters is legal, note that in
 addition to 
 $\roundlowerbound \geq 1$
 we also have
 $\pebbleslowerlhs > 0$
 and
 \begin{equation}
   \label{eq:hardnesscondensation-condition-1}
   \pebblesupperlhs 
   = 
   \pebblesupper 
   \geq 
   \pebblezero 
   >
   3\mindegree 
   =
   \mindegree\pebbleslowerlhs 
   \eqperiod     
 \end{equation}
 Thus, the assumptions needed for
 \ref{item:condensationlhs-a} and~\ref{item:condensationlhs-b} 
 are satisfied by the \XOR formula obtained from
 \reflem{lem:pyramids}.
 To confirm that 
 $\expanderdegree$ 
 chosen as in~\refeq{eq:delta-value}
 satisfies the conditions 
 in \reflem{lem:hardnessCondensationXOR},
 observe that
 \begin{equation}
   \label{eq:delta-bound-1}
   \mindegree
   \leq 
   3 \lfloor\pebblezero/9\rfloor   
   \leq
   3\lfloor\pebbleslower/9\rfloor  
   = 
   \expanderdegree
   \leq 
   \pebbleslower/3
   \leq 
   \pebblesupper/3 
   = 
   \pebblesupperlhs/\pebbleslowerlhs
   \eqperiod   
 \end{equation}
 Furthermore, since
 $\expanderdegree\leq \pebbleslower/3$
 and
 $\pebblesupperlhs=\pebblesupper \leq \indexrhs^{1/6}/\pebbleslower$
 we get
 \begin{equation}
   \label{eq:delta-bound-2}
   (2\pebblesupperlhs\expanderdegree)^{2\expanderdegree} 
   \leq
   \left( \frac23 \indexrhs^{1/6} \right)^{2\expanderdegree} 
   \leq
   \indexrhs^{\expanderdegree/3} 
   = \indexlhs
   \eqperiod
 \end{equation}
 Note, finally, that 
 $\indexrhs = \indexrhs^{\lfloor\pebbleslower/9\rfloor 3/\expanderdegree} 
 = \indexlhs^{3/\expanderdegree}$.
 Now \reflem{lem:hardnessCondensationXOR} provides us with an 
 $\indexrhs$\nobreakdash-variable $\pebbleslower$-XOR formula on which
 according to~\ref{item:condensationrhs-a} \firstplayer has a winning
 strategy for the \mbox{$(3\expanderdegree)$-pebble} game and hence also for
 the game with $\pebbleslower\geq 9\lfloor\pebbleslower/9\rfloor =
 3\expanderdegree$ pebbles.
 Moreover, by~\ref{item:condensationrhs-b} it holds that
 \firstplayer needs
 more than 
 \mbox{$\roundlowerbound/(2\pebblesupper)$ rounds} to win
 the \mbox{$\pebblesupper$-pebble} game.   
 To complete the proof, we observe that if we choose
 $\pebblezero$ large enough, then for
 $
 \indexrhs > \pebblesupper \geq \pebbleslower \geq \pebblezero
 $
 it holds that
 \begin{align}
   \nonumber
   \frac{\roundlowerbound}{2\pebblesupper} 
   &= 
     \frac{1}{2\pebblesupper \ceiling{\log \pebblesupper}}
     \indexrhs^{\lfloor \pebbleslower/9 \rfloor / 
     (1 + \ceiling{\log \pebblesupper})} - \frac{1}{\pebblesupper} 
   &&
      \bigl[\text{by \refeq{eq:m-value} and  \refeq{eq:r-value}}\bigr]
   \\
   &\geq 
     \frac{6 \indexrhs^{1/5}}{\indexrhs^{1/6} \log \indexrhs}
     \indexrhs^{\lfloor\pebbleslower/9\rfloor/(1+\ceiling{\log
     \pebblesupper})-1/5} - \frac{1}{\pebblesupper}
   &&
      \bigl[\text{since
      $\pebblesupper \leq \indexrhs^{1/6}$}\bigr]
   \\
   \nonumber
   &\geq 
     \indexn^{\pebbleslower/(10\log \pebblesupper)-1/5}
     &&
     \bigl[\text{for large enough
     $\indexrhs$, $\pebblesupper$, and $\pebbleslower$.}\bigr]
 \end{align} 
 We now choose the constant $\pebblezero$ large so that all conditions
 encountered in the calculations above are valid. This establishes the
 lemma.
\end{proof}

\makeatletter{}%

\ifthenelse{\boolean{conferenceversion}}
{\section{\XOR Formulas over Pyramids}}
{\section{\XOR Formulas over High-Dimensional Pyramids}}
\label{sec:pyramids}

We now proceed to establish the $\pebblestd$-pebble game lower bound
stated in \reflem{lem:pyramids}.
Our \XOR formulas will be constructed over directed acyclic graphs
(DAGs) as described in the following definition.

\begin{definition}
  \label{def:xor-formula}
  Let $\digraph$ be 
  a DAG
  with
  sources~$S$ and a unique sink~$z$.
  The \XOR formula $\digraphxor$ contains one
  variable~$v$ for
  every vertex $v\in V(\digraph)$ and consists of the following clauses: 
  \begin{enumerate}[label=(\alph*)]
  \item
    \label{xor-clause-source}
    $(s,0)$ for every source $s\in S$,
  \item 
    \label{xor-clause-non-source}
    $(v,w_1,\ldots,w_\ell,0)$ for all 
    non-sources
    $v\in V(\digraph)\setminus S$ with in-neighbours
    $N^-(v)=\{w_1,\ldots,w_\ell\}$, 
  \item
    \label{xor-clause-sink}
    $(z,1)$ for the unique sink $z$.
  \end{enumerate}
\end{definition}

Note that the formula
$\digraphxor$
is always unsatisfiable, since all 
\ifthenelse{\boolean{conferenceversion}}
{sources}
{source vertices} 
are forced to~$0$ by~\ref{xor-clause-source},
which forces all other vertices to~$0$ in topological order
by~\ref{xor-clause-non-source},
contradicting~\ref{xor-clause-sink}
for the sink.
Incidentally, these formulas are somewhat similar to the 
\emph{pebbling formulas} defined in~\cite{BW01ShortProofs}, which have
been very useful in proof complexity (see the
survey~\cite{Nordstrom13SurveyLMCS} for more details). 
The difference is that pebbling formulas state that a vertex $v$ 
is true if and only if all of its in-neighbours are true,  
whereas $\digraphxor$ states that $v$ is true if and only if the
parity of the number of true in-neighbours is odd.

It is clear that one winning strategy for \firstplayer is to ask first
about the sink~$z$, for which \secondplayer has to answer~$1$ (or lose
immediately) and then about all the in-neighbours of the sink until the
answer for one vertex~$v$ is~$1$ (if there is no such vertex,
\secondplayer again loses immediately). At this point \firstplayer can
forget all other vertices and then ask about the
in-neighbours of~$v$ until a \mbox{$1$-labelled} vertex~$w$ is found,
and then continue in this way to trace a path of \mbox{$1$-labelled} vertices
backwards through the DAG until some source~$s$ is reached, 
which contradicts the requirement that $s$ should be
labelled~$0$. Formalizing this as an induction proof on the depth
of~$\digraph$ shows that if the in-degree is bounded, then
\firstplayer can win the pebble game on $\digraphxor$ with few pebbles
\ifthenelse{\boolean{conferenceversion}}
{as stated next.}
{as stated in the next lemma.}

\begin{lemma}
  \label{lem:UpperBoundPebblingDAG}
  Let $\digraph$ be a DAG with a unique sink and maximal
  in-degree~$\indegreenumber$.   
  Then
  \firstplayer wins the $(\indegreenumber+1)$-pebble game
  on~$\digraphxor$.  
\end{lemma}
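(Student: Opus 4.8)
The plan is to exhibit an explicit strategy for \firstplayer and verify it by induction on the \emph{depth} of vertices of~$\digraph$, where the depth of~$v$ is the length of a longest directed path from a source to~$v$ (well-defined and finite since $\digraph$ is acyclic, with sources having depth~$0$). The key auxiliary claim I would isolate is: \emph{for every vertex $v\in V(\digraph)$, starting from the position $\set{v\mapsto 1}$, \firstplayer can reach a winning position in the $(\indegreenumber+1)$-pebble game on~$\digraphxor$ within finitely many rounds.} Granting this claim the lemma follows at once: in the first round \firstplayer asks for the value of the unique sink~$z$; if \secondplayer answers~$0$ the clause $(z,1)$ is already falsified, and otherwise the position is $\set{z\mapsto 1}$ and the claim applies with $v=z$.

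To prove the claim I would argue by induction on the depth of~$v$. In the base case $v$ is a source~$s$, and then $\set{s\mapsto 1}$ already falsifies the source clause $(s,0)$, so this is a winning position and there is nothing to do. For the inductive step, $v$ is a non-source with in-neighbours $N^-(v)=\set{w_1,\ldots,w_\ell}$ where $\ell\leq\indegreenumber$, and every $w_i$ has strictly smaller depth than~$v$. From $\set{v\mapsto 1}$, \firstplayer queries $w_1,w_2,\ldots$ one at a time without deleting anything, so that after querying $w_1,\ldots,w_j$ the position has $j+1\leq\ell+1\leq\indegreenumber+1$ pebbles, which is legal. If \secondplayer answers~$0$ on all of $w_1,\ldots,w_\ell$, then the resulting position falsifies the clause $(v,w_1,\ldots,w_\ell,0)$, since $1+0+\cdots+0\equiv 1\not\equiv 0\pmod 2$, and \firstplayer has won. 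Otherwise let $i$ be least with \secondplayer answering~$1$ to~$w_i$; note the position then contains at most $i+1\leq\indegreenumber+1$ pebbles, and in the next round \firstplayer deletes every pebble except the one on~$w_i$ (this is allowed, and is in fact required if exactly $\indegreenumber+1$ pebbles are in play) and continues by the induction hypothesis applied to~$w_i$, whose position is now $\set{w_i\mapsto 1}$.

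Finiteness of the number of rounds is immediate, since the depth strictly decreases between successive recursive calls and each level contributes at most $\indegreenumber$ queries, so the whole play lasts at most about $\indegreenumber$ times the depth of~$\digraph$ rounds. The only two points I would take care over in the write-up are the bookkeeping of the pebble count — that it never exceeds $\indegreenumber+1$, which rests on $\ell\leq\indegreenumber$ together with the single additional pebble kept on the ``active'' $1$-labelled vertex — and the interaction with the round structure, namely that ``forgetting everything except~$w_i$'' is a legal deletion that is combined in the same round with the subsequent forced query. Neither is a genuine obstacle; this lemma is the easy direction, the real difficulty in \refsec{sec:pyramids} being the matching round lower bound in \reflem{lem:pyramids}\ref{item:immerman-b}.
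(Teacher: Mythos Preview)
Your proposal is correct and follows exactly the approach the paper sketches in the paragraph immediately preceding the lemma: query the sink, then trace a path of $1$-labelled vertices back to a source, formalized by induction on vertex depth. The paper leaves the actual formalization to the reader (``Formalizing this as an induction proof on the depth of~$\digraph$ shows\ldots''), and your write-up fills that in accurately, including the pebble-count bookkeeping.
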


As a warm-up for the proof of \reflem{lem:pyramids}, 
let us describe a very weak lower bound from \cite{Immerman.1981} 
for the complete binary tree of height~$h$ (with
edges directed from the leaves to the root), which we will
denote~$\binarytree_h$.
By the lemma above, \firstplayer wins the \mbox{$3$-pebble} game on
$\digraphtoxor(\binarytree_h)$ in $O(h)$~steps by 
propagating~$1$ from the root down to some leaf.  On the other hand,
\secondplayer has the freedom to decide on which path she answers~$1$.
Hence, she can safely 
respond~$0$ 
for a vertex~$v$   %
as long as there is some leaf with a pebble-free path leading to the
lowest pebble labelled~$1$
without passing~$v$.    %
In particular, 
if \secondplayer is asked about vertices at least $\ell$~layers below
the lowest pebbled vertex for which the answer~$1$ was given,
then she can answer~$0$ for $2^\ell-1$~queries.
It follows that the height~$h$ provides a
lower bound on the number of rounds \firstplayer needs to win the
game, even if he has an infinite amount of pebbles.
We remark that this proof in terms of pebble-free paths is somewhat reminiscent
of an argument by Cook~\cite{Cook74ObservationTimeStorageTradeOff} for
the so-called black pebble game corresponding to the pebbling formulas 
in~\cite{BW01ShortProofs} briefly discussed above. 

The downside of this lower bound is that the height is only
logarithmic in the number of vertices and thus too weak for us as we are
shooting for a lower bound of the order of~$n^{1/\log k}$.  
To get a better bound for the black pebble game Cook
instead considered so-called pyramid graphs as
in~\reffig{fig:2figsA}. These will not be sufficient to obtain strong
enough lower bounds for our pebble game, 
\ifthenelse{\boolean{conferenceversion}}
{however.}
{however.%
  \footnote{For readers knowledgeable in pebbling, we comment
    that the problem is that the open-path argument
    in~\cite{Cook74ObservationTimeStorageTradeOff}  
    does not work in a DAG-like setting for the XOR pebble game.
    To see this, consider a pyramid with a vertex row
    $u,v,w$
    and a second row 
    $p,q,r,s$
    immediately below   such that the edges are
    $(p,u),(q,u),(q,v),(r,v),(r,w),(s,w)$.
    Then if the values of  $u,w$ on the upper row
    and
    $p,s$ on the lower row are known, 
    there is still an open path via $(q,v)$ or $(r,v)$, which is enough
    for the black pebbling lower bound for pyramids in
    \cite{Cook74ObservationTimeStorageTradeOff}.
    But in the \XOR pebble game this means that $r$ and~$q$ are already
    fixed because of the \XOR constraints, and so there is no ``open
    path'' with unconstrained vertices.}}
Instead, following Immerman we consider a kind of high-dimensional
generalization of these graphs, for which the lower bound on the
number of rounds in the $k$-pebble game is still linear in the
height~$h$ while the number of vertices is roughly~$h^{\log k}$.

\begin{definition}[\cite{Immerman.1981}]
  \label{def:high-dim-pyramid}
  For $d\geq 1$ we define the
  \introduceterm{$(d+1)$-dimensional pyramid of height $h$}, denoted
  by~$\pyramid^d_h$, to be the following layered DAG.
  We let 
  $\rownumber$, $0\leq \rownumber\leq h$ 
  be the \emph{layer number} and set
  $\pyrquot{d}{\rownumber} 
  \defi \lfloor \rownumber/d\rfloor$ and
  $\pyrrem{d}{\rownumber} 
  \defi \rownumber{}\pmod d$. Hence,
  for any $\rownumber$ we have
  $\rownumber=
  \pyrquot{d}{\rownumber} \cdot d + 
  \pyrrem{d}{\rownumber}$. 
  For integers $x_i \geq 0 $ 
  the vertex set is
  \begin{subequations}
  \begin{equation}
     \ifthenelse{\boolean{conferenceversion}}
     {%
     \begin{split} 
      &V \bigl( \pyramid^d_h \bigr) =
      \big\{ (x_0,\ldots, x_{d-1},\rownumber) \mid 
        \rownumber \leq h ; \\
        &\quad x_i \leq \pyrquot{d}{\rownumber} + 1
      \text{ if $i \!<\! \pyrrem{d}{\rownumber}$}  ; %
      x_i \leq \pyrquot{d}{\rownumber}
      \text{ if $i \!\geq\! \pyrrem{d}{\rownumber}$}\big\} 
      \eqcomma
     \end{split} 
     }{
      V \bigl( \pyramid^d_h \bigr) =
      \Setdescr
      {(x_0,\ldots, x_{d-1},\rownumber) \, }
      {\, 
        \rownumber \leq h ; \,
        x_i \leq \pyrquot{d}{\rownumber} + 1
      \text{ if $i \!<\! \pyrrem{d}{\rownumber}$}  ; \,
      x_i \leq \pyrquot{d}{\rownumber}
      \text{ if $i \!\geq\! \pyrrem{d}{\rownumber}$} } 
      \eqcomma
     } 
  \end{equation}
  where we say that $\rownumber$ is the \introduceterm{layer} of the
  vertex~$(x_0,\ldots, x_{d-1},\rownumber)$.
  The edge set $E \bigl( \pyramid^d_h \bigr)$ consists of
  the pair of edges
\begin{equation}
  \begin{split} 
    \bigl(
    (x_0, \ldots, 
    x_{\pyrrem{d}{\rownumber}}, 
    \ldots, x_{d-1},
    \rownumber+1) ,
    (x_0,\ldots,
    x_{\pyrrem{d}{\rownumber}}, 
    \ldots, x_{d-1},
    \rownumber)
    \bigr)
    \eqcomma
    \\ 
    \bigl(
    (x_0,\ldots,
    x_{\pyrrem{d}{\rownumber} }+1, 
    \ldots,
    x_{d-1},
    \rownumber+1) ,
    (x_0,\ldots, 
    x_{\pyrrem{d}{\rownumber}}, 
    \ldots,
    x_{d-1},\rownumber) 
    \bigr)
  \end{split}
\end{equation}
\end{subequations}
for all vertices 
$(x_0,\ldots, x_{d-1},\rownumber{})\in V(\pyramid^d_h)$ and
layers~%
$\rownumber<h$,
so that every vertex in layer~$\rownumber$ has exactly two in-neighbours
from \mbox{layer $\rownumber+1$}. 
\end{definition}

\ifthenelse{\boolean{conferenceversion}}
{We refer the reader to 
  \reftwofigs{fig:2figsA}{fig:2figsB}
  for illustrations of
  $2$\nobreakdash-dimensional and
  $3$\nobreakdash-dimensional pyramids
  (where all the edges in the figures are assumed to be directed upwards).  
  The vertex~$(0,\ldots,0)$ at the top of the pyramid is the unique sink and
  all vertices at the bottom layer~$h$ 
  are sources.}
{It might be easier to parse 
  \refdef{def:high-dim-pyramid}
  by noting that the $(k d)$th layer of~$\pyramid^d_h$
  is a $d$\nobreakdash-dimensional cube of side length~$k$. 
  Intuitively, we then want to have incoming edges to each vertex~$u$ at the
  $(k  d)$th layer from all vertices~$v$ in the
  $d$\nobreakdash-dimensional cube of side length~$k+1$
  such that all coordinates
  of~$v$ are at distance $0$ or~$+1$ from the coordinates of~$u$. 
  This would give a fan-in larger than~$2$, however, and to avoid this
  we expand in one dimension at a time to obtain a sequence of
  multidimensional cuboids where in each consecutive cuboid the side
  length increases by one in one dimension,
  until $d$~layers later we have a complete cube with side length~$k+1$.
  We refer the reader to 
  \reffig{fig:2figsA}
  for an illustration of a
  $2$\nobreakdash-dimensional pyramid generated by stacking
  $1$\nobreakdash-dimensional cubes on top of one another and to
  \reffig{fig:2figsB}
  for a
  $3$\nobreakdash-dimensional pyramid generated from
  $2$\nobreakdash-dimensional cuboids
  (where all the edges in the figures are assumed to be directed upwards).  
  The vertex~$(0,\ldots,0)$ at the top of the pyramid is the unique sink and
  all vertices at the bottom layer~$h$ 
  are sources. Observe that it follows from the definition that
  $\Setsize{V \bigl( \pyramid^\dimension_h \bigr) }\leq (h+1)^{\dimension+1}$.}

\ifthenelse{\boolean{conferenceversion}}
{%
\makeatletter{}%

\begin{figure}%
  \centering
   \parbox{1.2in}{ %
    \resizebox{!}{4cm}{
    \begin{tikzpicture}
  [vertex/.style={circle,draw=black,fill=black,inner  sep=0pt,minimum  size=3pt},
   diredge/.style={-}] %

  \foreach \height/\xdist/\ydist in {10/.5/.5} {

  \foreach \y in {0,...,\height} {
      \pgfmathsetmacro{\diff}{\height-\y};
      \foreach \x in {0,...,\diff} {
            \node[vertex] (v_\x_\y) at (\x*\xdist-\diff*\xdist/2,\y*\ydist) {}; 
      }
  }
  \foreach \y in {1,...,\height} {
      \pgfmathsetmacro{\diff}{\height-\y};
      \foreach \x in {0,...,\diff} {
          \pgfmathtruncatemacro{\nextx}{1+\x};
          \pgfmathtruncatemacro{\prevy}{\y-1};
          \draw[diredge] (v_\x_\prevy) -- (v_\x_\y);
          \draw[diredge] (v_\nextx_\prevy) -- (v_\x_\y); 
      }
  }

  }%
  \end{tikzpicture}
  }%
         \caption{2D pyramid}%
         \label{fig:2figsA}}%
\hspace{1.4cm}
\begin{minipage}{1.2in}%
\resizebox{!}{4cm}{
  \begin{tikzpicture}
  [vertex/.style={circle,draw=black,fill=black,inner  sep=0pt,minimum  size=4pt},
   diredge/.style={-,shorten <=.5pt, shorten >=.5pt}]

  \foreach \height/\xdist/\ydist/\zdist in {8/1.8/1.2/.7} {

  \foreach \y in {0,...,\height} {
      \pgfmathtruncatemacro{\xmax}{(1+\height-\y)/2};
      \pgfmathtruncatemacro{\zmax}{(\height-\y)/2};
      \foreach \x in {0,...,\xmax} {
          \foreach \z in {0,...,\zmax} {
                  \node[vertex] (v_\x_\y_\z) at (\x*\xdist-\xmax*\xdist/2,\y*\ydist,\z*\zdist-\zmax*\zdist/2) {}; 
          }
      }
  }

  \foreach \y in {1,...,\height} {
      \pgfmathtruncatemacro{\xmax}{(1+\height-\y)/2};
      \pgfmathtruncatemacro{\zmax}{(\height-\y)/2};
      \foreach \x in {0,...,\xmax} {
          \foreach \z in {0,...,\zmax} {
              \pgfmathtruncatemacro{\Expanded}{mod(\height-\y,2)};
              \pgfmathtruncatemacro{\prevy}{\y-1};
              \draw[diredge] (v_\x_\prevy_\z) -- (v_\x_\y_\z); %
              \pgfmathtruncatemacro{\nextx}{\x+1-\Expanded};
              \pgfmathtruncatemacro{\nextz}{\z+\Expanded};
              \draw[diredge] (v_\nextx_\prevy_\nextz) -- (v_\x_\y_\z); %
          }
      }
  }

  }%
  \end{tikzpicture}
}%
         \caption{3D pyramid}%
         \label{fig:2figsB}%
       \end{minipage}%
     \end{figure}%

}
{%
\makeatletter{}%

\newcommand{\verticalspacing}{\vspace{3mm}}

\begin{figure}[tp]%
  \centering
  \subfigure[2D pyramid]%
  {
    \centering
    \label{fig:2figsA}%
    \begin{minipage}{7.5cm}
      \resizebox{!}{7.2cm}{
        \begin{tikzpicture}%
          [vertex/.style={circle,draw=black,fill=black,%
            inner sep=0pt,minimum  size=3pt},%
          diredge/.style={-}] %

          \foreach \height/\xdist/\ydist in {10/.5/.5} {

            \foreach \y in {0,...,\height} {
              \pgfmathsetmacro{\diff}{\height-\y};
              \foreach \x in {0,...,\diff} {
                \node[vertex] (v_\x_\y) at (\x*\xdist-\diff*\xdist/2,\y*\ydist) {}; 
              }
            }
            \foreach \y in {1,...,\height} {
              \pgfmathsetmacro{\diff}{\height-\y};
              \foreach \x in {0,...,\diff} {
                \pgfmathtruncatemacro{\nextx}{1+\x};
                \pgfmathtruncatemacro{\prevy}{\y-1};
                \draw[diredge] (v_\x_\prevy) -- (v_\x_\y);
                \draw[diredge] (v_\nextx_\prevy) -- (v_\x_\y); 
              }
            }
            
          }%
        \end{tikzpicture}
      }%
    \verticalspacing
    \end{minipage}
  }
  \hfill
  \subfigure[3D pyramid]
  {
    \centering
    \label{fig:2figsB}%
    \begin{minipage}{7.5cm}
      \hspace{.8cm}
      \resizebox{!}{7.2cm}{
        \begin{tikzpicture}
          [vertex/.style={circle,draw=black,fill=black,%
            inner sep=0pt,minimum  size=4pt},%
          diredge/.style={-,shorten <=.5pt, shorten >=.5pt}]

          \foreach \height/\xdist/\ydist/\zdist in {8/1.8/1.2/.7} {

            \foreach \y in {0,...,\height} {
              \pgfmathtruncatemacro{\xmax}{(1+\height-\y)/2};
              \pgfmathtruncatemacro{\zmax}{(\height-\y)/2};
              \foreach \x in {0,...,\xmax} {
                \foreach \z in {0,...,\zmax} {
                  \node[vertex] (v_\x_\y_\z) at (\x*\xdist-\xmax*\xdist/2,\y*\ydist,\z*\zdist-\zmax*\zdist/2) {}; 
                }
              }
            }

            \foreach \y in {1,...,\height} {
              \pgfmathtruncatemacro{\xmax}{(1+\height-\y)/2};
              \pgfmathtruncatemacro{\zmax}{(\height-\y)/2};
              \foreach \x in {0,...,\xmax} {
                \foreach \z in {0,...,\zmax} {
                  \pgfmathtruncatemacro{\Expanded}{mod(\height-\y,2)};
                  \pgfmathtruncatemacro{\prevy}{\y-1};
                  \draw[diredge] (v_\x_\prevy_\z) -- (v_\x_\y_\z); %
                  \pgfmathtruncatemacro{\nextx}{\x+1-\Expanded};
                  \pgfmathtruncatemacro{\nextz}{\z+\Expanded};
                  \draw[diredge] (v_\nextx_\prevy_\nextz) -- (v_\x_\y_\z); %
                }
              }
            }
            
          }%
        \end{tikzpicture}
      }%
     \verticalspacing
    \end{minipage}%
  }
  \caption{Examples of high-dimensional pyramids.}
  \label{fig:high-dim-pyramids}
\end{figure}
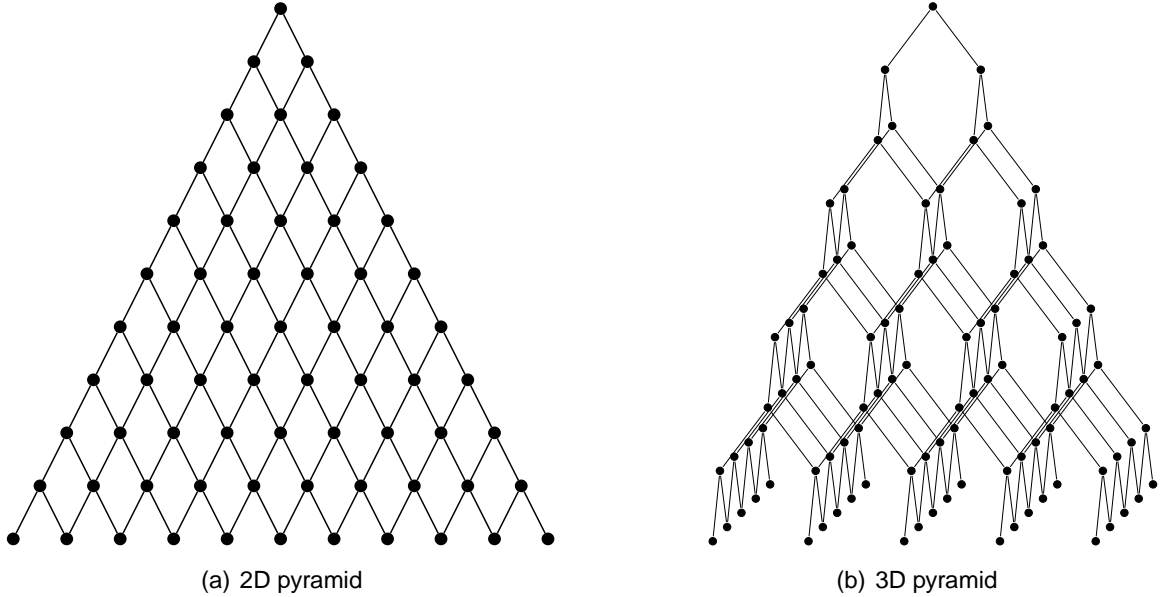%

}

As high-dimensional pyramids have in-degree~$2$,
\reflem{lem:UpperBoundPebblingDAG} implies that \firstplayer wins the
\mbox{$3$-pebble} game on~$\pyramid^d_h$.
Recall that, as discussed in the proof sketch of the lemma,
\firstplayer starts his winning strategy in the \mbox{$3$-pebble} game
by pebbling the sink of the pyramid and its two in-neighbours.
One of them has to be labelled~$1$.  Then he picks up the two other
pebbles and pebbles the two in-neighbours of the vertex marked
with~$1$ and so on. Continuing this strategy, he is able to ``move''
the~$1$ all the way to the bottom, reaching a contradiction, in a
number of rounds that is linear in the height of the pyramid.  This
strategy turns out to be nearly optimal in the sense that in order to
move a $1$ from the top to the bottom in $\pyramid^d_h$, 
as long as the total number of available pebbles is at
most~$2^\dimension$
it makes no sense for \firstplayer at any point in the game to pebble
a vertex that is $\dimension$ or more levels away from the lowest
level containing a pebble.

The next lemma states a key property of pyramids in this regard.  
In order to state it, we need to make a definition.

\begin{definition}
  \label{def:consistent-labelling}
  We refer to a partial assignment $\labelling$ of Boolean values to
  the vertices of a DAG~$\digraph$ as a \introduceterm{labelling} or
  \introduceterm{marking} of~$\digraph$.
  We say that $\labelling$ is
  \introduceterm{consistent} if no clause of
  type~\ref{xor-clause-non-source}
  or~\ref{xor-clause-sink}
  in the XOR formula~$\digraphxorarg{\digraph}$ in 
  \refdef{def:xor-formula}
  is falsified by~$\labelling$.
  \ifthenelse{\boolean{conferenceversion}}
  {}
  {We also say that $\labelling'$ is 
  \introduceterm{consistent with~$\labelling$}     
  if
  $\labelling \cup \labelling'$
  is a consistent labelling of~$\digraph$.}
\end{definition}

That is, a consistent labelling does not violate any constraint on any
non-source vertex, but 
\ifthenelse{\boolean{conferenceversion}}
{source} 
{source vertex} 
constraints~\ref{xor-clause-source} 
may be falsified.
Such labellings are easy to find for high-dimensional pyramids.

\begin{lemma}%
[\cite{Immerman.1981}]
  \label{lem:ImmermansMainTechnicalLemma} 
  Let $\labelling$ be any consistent labelling of all vertices in a
  pyramid~$\pyramid^\dimension_h$ from layer~$0$ to
  layer~$\rownumber$.
  Then for every set $\setS$ of $2^\dimension{}-1$ vertices on or below
  layer $\rownumber+\dimension$ there is a consistent labelling of 
  the
  entire pyramid that extends $\labelling$ and labels all vertices in
  $\setS$ with~$0$. 
\end{lemma}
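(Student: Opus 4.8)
The plan is to first recast the lemma as a statement in linear algebra over~$\GF{2}$ and then prove that statement by induction on the dimension parameter~$\dimension$. For the reduction, note that since $\labelling$ is consistent and assigns the sink~$z$ (on layer~$0\le\rownumber$) the value~$1$, the only clauses of $\digraphxorarg{\pyramid^\dimension_h}$ not already decided by~$\labelling$ are the non-source constraints (clauses of type~\ref{xor-clause-non-source} in \refdef{def:xor-formula}) for vertices on layers $\rownumber,\ldots,h-1$. Thus a consistent extension of~$\labelling$ to the whole pyramid is the same as a $\{0,1\}$-labelling~$\tilde\labelling$ of $\pyramid^\dimension_h$ that agrees with~$\labelling$ on layers $0,\ldots,\rownumber$ and satisfies $\tilde\labelling(v)=\bigoplus_{w\in N^-(v)}\tilde\labelling(w)$ for every non-source~$v$ on layers $\rownumber,\ldots,h-1$ (each such~$v$ having exactly two in-neighbours one layer below). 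The set~$A$ of such labellings is a non-empty affine subspace of $\{0,1\}^{V(\pyramid^\dimension_h)}$ — non-empty because one may propagate~$\labelling$ downward, filling in each layer's single free coordinate hyperplane arbitrarily — and it is a coset of the linear space~$\mathcal W$ of all labellings of $\pyramid^\dimension_h$ that vanish on layers $0,\ldots,\rownumber$ and satisfy every non-source constraint (those constraints being homogeneous). We want a point of~$A$ vanishing on~$\setS$; since restriction to~$\setS$ maps~$A$ onto a coset of $\mathcal W|_\setS\subseteq\{0,1\}^\setS$, it suffices to show $\mathcal W|_\setS=\{0,1\}^\setS$. As a subspace of $\{0,1\}^\setS$ is the full space iff no nonzero linear functional annihilates it, and functionals on $\{0,1\}^\setS$ correspond to non-empty subsets $T\subseteq\setS$, this is equivalent to the \emph{parity statement}: for every non-empty $T\subseteq\setS$ there is $\tilde x\in\mathcal W$ with $\bigoplus_{v\in T}\tilde x(v)=1$.

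Before proving the parity statement I would record why its parameters are the right ones. A labelling in~$\mathcal W$ is pinned down by its values on the sources (layer~$h$) via $\tilde x(v)=\bigoplus_{s}\bigl(\#\{\text{paths }s\rightsquigarrow v\}\bmod 2\bigr)\,\tilde x(s)$, subject to the conditions that this propagation output~$0$ on every vertex of layers $0,\ldots,\rownumber$. The key geometric fact about $\pyramid^\dimension_h$ is that $\dimension$ consecutive layers form one complete round of coordinate expansions, turning a $\dimension$-dimensional grid of side~$k$ into one of side~$k+1$; consequently, from any vertex~$u$ on layer~$\rownumber$ there is, within layers~$\rownumber$ through~$\rownumber+\dimension$, exactly one directed path to each of the $2^\dimension$ vertices of the unit sub-cube directly beneath~$u$ on layer~$\rownumber+\dimension$. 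For $\tilde x\in\mathcal W$ this forces $\bigoplus$ of those $2^\dimension$ corner values to equal $\tilde x(u)=0$, so any $2^\dimension$ lattice-aligned corners on layer~$\rownumber+\dimension$ are constrained to even parity. The parity statement asserts that for sets of size at most $2^\dimension-1$ lying on layer~$\rownumber+\dimension$ or below, this is the only obstruction.

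To prove the parity statement I would induct on~$\dimension$, the claim being uniform in~$h$ and~$\rownumber$. For $\dimension=1$, $\pyramid^1_h$ is the ordinary two-dimensional pyramid, $\setS=T=\{v\}$ for a single vertex~$v$ on some layer~$\rownumber+i$ with $i\ge 1$, and a labelling in~$\mathcal W$ is built layer by layer by discrete antidifferencing (each layer below~$\rownumber$ being the prefix-sum of the layer above, up to one free bit); one checks that $\tilde x(v)$ is an affine function of the $i$ free bits of layers $\rownumber+1,\ldots,\rownumber+i$ in which the bit belonging to layer~$\rownumber+i$ has coefficient~$1$, so it can be set to~$1$. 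For the inductive step the plan is to descend one full round of $\dimension$ layers below~$\rownumber$: on layer~$\rownumber+1$ a labelling in~$\mathcal W$ is automatically constant along the coordinate expanded there, so that layer effectively carries only a $(\dimension-1)$-dimensional cube of freedom; one then projects the configuration~$T$ accordingly and invokes the parity statement for $\pyramid^{\dimension-1}$, the slack between $2^\dimension-1$ and $2^{\dimension-1}-1$ being supplied by the extra coordinate expansion a full round picks up. Feeding this back through the reduction, applied to each non-empty $T\subseteq\setS$, yields $\mathcal W|_\setS=\{0,1\}^\setS$ and hence the lemma.

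The step I expect to be the main obstacle is precisely this induction on~$\dimension$: the coordinate hyperplanes of $\pyramid^\dimension_h$ are not literally copies of $\pyramid^{\dimension-1}_{h'}$, since the recurring expansions of the same coordinate at later layers re-couple them, so the descent to dimension $\dimension-1$ has to be carried out on the affine constraint system defining~$\mathcal W$ rather than on the graph itself, with careful accounting of how many independent degrees of freedom each of the $\dimension$ layers between~$\rownumber$ and~$\rownumber+\dimension$ contributes. A more computational alternative that avoids the induction would be to prove directly that the functionals $\tilde x\mapsto\tilde x(v)$, $v\in T$, are linearly independent modulo the equations defining~$\mathcal W$ — essentially a rank computation for the path-parity matrix of the pyramid on the relevant layers — but in either formulation establishing this independence, equivalently the surjectivity of the restriction $\mathcal W\to\{0,1\}^\setS$, is where the real work lies.
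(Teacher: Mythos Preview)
Your linear-algebra reformulation is correct and clean: the consistent extensions of~$\labelling$ form an affine coset of the linear space~$\mathcal W$, and asking that some extension vanishes on~$\setS$ is exactly asking that the restriction map $\mathcal W\to\{0,1\}^\setS$ be surjective, equivalently that no non-empty $T\subseteq\setS$ has $\bigoplus_{v\in T}\tilde x(v)=0$ for all $\tilde x\in\mathcal W$. So far so good.

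The gap is that you do not actually prove this parity statement. Your inductive step on~$\dimension$ is only a sketch: you note that on layer~$\rownumber+1$ elements of~$\mathcal W$ are constant along the expanding coordinate, giving a $(\dimension-1)$-dimensional degree of freedom, and you propose to ``project $T$ accordingly'' and invoke the $(\dimension-1)$-case. But the set~$T$ may have up to $2^\dimension-1$ elements while the inductive hypothesis only handles $2^{\dimension-1}-1$, and your sentence about ``the slack\ldots being supplied by the extra coordinate expansion'' does not say how the projection halves~$|T|$ or what happens to vertices of~$T$ sitting on intermediate layers. You yourself flag this step as ``the main obstacle'' and ``where the real work lies'' --- which is accurate, and means the proposal as it stands is not a proof.

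The paper's argument is quite different in structure. Rather than inducting on~$\dimension$, it introduces \emph{restricted frustums} $\pyramid^\dimension_{\rownumber,h}[I_{\mathrm{lo}},I_{\mathrm{hi}},\alpha]$ in which some coordinate directions are frozen, and inducts on the number~$q=|I_{\mathrm{lo}}|+|I_{\mathrm{hi}}|$ of frozen coordinates, starting from $q=\dimension$ (trivial) and working down to $q=0$. The key move in the inductive step is to cut a \emph{wedge} $\mathcal W(j,a,\rownumber+1)$ out of the frustum at a carefully chosen position~$a$ so that each of the two remaining pieces (which are $(\dimension,q+1)$-frustums) contains at most $(|\setS|-1)/2\le 2^{\dimension-q-1}-1$ vertices of~$\setS$; one then labels the entire wedge by~$0$ (using the one degree of freedom per line that the expanding coordinate provides on layer~$\rownumber+1$) and recurses on the two sides. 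This bisection-by-wedge is what makes the halving of~$|\setS|$ precise --- the analogue in your language would be an explicit block decomposition of~$\mathcal W$ that separates the coordinates of~$\setS$ --- and it is exactly the mechanism your sketch is missing.
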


To get some intuition why \reflem{lem:ImmermansMainTechnicalLemma}
holds, note that
the $\dimension$-dimensional pyramids are constructed in such a way
that they locally look like binary trees.  
In particular, every vertex $v\in V(\pyramid^\dimension_h)$ together
with all its predecessors at distance at most~$\dimension$ form a
complete binary tree.  
By the same argument as for the 
binary trees above, it follows that
if $v$ is labelled with~$1$, \secondplayer can safely answer~$0$
up to $2^\dimension-1$ times when asked about vertices
$\dimension$~layers below~$v$.    
\ifthenelse{\boolean{conferenceversion}}
{However, the full proof 
  of \reflem{lem:ImmermansMainTechnicalLemma}
  is more challenging and requires some quite
  subtle reasoning. We refer the reader to~\cite{Immerman.1981} 
  (or the upcoming full-length version of this paper) for the details.}  
{However, the full proof
  of \reflem{lem:ImmermansMainTechnicalLemma}
  is more challenging and requires some quite
  subtle reasoning. For the convenience of the reader we now present a
  slightly modified version of the proof in~\cite{Immerman.1981}
  with notation and terminology adapted to this paper.

\makeatletter{}%

\newcommand{\veczerodminusone}[1]{\vec{#1}}

\newcommand{\dimindex}{j}
\newcommand{\vala}{a}
\newcommand{\dimValLayerWedge}[3]{\ensuremath{\mathcal W({#1},{#2},{#3})}}
\newcommand{\dimW}{\dimindex}
\newcommand{\valW}{\vala}
\newcommand{\layerW}{\layernumber}

\newcommand{\dimValLayerLeftSlice}[3]%
    {\ensuremath{\mathcal{S}_{\mathsf L}({#1},{#2},{#3})}}
\newcommand{\dimValLayerRightSlice}[3]%
    {\ensuremath{\mathcal{S}_{\mathsf R}({#1},{#2},{#3})}}
\newcommand{\dimS}{j}
\newcommand{\valS}{a}
\newcommand{\layerS}{\layernumber}
\newcommand{\chosenset}{T}
\newcommand{\coordinaterestriction}{\alpha}
\newcommand{\coordinaterestrictionsize}{q}
\newcommand{\lslice}{left slice\xspace}
\newcommand{\lslices}{left slices\xspace}
\newcommand{\rslice}{right slice\xspace}
\newcommand{\rslices}{right slices\xspace}
\newcommand{\initiallabelling}{\widetilde{\labelling}}

\newcommand{\pyramidalxor}{pyramidal~\XOR formula\xspace}
\newcommand{\frustum}{frustum\xspace}
\newcommand{\frustums}{frustums\xspace}
\newcommand{\topex}{top-expanding\xspace}
\newcommand{\indsetlo}{I_{\text{\upshape lo}}}
\newcommand{\indsethi}{I_{\text{\upshape hi}}}
\newcommand{\maxval}{\mathit{slen}}
\newcommand{\maxincr}{\Delta_\mathit{slen}}
\newcommand{\bluecross}{\tikz{\node[cross out,draw=blue,ultra thick,%
    inner sep=0pt,minimum  size=5 pt] at (0,0){ };}}
\DeclareRobustCommand{\bluecrossrobust}{\tikz{\node[cross out,draw=blue,ultra thick,%
    inner sep=0pt,minimum  size=5 pt] at (0,0){ };}}
\DeclareRobustCommand{\whitecirclerobust}{
           \mbox{\tikz{\node[circle,draw=black,fill=white,inner
           sep=0pt,minimum  size=5pt] at (0,0) {};}}}

To formalize the intuitive argument above, we need  some additional
notation and technical definitions.
Let us use the shorthand 
$
\veczerodminusone{x}
= (x_0, \ldots, x_{\dimension-1})
$.
For a pyramid~$\pyramid^\dimension_\pyheight$, a coordinate
\mbox{$\dimindex\in\set{0,\ldots,\dimension-1}$},
and a layer~$\layernumber$, 
we let 
\begin{equation}
  \label{eq:maxval-1}
  \maxval(\dimindex,\layernumber) 
  \defi
  \max
  \Setdescr{x_\dimindex}{(\veczerodminusone{x}, \layernumber)
    \in
    V\bigl( \pyramid^\dimension_\pyheight \bigr)}
\end{equation}
be the side length in the $\dimindex$th dimension of the cuboid in
layer~$\layernumber$,  \ie
the maximal value that can be achieved in the 
$\dimindex$th~coordinate in layer~$\layernumber$,
and for $\layernumberalt \geq \layernumber$ we write
\begin{equation}
  \label{eq:maxval-2}
  \maxincr(\dimindex,\layernumber,\layernumberalt) 
  \defi
  \maxval(\dimindex,\layernumberalt) - 
  \maxval(\dimindex,\layernumber) 
\end{equation}
to denote how much the cuboids in 
$\pyramid^\dimension_\pyheight$
grow in the \mbox{$\dimindex$th dimension}
in betwen layers~$\rownumber$ and~$\rownumberalt$.

We define
the \introduceterm{\frustum}~$\pyramid^\dimension_{\layernumber,\pyheight}$ 
to be the subgraph of~$\pyramid^\dimension_\pyheight$ 
induced on the set
\mbox{$
  \Setdescr
  {\bigl( \veczerodminusone{x},\layernumberalt \bigr)}
  {\layernumberalt \geq \layernumber}
  $}
of all vertices on layer~$\layernumber$ and below.
We say that 
the \introduceterm{wedge}
\introduceterm{\dimValLayerWedge{\dimW}{\valW}{\layerW}} is the
subgraph of~$\pyramid^\dimension_\pyheight$ induced on the vertices
$(x_0, \dots, x_{\dimW-1}, \valW, x_{\dimW+1}, \ldots,
x_{\dimension-1}, \layerW)$ 
with fixed $\dimW$th coordinate $x_{\dimW} = \valW$ 
together with all predecessors of these vertices.
That is, the vertex set of
$\dimValLayerWedge{\dimW}{\valW}{\layerW}$
is
\begin{equation}
  \label{eq:wedge-vertex-set}
  V \bigl( \dimValLayerWedge{\dimW}{\valW}{\layerW} \bigr) =  
  \Setdescr{ 
    \bigl(
    \veczerodminusone{x},
    \layernumberalt \bigr)
    \in V\bigl( \pyramid^\dimension_\pyheight \bigr)
  }
  {
    \layernumberalt\geq \layerW, \valW
    \leq x_{\dimW} \leq \valW +
    \maxincr(\dimW,\layerW,\layernumberalt)
  }  
  \eqperiod
\end{equation}
An important part in our proof will be played by subgraphs obtained by 
deleting wedges from \frustums.  We define these subgraphs next.

Fix a \frustum $\pyramid^\dimension_{\layernumber,\pyheight}$, two
disjoint subsets of coordinates $\indsetlo,\indsethi \subseteq
\{0,\ldots,\dimension-1\}$, and a mapping 
\mbox{$\coordinaterestriction \colon \indsetlo\cup\indsethi\to \Nzero$}
such that
$\coordinaterestriction(\dimindex)\leq
\maxval(\dimindex,\layernumber)$. 
We let the \introduceterm{restricted \frustum}
$\pyramid^\dimension_{\layernumber,\pyheight}%
[\indsetlo,\indsethi,\coordinaterestriction]$ 
be the subgraph of the \frustum~$\pyramid^\dimension_{\layernumber,\pyheight}$
induced on the vertex set
\begin{equation}
  \label{eq:restricted-frustum-vertex-set}
  \Setdescr{
    \bigl( \veczerodminusone{x},\layernumberalt \bigr)
    \in
    V(\pyramid^\dimension_{\layernumber,\pyheight})
  }
  {\,
    \forall
    \dimindex\!\in\!\indsetlo: x_\dimindex>
    \coordinaterestriction(\dimindex) +
    \maxincr(\dimindex,\layernumber,\layernumberalt)
    ;
    \,
    \forall \dimindex\!\in\!\indsethi: x_\dimindex <
    \coordinaterestriction(\dimindex)}
\end{equation}
where no coordinates in
$\indsetlo\cup\indsethi$ are 
expanded, \ie
the cuboids will not grow in size in dimensions
$\indsetlo \union \indsethi$ as we move down the layers. 
For dimensions in $\indsethi$ the coordinate set stays the same, and
for dimensions in $\indsetlo$ the coordinate set shifts by an
additive~$+1$ every time the pyramid graph grows in this direction. 
We say that a layered directed graph is a
\introduceterm{$(\dimension,\coordinaterestrictionsize)$-\frustum{}} 
if it is  a
restricted \frustum $\pyramid^\dimension_{\layernumber,\pyheight}%
[\indsetlo,\indsethi,\coordinaterestriction]$ where
$\coordinaterestrictionsize$ coordinates are restricted, \ie
$\setsize{\indsetlo\disjointunion\indsethi} = \coordinaterestrictionsize$.

To see how restricted \frustums are obtained by deleting wedges from
\frustums, note that after removing the wedge
$\dimValLayerWedge{\dimindex}{\valW}{\layernumber}$ 
from the \frustum
$\pyramid^\dimension_{\layernumber,\pyheight}$, 
the remaining graph is the disjoint union of
the restricted \frustums
$\pyramid^\dimension_{\layernumber,\pyheight}%
[\set{\dimindex},\emptyset,\set{\dimindex\mapsto\valW}]$ 
and
$\pyramid^\dimension_{\layernumber,\pyheight}%
[\emptyset,\set{\dimindex},\set{\dimindex\mapsto\valW}]$.  
\reffig{fig:pyramidWedgeFrustum} 
shows a 3D pyramid with a 
\mbox{wedge \tikz{\node[circle,draw=black,fill=white,inner
    sep=0pt,minimum  size=5pt] at (0,0) {};}} and a restricted
frustum~\bluecross.  

\makeatletter{}%

\begin{figure}[t]%
  \hspace{-1.5cm}
  \resizebox{14cm}{!}{\centering%
\makeatletter{}%

\newcommand{\drawthreedimpyramid}[5]{ 
\tdplotsetmaincoords{180}{180} 
  \tdplotsetrotatedcoords{0}{20}{20}
\begin{tikzpicture}[rotate=20,tdplot_main_coords,
  layerlabel/.style={tdplot_rotated_coords},
  vertex/.style={tdplot_rotated_coords,circle,draw=black,fill=black,inner  sep=0pt,minimum  size=#5 pt},
  wedgevertex/.style={tdplot_rotated_coords,circle,draw=black,fill=white,inner  sep=0pt,minimum  size=#5 pt},
  slicevertex/.style={tdplot_rotated_coords,cross out,draw=blue,ultra thick,inner  sep=0pt,minimum  size=#5 pt},
  frustumvertex/.style={tdplot_rotated_coords,cross out,draw=blue,ultra thick,inner  sep=0pt,minimum  size=#5 pt},
   diredge/.style={-,shorten <=.5pt, shorten >=.5pt,thin}]
  \foreach \height/\xdist/\ydist/\zdist in {#1/#2/#3/#4} {
    \node[layerlabel] at (-6.5,0.2+11*\ydist,0) {\Large layer};
    \node[layerlabel] at (-6.5,0.2+10*\ydist,0) {\Large $0$};
    \node[layerlabel] at (-6.5,0.2+9*\ydist,0) {\Large $1$};
    \node[layerlabel] at (-6.5,0.2+7.5*\ydist,0) {\Large $\cdots$};
    \node[layerlabel] at (-6.5,0.2+6*\ydist,0) {\Large $\layernumber$};
    \node[layerlabel] at (-6.5,0.2+5*\ydist,0) {\Large $\layernumber+1$};
    \node[layerlabel] at (-6.5,0.2+4*\ydist,0) {\Large $\layernumber+\dimension$};
    \node[layerlabel] at (-6.5,0.2+3*\ydist,0) {\Large $\layernumber+\dimension+1$};
    \node[layerlabel] at (-6.5,0.2+1.5*\ydist,0) {\Large $\cdots$};
    \node[layerlabel] at (-6.5,0.2+0,0) {\Large $\pyheight$};

  \newboolean{wedge}
  \newboolean{slice}
  \newboolean{frustum}
  \foreach \y in {0,...,\height} {
      \pgfmathtruncatemacro{\xmax}{(1+\height-\y)/2};
      \pgfmathtruncatemacro{\zmax}{(\height-\y)/2};
      \foreach \x in {0,...,\xmax} {
          \foreach \z in {0,...,\zmax} {
              \setboolean{wedge}{false}
              \ifthenelse{\y=5 \and \x=1}{\setboolean{wedge}{true}}{}
              \ifthenelse{\y=4 \and \x=1}{\setboolean{wedge}{true}}{}
              \ifthenelse{\y=3 \and \x>0 \and \x<3}{\setboolean{wedge}{true}}{}
              \ifthenelse{\y=2 \and \x>0 \and \x<3}{\setboolean{wedge}{true}}{}
              \ifthenelse{\y=1 \and \x>0 \and \x<4}{\setboolean{wedge}{true}}{}
              \ifthenelse{\y=0 \and \x>0 \and \x<4}{\setboolean{wedge}{true}}{}
              \setboolean{slice}{false}
              \ifthenelse{\y=5 \and \x=2}{\setboolean{slice}{true}}{}
              \ifthenelse{\y=4 \and \x=2}{\setboolean{slice}{true}}{}
              \ifthenelse{\y=3 \and \x=3}{\setboolean{slice}{true}}{}
              \ifthenelse{\y=2 \and \x=3}{\setboolean{slice}{true}}{}
              \ifthenelse{\y=1 \and \x=4}{\setboolean{slice}{true}}{}
              \ifthenelse{\y=0 \and \x=4}{\setboolean{slice}{true}}{}
              \setboolean{frustum}{false}
              \ifthenelse{\y=5 \and \x=2}{\setboolean{frustum}{true}}{}
              \ifthenelse{\y=4 \and \x=2}{\setboolean{frustum}{true}}{}
              \ifthenelse{\y=3 \and \x=3}{\setboolean{frustum}{true}}{}
              \ifthenelse{\y=2 \and \x=3}{\setboolean{frustum}{true}}{}
              \ifthenelse{\y=1 \and \x=4}{\setboolean{frustum}{true}}{}
              \ifthenelse{\y=0 \and \x=4}{\setboolean{frustum}{true}}{}
              \ifthenelse{\y=5 \and \x=3}{\setboolean{frustum}{true}}{}
              \ifthenelse{\y=4 \and \x=3}{\setboolean{frustum}{true}}{}
              \ifthenelse{\y=3 \and \x=4}{\setboolean{frustum}{true}}{}
              \ifthenelse{\y=2 \and \x=4}{\setboolean{frustum}{true}}{}
              \ifthenelse{\y=1 \and \x=5}{\setboolean{frustum}{true}}{}
              \ifthenelse{\y=0 \and \x=5}{\setboolean{frustum}{true}}{}
                \ifthenelse{\boolean{wedge}}
                {
                  \node[wedgevertex] (v_\x_\y_\z) at (\x*\xdist-\xmax*\xdist/2,\y*\ydist,\z*\zdist-\zmax*\zdist/2) { }; 
               }
               {
               \ifthenelse{\boolean{frustum}}{
                  \node[frustumvertex] (v_\x_\y_\z) at (\x*\xdist-\xmax*\xdist/2,\y*\ydist,\z*\zdist-\zmax*\zdist/2) { }; 
               }
               {
                  \node[vertex] (v_\x_\y_\z) at (\x*\xdist-\xmax*\xdist/2,\y*\ydist,\z*\zdist-\zmax*\zdist/2) { }; 
               }
               }
          }
      }
  }
  \foreach \y in {1,...,\height} {
      \pgfmathtruncatemacro{\xmax}{(1+\height-\y)/2};
      \pgfmathtruncatemacro{\zmax}{(\height-\y)/2};
      \foreach \x in {0,...,\xmax} {
          \foreach \z in {0,...,\zmax} {
              \pgfmathtruncatemacro{\Expanded}{mod(\height-\y,2)};
              \pgfmathtruncatemacro{\prevy}{\y-1};
              \draw[diredge] (v_\x_\prevy_\z) -- (v_\x_\y_\z); %
              \pgfmathtruncatemacro{\nextx}{\x+1-\Expanded};
              \pgfmathtruncatemacro{\nextz}{\z+\Expanded};
              \draw[diredge] (v_\nextx_\prevy_\nextz) -- (v_\x_\y_\z); %
          }
      }
  }

  }%
  \end{tikzpicture}
  }
  \drawthreedimpyramid{10}{1.8}{1.4}{0.7}{5} %
}
     \caption{Pyramid with 
       wedge \dimValLayerWedge{0}{2}{\layernumber+1}
        \whitecirclerobust{}
       and
       restricted frustum 
       $\pyramid^2_{\layernumber+1,10}[\emptyset,\set{0},\set{0\mapsto 2}]$
       \bluecrossrobust{}.
     }%
  \label{fig:pyramidWedgeFrustum}
\end{figure}

We prove  \reflem{lem:ImmermansMainTechnicalLemma} by inductively
cutting the pyramid into a wedge and restricted \frustums to the left and
right of this wedge.  It will be convenient to focus on
$(\dimension,\coordinaterestrictionsize)$-\frustums
which grow in the dimensions corresponding to the topmost
\mbox{$\dimension - \coordinaterestrictionsize$ layers} 
(as the one in
\reffig{fig:pyramidWedgeFrustum}).  
More formally, we say that an 
$(\dimension,\coordinaterestrictionsize)$-\frustum
$\pyramid^\dimension_{\layernumber,\pyheight}%
[\indsetlo,\indsethi,\coordinaterestriction]$  
is \introduceterm{\topex} if 
\begin{equation}
  \label{eq:top-expanding-condition}
  \setdescr{(\layernumber + \dimindex) \bmod \dimension}{0\leq \dimindex
    \leq \dimension - 1 - \coordinaterestrictionsize}\cap
  (\indsetlo\cup\indsethi)
  =
  \emptyset
  \eqperiod   
\end{equation}
As we have done for digraphs with unique sinks in
\refdef{def:xor-formula}, we identify with each (restricted) frustum
$\pyramid$ the corresponding XOR formula $\digraphxorarg{\pyramid}$
containing all clauses given in
\refdef{def:xor-formula}\ref{xor-clause-source}
and~\ref{xor-clause-non-source}.   
We do not include hard-coded labels on the sources at the top layer
as in~\ref{xor-clause-sink}
but instead will always provide a labelling of that layer.%
\footnote{For readers more familiar with proof complexity language,
  our subgraphs correspond to formulas obtained by applying
  restrictions to 
  $\Digraphxorarg{\pyramid^\dimension_{\pyheight}}$.}

We now state our inductive claim.
\Reflem{lem:ImmermansMainTechnicalLemma} follows immediately once this
claim has been established, as the subgraph of a pyramid
$\pyramid^\dimension_{\pyheight}$ on or below layer~$\layernumber$ is
an (unrestricted) \topex 
\mbox{$(\dimension,0)$-\frustum $\pyramid^\dimension_{\layernumber,\pyheight}$} 
and, in particular,
$\Digraphxorarg{\pyramid^\dimension_{\pyheight}}$ with all vertices on
or above layer~$\layernumber$ consistently labelled is equivalent to
$\Digraphxorarg{\pyramid^\dimension_{\layernumber,\pyheight}}$  with
the same labelling of layer~$\layernumber$. 

\begin{claim}
  \label{claim:inductiveclaimpyramids}
  Let $\pyramid$ be a \topex
  $(\dimension,\coordinaterestrictionsize)$-\frustum and
  $\labelling_\layernumber$ 
  be a labelling of its top layer $\layernumber$.  
  Then for every set $\setS$ of
  $2^{\dimension-\coordinaterestrictionsize}-1$ vertices on or below
  layer $\layernumber+\dimension-\coordinaterestrictionsize$ there is a
  consistent labelling 
  of all vertices in~$\pyramid$
  that extends
  $\labelling_\layernumber$ and labels 
  every vertex in
  $\setS$ with~$0$. 
\end{claim}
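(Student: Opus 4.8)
The plan is to prove Claim~\ref{claim:inductiveclaimpyramids} by induction on the number $e \defi \dimension - \coordinaterestrictionsize$ of unrestricted dimensions; \reflem{lem:ImmermansMainTechnicalLemma} then follows at once by applying the claim with $\coordinaterestrictionsize = 0$, since the part of a pyramid on or below layer~$\layernumber$ is an unrestricted \topex{} $(\dimension,0)$-frustum.

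\textbf{Base case.} When $e = 0$ the set $\setS$ is empty and it suffices to show that \emph{some} consistent labelling extends $\labelling_\layernumber$. Since no dimension expands, the cuboids at successive layers of $\pyramid$ are translates of one fixed box, and the only edges leaving $\pyramid$ sit at the boundary of these boxes. Hence, in the single direction that expands between layer $\layernumberalt$ and layer $\layernumberalt+1$, the extreme vertex of layer $\layernumberalt$ has a \emph{unique} in-neighbour; its value is forced, and chaining the parity constraints of \refdef{def:xor-formula}\ref{xor-clause-non-source} along that direction forces all of layer $\layernumberalt+1$. Iterating downward yields the (in fact unique) consistent extension.

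\textbf{Inductive step ($e \geq 1$).} Let $\dimension_0,\dots,\dimension_{e-1}$ be the unrestricted dimensions, indexed so that $\dimension_i = (\layernumber+i)\bmod\dimension$; by top-expandingness these are exactly the dimensions expanded in the first $e$ layers below the top. I would cut $\pyramid$ along a wedge $\dimValLayerWedge{\dimindex}{\valW}{\layernumber}$ for one such $\dimindex$ and a well-chosen value $\valW$. As recorded after \refdef{def:high-dim-pyramid}, removing the wedge leaves two \emph{restricted} frustums $\pyramid_L,\pyramid_R$, and for a suitable choice of $\dimindex$ (either $\dimension_0$, keeping the top layer, or $\dimension_{e-1}$, shifting it by one) these become \topex{} $(\dimension,\coordinaterestrictionsize+1)$-frustums whose relevant bottom layer is again $\layernumber+e$. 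A counting argument over the choices of $\dimindex$ and $\valW$ should show that one can split $\setS = \setS_L \disjointunion \setS_W \disjointunion \setS_R$ with $\setsize{\setS_L},\setsize{\setS_R}\leq 2^{e-1}-1$ (possible because $2^{e}-1 = 2(2^{e-1}-1)+1$), pushing the at-least-one leftover vertex, and any ``heavy hyperplane'' of $\setS$, into the wedge~$W$. Given such a cut I would argue in three steps: (i) fix a consistent labelling of $W$ extending $\labelling_\layernumber$ on its top layer and setting every vertex of $\setS_W$ to~$0$; (ii) observe that this labelling of $W$ together with $\labelling_\layernumber$ \emph{forces} a consistent labelling of layer $\layernumber+1$ of $\pyramid_L$ and of $\pyramid_R$, again by chaining parity constraints along the expanding direction starting from the boundary vertex whose in-neighbour already lies in $W$; and (iii) apply the induction hypothesis to $\pyramid_L$ with this forced top-layer labelling and $\setS_L$, symmetrically to $\pyramid_R$ with $\setS_R$, and glue the three labellings.

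\textbf{Main obstacle.} The delicate point is step~(i): the wedge $W$ is not a restricted frustum, and although every top vertex of $W$ still spans a complete binary tree of depth~$e$ inside $W$, these trees overlap, so $W$ does not simply decompose into smaller frustums. This is exactly where Immerman's argument requires care, and handling it cleanly most likely means proving a slightly more general statement for a class of layered DAGs closed under wedge-cutting --- e.g.\ restricted frustums that may additionally carry ``band'' constraints $\valW \leq x_\dimindex \leq \valW + \maxincr(\dimindex,\cdot,\cdot)$ in some coordinates --- with the budget still controlled by the number of \emph{fully} unrestricted dimensions. A secondary, more routine obstacle is making the counting in the inductive step precise: one must check that for at least one admissible direction $\dimindex$ there really is a cut value $\valW$ giving the required balanced split, which relies on the side lengths of $\pyramid$ being far larger than~$2^{e}$.
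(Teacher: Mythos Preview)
Your inductive framework on $e=\dimension-\coordinaterestrictionsize$ and the wedge-cut decomposition are exactly right, but your ``main obstacle'' is an artifact of cutting at the wrong layer. The paper's trick dissolves it completely: instead of cutting at layer~$\layernumber$, first descend one layer. The dimension $\dimindex = \layernumber \bmod \dimension$ (your~$\dimension_0$) expands between layers $\layernumber$ and $\layernumber+1$, so by \refpr{propo:frustums}\ref{item:propo_frustums-a} there is \emph{some} consistent extension of $\labelling_\layernumber$ to layer $\layernumber+1$, and by \ref{item:propo_frustums-b} you may flip any single ``line'' in layer $\layernumber+1$ (all vertices agreeing in the non-$\dimindex$ coordinates) without losing consistency. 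Use this freedom to force every vertex in the hyperplane $x_\dimindex=\valW$ of layer $\layernumber+1$ to~$0$. Now cut along $\dimValLayerWedge{\dimindex}{\valW}{\layernumber+1}$: since the top of the wedge is all-zero, the \emph{all-zero} labelling of the entire wedge is trivially consistent, and every $\setS$-vertex inside the wedge is automatically~$0$. No band constraints or generalised statement is needed; the wedge is never a subproblem.

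The two remaining pieces are restricted \frustums with top layer $\layernumber+1$ and one extra restricted coordinate~$\dimindex$; a one-line check shows they are again \topex $(\dimension,\coordinaterestrictionsize+1)$-\frustums (you always cut along $\dimension_0$, never $\dimension_{e-1}$, so there is no choice to make). The value~$\valW$ is chosen greedily---the largest $\valW$ such that the left piece contains at most $(\setsize{\setS}-1)/2$ vertices of~$\setS$---which automatically bounds the right piece as well and pushes at least one $\setS$-vertex into the wedge; no largeness assumption on side lengths is needed, so your secondary obstacle also evaporates. One subtlety you did not flag: a vertex of~$\setS$ might sit on layer $\layernumber+1$ outside the wedge, where its value has just been fixed. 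But since $\setS$ lives on or below layer $\layernumber+e$, this forces $e=1$ and $\setsize{\setS}=1$, and then the greedy choice of~$\valW$ places that single vertex inside the wedge.
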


The following proposition summarizes the core properties of frustums
that we will use when establishing
Claim~\ref{claim:inductiveclaimpyramids}.

\begin{proposition}
  \label{propo:frustums}
  Let
  $\pyramid=\pyramid^\dimension_{\layernumber,\pyheight}%
  [\indsetlo,\indsethi,\coordinaterestriction]$ 
  be a restricted \frustum   
  with a labelling~$\labelling_\layernumber$ of all vertices in the
  top layer~$\layernumber$.
  Then the following holds:
  \begin{enumerate}[label=(\alph*)]
  \item 
    \label{item:propo_frustums-a}
    There is a labelling~$\labelling_{\layernumber+1}$
    of all vertices in layer $\layernumber+1$ 
    of~$\pyramid$
    that is consistent 
    with~$\labelling_{\layernumber}$.
  \item 
    \label{item:propo_frustums-b}
    Let $\labelling_{\layernumber+1}$ be any labelling of
    layer~$\layernumber+1$ 
    in~$\pyramid$
    that is consistent with 
    $\labelling_{\layernumber}$ and suppose that  $\pyramid$ expands
    coordinate 
    $\dimindex
    $ 
    from layer $\layernumber$
    to layer $\layernumber+1$, \ie
    $\dimindex = \pyrrem{\dimension}{\layernumber}$ 
    and
    $\dimindex\notin\indsetlo\cup\indsethi$.
    Then 
    for any vertex
    $(\veczerodminusone{y}, \layernumber + 1)$ in~$\pyramid$
    it holds that 
    the    labelling 
    $\labelling^{\veczerodminusone{y}}_{\layernumber+1}$ 
    defined by 
    \begin{equation*}
      \labelling^{\veczerodminusone{y}}_{\layernumber+1}%
      (\veczerodminusone{x},\layernumber+1)
      \defi 
      \begin{cases}
        1 - \labelling_{\layernumber+1}(\veczerodminusone{x},\layernumber+1)
        &
        \text{if  $x_i = y_i$ for all  $i \neq \dimindex$,} 
        \\
        \labelling_{\layernumber+1}(\veczerodminusone{x}, \layernumber+1)
        &
        \text{otherwise}
      \end{cases}
    \end{equation*}
    is also consistent with~$\labelling_{\layernumber}$.
  \end{enumerate}
\end{proposition}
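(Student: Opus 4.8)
The plan is to analyze the bipartite ``brick-wall'' structure formed by the edges entering layer~$\layernumber$ and to reduce both parts to elementary linear algebra over~$\gf{2}$. We may assume $\layernumber < \pyheight$, so that layer~$\layernumber+1$ exists. Write $\dimindex \defi \pyrrem{\dimension}{\layernumber}$ for the \emph{active coordinate} of the transition from layer~$\layernumber$ to layer~$\layernumber+1$; in part~\ref{item:propo_frustums-b} the hypothesis is precisely that this $\dimindex$ is the coordinate under consideration and moreover $\dimindex\notin\indsetlo\cup\indsethi$. By \refdef{def:high-dim-pyramid} the two in-edges entering any vertex $(\veczerodminusone{x},\layernumber)$ of $\pyramid^\dimension_\pyheight$ come from $(\veczerodminusone{x},\layernumber+1)$ and from $(\veczerodminusone{x}+e_\dimindex,\layernumber+1)$, where $e_\dimindex$ is the $\dimindex$-th standard basis vector, so every surviving in-neighbour of a layer-$\layernumber$ vertex agrees with~$\veczerodminusone{x}$ in all coordinates other than~$\dimindex$. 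Grouping the vertices of layers $\layernumber$ and $\layernumber+1$ by their coordinates outside~$\dimindex$ thus splits the constraint structure into independent one-dimensional problems. Finally, among the type-\ref{xor-clause-non-source} clauses of $\digraphxorarg{\pyramid}$ only those attached to layer-$\layernumber$ vertices can be falsified by a labelling of layers $\layernumber$ and~$\layernumber+1$ (every other vertex has an unlabelled in-neighbour strictly below layer~$\layernumber+1$), so in both parts it suffices to control the parity constraints coming from layer-$\layernumber$ vertices.

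For part~\ref{item:propo_frustums-a} I would regard the values $\labelling_{\layernumber+1}(\cdot)$ as unknowns over~$\gf{2}$ and, for each layer-$\layernumber$ vertex $v$ with surviving in-neighbours $w_1,\dots,w_\ell$, the equation $\labelling_{\layernumber+1}(w_1)+\dots+\labelling_{\layernumber+1}(w_\ell)=\labelling_\layernumber(v)$. The plan is to show that the variable--equation incidence graph of this system is a disjoint union of paths, so that it is solvable by propagation along each path, and then to set any remaining free variable to~$0$. By the decomposition above it is enough to look inside one class with fixed coordinates outside~$\dimindex$: there the surviving layer-$\layernumber$ vertices~$t_c$ and layer-$(\layernumber+1)$ vertices~$b_c$, indexed by the value~$c$ of coordinate~$\dimindex$, satisfy that $t_c$ is joined only to $b_c$ and $b_{c+1}$ (when those survive), so the incidence graph is a subgraph of the path with vertex sequence $\dots,b_c,t_c,b_{c+1},t_{c+1},\dots$ and is therefore a disjoint union of paths. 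A short case analysis on whether $\dimindex$ lies in $\indsetlo$, in~$\indsethi$, or is unrestricted shows that every $t_c$ keeps at least one in-neighbour (so no equation degenerates to $0=1$): the only restriction that can delete an in-neighbour is one on the active coordinate~$\dimindex$, and in each case one of $b_c,b_{c+1}$ always remains, using that the pyramid grows by at most one per layer in each dimension, that $\indsethi$-restrictions are layer-independent, and that an $\indsetlo$-restriction shifts in step with the pyramid only in the expanded dimension.

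For part~\ref{item:propo_frustums-b} the hypotheses $\dimindex=\pyrrem{\dimension}{\layernumber}$ and $\dimindex\notin\indsetlo\cup\indsethi$ make the situation clean. First I would argue that \emph{both} in-neighbours $w_1=(\veczerodminusone{x},\layernumber+1)$ and $w_2=(\veczerodminusone{x}+e_\dimindex,\layernumber+1)$ of every layer-$\layernumber$ vertex $(\veczerodminusone{x},\layernumber)$ of~$\pyramid$ survive: $w_1$ survives because none of the $\indsetlo$- or $\indsethi$-restrictions tightens between layers $\layernumber$ and~$\layernumber+1$ in any non-active dimension, and $w_2$ survives because $\dimindex$ carries no restriction and $\maxval(\dimindex,\layernumber+1)=\maxval(\dimindex,\layernumber)+1$. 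Now fix $(\veczerodminusone{y},\layernumber+1)$ and let $F=\Setdescr{(\veczerodminusone{x},\layernumber+1)}{x_i=y_i\ \text{for all}\ i\neq\dimindex}$ be the fiber whose labels are flipped. For the constraint attached to a layer-$\layernumber$ vertex $(\veczerodminusone{x},\layernumber)$, its in-neighbours $w_1,w_2$ share all coordinates of~$\veczerodminusone{x}$ outside~$\dimindex$, so either both lie in~$F$ or neither does; in the first case $\labelling^{\veczerodminusone{y}}_{\layernumber+1}(w_1)+\labelling^{\veczerodminusone{y}}_{\layernumber+1}(w_2)=\labelling_{\layernumber+1}(w_1)+\labelling_{\layernumber+1}(w_2)$ since the two flips cancel modulo~$2$, and in the second case nothing changes, so in either case the constraint stays satisfied. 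As the labels on layer~$\layernumber$ are untouched, $\labelling_\layernumber\cup\labelling^{\veczerodminusone{y}}_{\layernumber+1}$ is still consistent.

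The main obstacle I anticipate is the bookkeeping in these survival arguments: pinning down, as a function of the triple $(\indsetlo,\indsethi,\coordinaterestriction)$ and of which coordinate is active, exactly which of the two in-neighbours of a layer-$\layernumber$ vertex belongs to the restricted frustum, and checking that the resulting deletions never create a cycle in the incidence graph of part~\ref{item:propo_frustums-a} nor spoil the modulo-$2$ cancellation of part~\ref{item:propo_frustums-b}. Once that structure is nailed down, the remaining ingredients---solvability of a $\gf{2}$-system whose incidence graph is a disjoint union of paths, and a one-line parity cancellation---are entirely routine.
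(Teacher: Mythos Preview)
Your proposal is correct and follows essentially the same approach as the paper. Both arguments decompose the bipartite constraint graph between layers~$\layernumber$ and~$\layernumber+1$ into the one-dimensional ``lines'' obtained by fixing all coordinates except the active one~$\dimindex$, then handle part~\ref{item:propo_frustums-a} by propagating along each line (with the same three-way case split on whether $\dimindex\in\indsetlo$, $\dimindex\in\indsethi$, or $\dimindex$ is unrestricted) and part~\ref{item:propo_frustums-b} by observing that in the expanding case each layer-$\layernumber$ constraint has exactly two in-neighbours, both lying on the same line, so flipping an entire line preserves all parities. Your $\gf{2}$/incidence-graph phrasing and the paper's picture-based description of the line shapes are just two presentations of the same idea; the bookkeeping you flag as the main obstacle is exactly what the paper carries out, and it goes through as you outlined.
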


\begin{proof}
  We first note that the set of
  \XOR
  constraints between two layers
  $\layernumber$ and $\layernumber+1$ 
  can be partitioned into
  several connected
  components.  
  Each of the components forms a ``one-dimensional line'' in the
  direction of the expanding coordinate
  $\dimindex = \pyrrem{\dimension}{\layernumber}$.  
  More formally, two vertices from layers $\layernumber$ 
  and~$\layernumber+1$ 
  are on the same line if and only if they agree on
  the coordinates $x_i$ for all $i\neq\dimindex$.  
  These lines form the connected components of the graph induces on
  layers $\layernumber$ and~$\layernumber+1$.  
  All such lines between layers $\layernumber$ and~$\layernumber+1$
  isomorphic and their shape depends on whether
  $\dimindex\in \indsetlo$, $\dimindex\in \indsethi$, or
  $\dimindex\notin\indsetlo\cup\indsethi$.  
  See \reffig{fig:between_layers} for an illustration.
  We remark that in
  \reftwofigs{fig:between_layersB}{fig:between_layersC} the vertex
  with in-degree $1$ and its predecessor form a binary XOR clause in
  $\digraphxorarg{\pyramid}$.    

\makeatletter{}%

\newcommand{\captionspacing}{\vspace{1mm}}

\begin{figure}
  \centering
  \subfigure[$\dimindex\notin\indsetlo\cup\indsethi$]%
  {
    \label{fig:between_layersA}
    {%
      \begin{tikzpicture}%
        [vertex/.style={circle,draw=black,fill=black,%
          inner sep=0pt,minimum  size=3pt},%
        diredge/.style={->,shorten <=.5pt, shorten >=.5pt}] 
        \foreach \height/\xdist/\ydist in {8/.5/.5} {

          \foreach \y in {0,1} {
            \pgfmathsetmacro{\diff}{\height-\y};
            \foreach \x in {0,...,\diff} {
              \node[vertex] (v_\x_\y) at (\x*\xdist-\diff*\xdist/2,\y*\ydist) {}; 
            }
          }
          \foreach \y in {1} {
            \pgfmathsetmacro{\diff}{\height-\y};
            \foreach \x in {0,...,\diff} {
              \pgfmathtruncatemacro{\nextx}{1+\x};
              \pgfmathtruncatemacro{\prevy}{\y-1};
              \draw[diredge] (v_\x_\prevy) -- (v_\x_\y);
              \draw[diredge] (v_\nextx_\prevy) -- (v_\x_\y); 
            }
          }
          
        }%
      \end{tikzpicture}
    }%
    \captionspacing
  } 
  \hfill
  \subfigure[$\dimindex\in\indsethi$]%
  {
    \label{fig:between_layersB}
    {%
      \begin{tikzpicture}%
        [vertex/.style={circle,draw=black,fill=black,%
          inner sep=0pt,minimum  size=3pt},%
        diredge/.style={->,shorten <=.5pt, shorten >=.5pt}] 
        \foreach \height/\xdist/\ydist in {8/.5/.5} {
          
          \pgfmathtruncatemacro{\heightminusone}{\height-1};
          \pgfmathtruncatemacro{\heightminustwo}{\height-2};
          \foreach \y in {0,1} {
            \pgfmathsetmacro{\diff}{\height-\y};
            \foreach \x in {0,...,\heightminusone} {
              \node[vertex] (v_\x_\y) at (\x*\xdist-\diff*\xdist/2,\y*\ydist) {}; 
            }
          }
          \foreach \y in {1} {
            \pgfmathsetmacro{\diff}{\height-\y};
            \foreach \x in {0,...,\heightminustwo} {
              \pgfmathtruncatemacro{\nextx}{1+\x};
              \pgfmathtruncatemacro{\prevy}{\y-1};
              \draw[diredge] (v_\x_\prevy) -- (v_\x_\y);
              \draw[diredge] (v_\nextx_\prevy) -- (v_\x_\y); 
            }
            \draw[diredge] (v_\heightminusone_0) -- (v_\heightminusone_1);
          }
          
        }%
      \end{tikzpicture}
    }%
  } 
  \hfill
  \subfigure[$\dimindex\in\indsetlo$]%
  {
    \label{fig:between_layersC}
    {%
      \begin{tikzpicture}%
        [vertex/.style={circle,draw=black,fill=black,%
          inner sep=0pt,minimum  size=3pt},%
        novertex/.style={circle,draw=white,fill=white,%
          inner sep=0pt,minimum  size=3pt},%
        diredge/.style={->,shorten <=.5pt, shorten >=.5pt}] 
        \foreach \height/\xdist/\ydist in {8/.5/.5} {
          
          \pgfmathtruncatemacro{\heightminusone}{\height-1};
          \pgfmathtruncatemacro{\heightminustwo}{\height-2};
          \foreach \y in {0,1} {
            \pgfmathsetmacro{\diff}{\height-\y};
            \pgfmathtruncatemacro{\start}{1-\y};
            \foreach \x in {\start,...,\diff} {
              \node[vertex] (v_\x_\y) at (\x*\xdist-\diff*\xdist/2,\y*\ydist) {}; 
            }
          }
          \foreach \y in {1} {
            \pgfmathsetmacro{\diff}{\height-\y};
            \foreach \x in {1,...,\diff} {
              \pgfmathtruncatemacro{\nextx}{1+\x};
              \pgfmathtruncatemacro{\prevy}{\y-1};
              \draw[diredge] (v_\x_\prevy) -- (v_\x_\y);
              \draw[diredge] (v_\nextx_\prevy) -- (v_\x_\y); 
            }
            \draw[diredge] (v_1_0) -- (v_0_1);
          }

          \node[novertex] (v_0_0) at (0*\xdist-\height*\xdist/2,0*\ydist) {}; 
          
        }%
      \end{tikzpicture}
    }%
  }
  \caption{%
    Shapes of
    connected component between layers $\layernumber$ and
    $\layernumber+1$
    expanding in dimension~$\dimindex$.
  }  
  \label{fig:between_layers}
\end{figure}
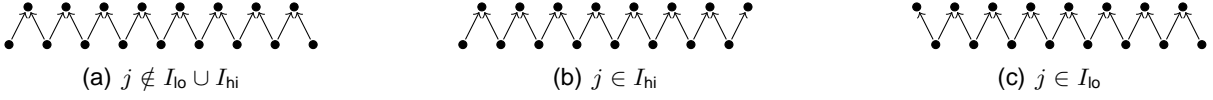 

  If the layer is not expanding (as depicted in
  \reftwofigs{fig:between_layersB}{fig:between_layersC}) 
  and the upper layer~$\layernumber$
  is entirely labelled, then it is not hard to see that there is a
  unique consistent labelling of the lower-level vertices of each line
  (determined by propagating values from right to left in
  \reffig{fig:between_layersB}
  and from left to right in
  \reffig{fig:between_layersC}).
  As all lines are disjoint this gives a unique labelling of the
  entire layer $\layernumber+1$ that is consistent with the labelling
  of layer~$\layernumber$.  
  If the layer expands (\ie, if
  $\dimindex\notin\indsetlo\cup\indsethi$ as illustrated in
  \reffig{fig:between_layersA}), then we have more freedom.  
  Indeed, if we label either the rightmost or the leftmost vertex at
  the bottom layer with 0, then we have the same situation as in
  \reftwofigs{fig:between_layersB}{fig:between_layersC}, 
  respectively.  
  This concludes the proof of item~\ref{item:propo_frustums-a}
  in the proposition.

  For item~\ref{item:propo_frustums-b},
  first observe that the condition
  $\dimindex\notin\indsetlo\cup\indsethi$
  means that we are in the case depicted in
  \reffig{fig:between_layersA}.
  This means that if
  we have a  consistent labelling of 
  the upper and lower part of a line,
  then flipping all values 
  at the lower level
  yields another consistent labelling.  
  This is 
  so since
  every XOR clause contains exactly two
  vertices from the 
  lower part.
  Hence, flipping both of these vertices does not change the parity of the
  variables in the XOR clause but leaves the clause satisfied.
  As all 
  lines between layers~$\layernumber$ \mbox{and~$\layernumber + 1$} are 
  disconnected from each other,
  flipping all values in one line gives another consistent labelling
  for the whole layer $\layernumber+1$, which is precisely 
  what is claimed in item~\ref{item:propo_frustums-b}. The proposition
  follows.
\end{proof}

\begin{proof}[Proof of Claim~\ref{claim:inductiveclaimpyramids}]
  The proof is by induction over 
  decreasing values of $\coordinaterestrictionsize$,
  the base case being
  $\coordinaterestrictionsize = \dimension$.
  As 
  $\setsize{\setS} = 2^{\dimension - \coordinaterestrictionsize} - 1 =  0$
  if
  $\coordinaterestrictionsize = \dimension$, 
  in this case we only have to ensure that there is a 
  consistent labelling of the entire \frustum that is consistent with
  the labelling of the top layer.  
  This follows from inductively applying
  \refpr{propo:frustums}\ref{item:propo_frustums-a}
  layer by layer.

  For the 
  inductive step,
  assume that the claim 
  holds
  for all
  \topex $(\dimension,\coordinaterestrictionsize+1)$-\frustums.
  We   want prove it for a \topex
  $(\dimension,\coordinaterestrictionsize)$-\frustum
  $\pyramid=\pyramid^\dimension_{\layernumber,\pyheight}%
  [\indsetlo,\indsethi,\coordinaterestriction]$.  
  Let 
  $\dimindex 
  = \pyrrem{\dimension}{\layernumber} 
  = \layernumber
  \bmod   %
  \dimension$ 
  be the 
  dimension
  that
  expands from layer $\layernumber$ to $\layernumber+1$.  
  As $\pyramid$ is \topex and $\coordinaterestrictionsize<\dimension$
  we have $\dimindex\notin\indsetlo\cup\indsethi$.  
  For some well-chosen
  $\valW \in [0, \maxval(\dimindex,\layernumber+1)]$  
  to be specified shortly,  we 
  partition 
  $\pyramid$ on and below layer $\layernumber+1$ into the
  wedge 
  \begin{subequations}
    \begin{equation}
      \label{eq:split-wedge}
      \dimValLayerWedge{\dimindex}{\valW}{\layernumber+1}    
    \end{equation}
    and two disjoint $(\dimension,\coordinaterestrictionsize+1)$-\frustums: the ``right'' frustum
    \begin{equation}
      \label{eq:split-frustum-1}
      \pyramid^\dimension_{\layernumber+1,\pyheight}
      [\indsetlo\cup\set{\dimindex},\indsethi,
      \coordinaterestriction\cup\set{\dimindex \mapsto\valW}]    
    \end{equation}
    and the ``left'' frustum (depicted by \bluecross{} in Figure~\ref{fig:pyramidWedgeFrustum})
    \begin{equation}
      \label{eq:split-frustum-2}
      \pyramid^\dimension_{\layernumber+1,\pyheight}
      [\indsetlo,\indsethi\cup\set{\dimindex},
      \coordinaterestriction\cup\set{\dimindex\mapsto \valW}]
      \eqperiod
    \end{equation}
  \end{subequations}
  We choose the position $\valW$ of the wedge so that both
  $(\dimension,\coordinaterestrictionsize+1)$-\frustums
  in~\refeq{eq:split-frustum-1}
  and~\refeq{eq:split-frustum-2}
  contain at  most 
  $(\setsize{\setS}-1) / 2 \leq 2^{\dimension-(\coordinaterestrictionsize+1)}-1$ 
  vertices from
  $\setS$ 
  (which implies
  that the wedge~\refeq{eq:split-wedge}
  contains at least one vertex from $\setS$).   
  To be more specific,
  we choose the largest $\valW\geq 0$ such that 
  the 
  left frustum~\refeq{eq:split-frustum-2} contains at most
  $(\setsize{\setS}-1) / 2$ vertices 
  $\setS_\valW \subseteq \setS$. Such an $\valW$ exists as $\setS_0=\emptyset$.
  
  If $\valW$ reached the maximum $\maxval(\dimindex,\layernumber+1)$,
  then empty right frustum~\refeq{eq:split-frustum-1} clearly contains
  no vertices from $\setS$.   
  Otherwise let $\setS_{\valW+1}$ be the set of vertices from $\setS$
  left of the wedge {at position $\valW+1$}.  
  By the choice of $\valW$ we have
  $\setsize{\setS_{\valW+1}}>(\setsize{\setS}-1) / 2$.  
  Furthermore, because all vertices in $\setS$ are below layer
  $\layernumber+1$ it follows that $\setS_{\valW+1}\setminus
  \setS_\valW$ is contained in the wedge~\refeq{eq:split-wedge} at
  position $\valW$.  
  Hence the right frustum~\refeq{eq:split-frustum-1} contains at most
  $\setsize{\setS\setminus\setS_{\valW+1}} \leq (\setsize{\setS}-1) /
  2$ vertices. 

  Now we proceed as follows.
  First we use
  \refpr{propo:frustums}\ref{item:propo_frustums-a}
  to obtain any consistent labelling of all vertices in
  layer~$\layernumber+1$. 
  Consider the set of all vertices
  $v = (\veczerodminusone{x},\layernumber+1)$ in layer
  $\layernumber+1$ with $x_\dimindex=\valW$,
  \ie the topmost vertices in the
  wedge~\refeq{eq:split-wedge}.
  Note that this set of vertices form a hyperplane through, and
  perpendicular to, the disconnected parallel lines discussed in
  \refpr{propo:frustums}.
  We go over these vertices~$v$ one by one and 
  flip every \mbox{$1$-labelled}~$v$ to~$0$.
  As $\dimindex$ is the expanding coordinate from layer $\layernumber$
  \mbox{to $\layernumber+1$}, 
  after every such flip we can apply
  \refpr{propo:frustums}\ref{item:propo_frustums-b}
  to relabel the rest of the line through~$v$ as needed.
  In this way, all vertices   in the top layer of the
  wedge~\refeq{eq:split-wedge}, get labelled by~$0$,
  and we label all other vertices in the wedge 
  with~$0$ also. 
  It follows from repeated application of
  \refpr{propo:frustums}\ref{item:propo_frustums-b}
  that the end result is a labelling of \mbox{layer~$\layernumber+1$} that is
  consistent with~$\labelling_\layernumber$.
  Note that this labels every vertex from $\setS$ within the wedge
  with $0$ and moreover, layer
  $\layernumber+1$ contains no vertices from $\setS$ outside of the
  wedge.
  This is because if some vertex from $\setS$ is on layer
  $\layernumber+1$, then we have by the assumption in
  Claim~\ref{claim:inductiveclaimpyramids} that $\setsize{\setS}=1$
  and hence this one labelled vertex is guaranteed to be in the wedge
  by the choice of $\valW$.
  In this way we obtain a
  labelling for the top layer of both
  $(\dimension,\coordinaterestrictionsize+1)$-\frustums,
  and 
  we then  apply induction to consistently label all vertices in both
  frustums   in such a way that
  every vertex in $\setS$ is set to $0$.  
  Now we argue that the disjoint union of the all-zero
  labelling of the wedge and the consistent labellings
  of both
  frustums is a consistent labelling of $\pyramid$. 
  Clearly, every (non-source) clause in $\digraphxorarg{\pyramid}$
  that is entirely contained in the wedge is satisfied by the
  all-zero labelling of the wedge.  
  In the same way,  every clause that is entirely contained in one of
  the two
  frustums is satisfied by their consistent labellings.  
  It remains to consider clauses that contain variables from the wedge
  as well as from one of the frustums.  
  By construction, those clauses have the form $(v,w_1,w_2,0)$, for a
  vertex $v$ with in-neighbours $w_1$, $w_2$, where the vertex $v$ and
  one of its neighbours (say $w_1$) is within the frustum and the
  other neighbour is inside the wedge.  
  As the edge $(w_1,v)$ inside the frustum forms the binary clause
  $(v,w_1,0)$ it follows that the consistent labelling of the frustum
  guarantees that the parity of $v$ and $w_1$ is even.  
  Because $w_2$ is labelled $0$ in the wedge, the merged labelling
  satisfies $(v,w_1,w_2,0)$.  
  The claim follows.
\end{proof}

As noted above, our proof of 
Claim~\ref{claim:inductiveclaimpyramids}
also establishes
\reflem{lem:ImmermansMainTechnicalLemma}.

}

In~\cite{Immerman.1981} $\log n$-dimensional pyramids (where $n$ is
the number of vertices) are used to prove a
$\Bigomega{2^{\sqrt{\log n}}}$ lower bound on the quantifier depth of
full first-order counting logic.  The next lemma
shows
that if we instead choose the dimension to be logarithmic in the
number of variables (\ie pebbles)
in the game, 
we get an improved quantifier depth
lower bound for the $\pebblestd$-variable fragment.

\begin{lemma}\label{lem:pyramidLowerBound}
  For every $\dimension\geq 2$ and height~$h$, \firstplayer does not
  win the $2^\dimension$-pebble game on
  $\Digraphxorarg{\pyramid^\dimension_{h}}$ within 
  $h/(\dimension-1)- 1$
  rounds.  
\end{lemma}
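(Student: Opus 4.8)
By \reflem{lem:EquivalentCharacterisations} (or directly from the rules of the pebble game) it is enough to describe a strategy for \secondplayer{} in the $2^{\dimension}$-pebble game on $\Digraphxorarg{\pyramid^{\dimension}_{h}}$ that survives more than $h/(\dimension-1)-1$ rounds; we may of course assume $h\geq\dimension$, since otherwise the claimed bound is non-positive and there is nothing to prove. The plan is to have \secondplayer{} maintain a \emph{committed frontier} consisting of a layer index $t$ (starting at $t=0$) together with a consistent labelling $\labelling$ (in the sense of \refdef{def:consistent-labelling}) of all vertices of $\pyramid^{\dimension}_{h}$ on layers $0,1,\dots,t$, with the sink labelled $1$, subject to the invariant that the current partial assignment $\pos$ satisfies: (i) $\pos$ agrees with $\labelling$ on every pebbled vertex lying on a layer $\leq t$; and (ii) every pebbled vertex lying on a layer $>t$ is labelled $0$ by $\pos$ and in fact lies on a layer $\geq t+\dimension$. \secondplayer{}'s answers will always be chosen so that this invariant, together with $t<h$, is preserved.

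First I would check that the invariant with $t<h$ already implies that $\pos$ falsifies no clause of $\Digraphxorarg{\pyramid^{\dimension}_{h}}$, by inspecting the three clause types of \refdef{def:xor-formula}. Source clauses $(s,0)$ are fine because every source lies on layer $h>t$, so by (ii) a pebbled source is labelled $0$. The sink clause $(z,1)$ is fine because $z$ lies on layer $0\leq t$ and is labelled $1$ by $\labelling$. For a non-source clause $(v,w_{1},w_{2},0)$ with $v$ on layer $\ell$ and $w_{1},w_{2}$ on layer $\ell+1$: if $\ell+1\leq t$ all three vertices are governed by the consistent labelling $\labelling$; if $\ell=t$ then $w_{1},w_{2}$ sit on layer $t+1<t+\dimension$, so by (ii) they cannot both be pebbled and the clause is inactive; and if $\ell>t$ all three vertices, if pebbled, are labelled $0$, so the clause $0\oplus 0\oplus 0=0$ is satisfied. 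Hence \secondplayer{} survives as long as she can keep the invariant alive with $t<h$.

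Next I would describe how \secondplayer{} answers a query for a vertex $v$ on layer $\rownumber(v)$. If $\rownumber(v)\leq t$, she answers $\labelling(v)$, leaving $t$ and $\labelling$ unchanged. If $\rownumber(v)\geq t+\dimension$, she answers $0$, again leaving the frontier in place; the new pebble satisfies (ii). The interesting case is $t<\rownumber(v)<t+\dimension$, where she must advance the frontier: she chooses a new frontier $t'$ with $\rownumber(v)\leq t'<h$ just below $v$ and, if necessary, below those currently pebbled vertices that lie close to the old frontier, and she invokes \reflem{lem:ImmermansMainTechnicalLemma} with the old committed labelling (consistent on layers $0,\dots,t$) and with $\setS$ equal to the set of currently pebbled vertices that now fall into the enlarged committed region $\{$layers $\leq t'\}$. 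These vertices all sit on layers $\geq t+\dimension$ by invariant (ii) and number at most $2^{\dimension}-1$, so the lemma applies and yields a consistent labelling of layers $0,\dots,t'$ that extends $\labelling$ and labels all of $\setS$ with $0$; \secondplayer{} adopts this as the new $\labelling$ and answers $v$ accordingly. A short check (using $\rownumber(v)\leq t'$ and $\dimension\geq2$) shows the invariant is restored.

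Finally I would bound the number of rounds. The key quantitative claim is that against this strategy the frontier $t$ increases by at most $\dimension-1$ in any single round: the frontier only moves when \firstplayer{} plays into the band of $\dimension-1$ layers immediately below it, and since all \emph{genuinely pebbled} vertices below the frontier are kept at least $\dimension$ layers below it, one round of play lets the frontier catch up by at most $\dimension-1$ layers. Granting this, after $\roundstd$ rounds one has $t\leq\roundstd(\dimension-1)$, which is $<h$ whenever $\roundstd\leq h/(\dimension-1)-1$; so the invariant persists with $t<h$ throughout the first $h/(\dimension-1)-1$ rounds and \firstplayer{} has not won, which is exactly the statement of the lemma (and, combined with \reflem{lem:UpperBoundPebblingDAG} and the bound $\Setsize{V(\pyramid^{\dimension}_{h})}\leq(h+1)^{\dimension+1}$, will feed into \reflem{lem:pyramids}). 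The main obstacle, and the delicate point of the whole argument, is precisely the control on how far the frontier must jump when \firstplayer{} plays into the band while keeping several deep pebbles in play; this is where the restriction to $2^{\dimension}$ pebbles is essential, since \reflem{lem:ImmermansMainTechnicalLemma} lets \secondplayer{} absorb at most $2^{\dimension}-1$ of those deep pebbles into the committed region at once, and \firstplayer{} must have spent the corresponding rounds placing them — this is the bookkeeping that needs to be carried out carefully, essentially following Immerman~\cite{Immerman.1981}.
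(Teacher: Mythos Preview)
Your strategy, invariant, and invocation of \reflem{lem:ImmermansMainTechnicalLemma} are the same as in the paper, and your verification that the invariant (with $t<h$) prevents any clause from being falsified is correct. The gap is in the final quantitative step.

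Your ``key quantitative claim''---that the frontier advances by at most $\dimension-1$ in any single round---is false. Take $\dimension=2$: starting from frontier $t=0$, let \firstplayer{} play at layers~$2$ and~$3$ (both answered~$0$, frontier unchanged), then at layer~$1$. To restore the invariant the frontier must move to $t'=3$, a jump of~$3$ in one round. Your justification (``since all genuinely pebbled vertices below the frontier are kept at least $\dimension$ layers below it, one round of play lets the frontier catch up by at most $\dimension-1$ layers'') is backwards: the deep pebbles are precisely what \emph{force} a large jump, because the new frontier cannot stop with one of them sitting in the forbidden band.

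The bound $t\leq r(\dimension-1)$ after $r$~rounds is nonetheless correct, but it is an \emph{amortised} statement. The paper says this informally (``there is at least one pebble for every $(\dimension-1)$th layer forcing such a big jump''). Concretely: when the frontier jumps from $t$ to $t'$ after a band play at layer~$q$, every $L'\in\{q,q+1,\ldots,t'-1\}$ has some pebble in $[L'+1,L'+\dimension-1]$ (by minimality of~$t'$), and a single absorbed pebble can cover at most $\dimension-1$ such values of~$L'$. Hence at least $(t'-q)/(\dimension-1)$ deep pebbles get absorbed, each of which cost \firstplayer{} one earlier round to place. Equivalently, the potential $t+(\dimension-1)\cdot\setsize{\{\text{pebbles on layers }>t\}}$ increases by at most $\dimension-1$ per round.

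Finally, the $2^{\dimension}$-pebble restriction is not what controls the frontier advance; it enters only to guarantee $\setsize{S}\leq 2^{\dimension}-1$ in the hypothesis of \reflem{lem:ImmermansMainTechnicalLemma}. The amortisation above makes no use of the total pebble count.
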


\begin{proof}
  We show that \secondplayer has a counter-strategy to answer
  consistently for at least $\lfloor h/(\dimension-1)\rfloor-1$ rounds and
  therefore \firstplayer needs at least $\lfloor h/(\dimension-1)\rfloor >
  h/(\dimension- 1)-1$ rounds to win.  
  Starting at the top layer~$\rownumber_1=0$,
  she maintains the invariant that at the start of round~$r$ 
  she has a consistent labelling of all vertices from layer~$0$
  to layer~$\rownumber_{r}$ with the property that
  there is no pebble on layers~$\rownumber_{r}+1$
  to~$\rownumber_{r} + \dimension - 1$.

  Whenever \firstplayer places a pebble on or above layer~$\rownumber_r$, 
  \secondplayer responds
  according to the consistent labelling and whenever \firstplayer
  puts a pebble on or below layer $\rownumber_r + \dimension$, she
  answers~$0$,
  and in both cases sets $\rownumber_{r+1} = \rownumber_r$.
  Note that as long as \firstplayer places pebbles in this way, the
  game can go on forever. 
  Since there are never more  than
  $2^\dimension - 1$
  pebbles left on vertices on or below layer $\rownumber_r + \dimension$
  (when \firstplayer runs out of pebbles the next move must be a removal),
  the conditions needed for 
  \reflem{lem:ImmermansMainTechnicalLemma} 
  to apply are never violated.

  Thus, the interesting case is when \firstplayer places a pebble
  between layer $\rownumber_r + 1$ and 
  $\rownumber_r + \dimension-1$.
  Then \secondplayer uses 
  \reflem{lem:ImmermansMainTechnicalLemma} to extend her labelling to
  the first layer $\rownumber_{r+1} > \rownumber_r$ such
  that there is no pebble on layers $\rownumber_{r+1} + 1$ to
  $\rownumber_{r+1} + \dimension -1$, after which she answers the query
  according to the new   labelling.  
  \ifthenelse{\boolean{conferenceversion}}
  {Note that}
  {It is worth noting that} 
  when \secondplayer skips downward from
  layer~$\rownumber_{r}$ to 
  layer~$\rownumber_{r+1}$ she might jump over a lot of layers in one
  go, but if so there is at least one pebble for every
  \mbox{$(\dimension-1)$th layer} forcing such a big jump.
  We see that following this strategy \secondplayer survives for at
  least $\lfloor h/(\dimension-1)\rfloor-1$ rounds, and this establishes
  the lemma.
\end{proof}

Putting the pieces together, we can now present the lower bound for the
$\pebblestd$-pebble game in \reflem{lem:pyramids}.

\begin{proof}[Proof of \reflem{lem:pyramids}]
  Recall that we want to prove that
  for all \mbox{$\pebblesupperlhs\geq 3$} and 
  \mbox{$\indexlhs \geq 3$} there is an 
  $\indexlhs$-variable \mbox{$3$-\XOR{}} formula
  $\xformf$ 
  on which
  \firstplayer
  wins the \mbox{$3$-pebble} game but
  cannot win the $\pebblesupperlhs$-pebble game
  within 
  $\tfrac{1}{\ceiling{\log \pebblesupperlhs}}
  \indexlhs^{1/(1 + 
    \ceiling{\log \pebblesupperlhs})} - 2$ 
  rounds.
  If $\indexlhs < (5\ceiling{ \log \pebblesupperlhs })^{(\ceiling{ \log \pebblesupperlhs } + 1)}$, then the round
  lower bound is trivial and we let $\xformf$ be, \eg, the 3-variable
  formula $\Digraphxorarg{\pyramid^1_1}$ plus $\indexlhs-3$ auxiliary
  variables on which \firstplayer{} needs 3 rounds to win.
  Otherwise, we choose the formula to be
  $\xformf =  \Digraphxorarg{\pyramid^\dimension_h}$
  for parameters
  $\dimension
   = %
   \ceiling{ \log \pebblesupperlhs }
  $
  and
  $h 
  = %
  \Floor{m^{1/(d+1)}}-1$. 
  Note that $\pyramid^\dimension_h$ contains less than $(h+1)^{d+1}\leq m$
  vertices and we can add dummy variables to reach exactly $m$.
  Since the graph~$\pyramid^\dimension_h$ has in-degree~$2$,
  \reflem{lem:UpperBoundPebblingDAG} says that \firstplayer wins the
  \mbox{$3$-pebble} game
  as claimed in
  \reflem{lem:pyramids}\ref{item:immerman-a}.
  The lower bound for the \mbox{$\pebblesupperlhs$-pebble} game 
  in \reflem{lem:pyramids}\ref{item:immerman-b}
  follows from \reflem{lem:pyramidLowerBound} and the oberservation
  that because $h\geq 5\dimension-1$ we have $h/(\dimension-1)\geq
  (h+1)/\dimension$ and hence
  \begin{align}
    \label{eq:1}
    h/(\dimension-1)- 1 \leq (h+1)/\dimension- 1
    &= \tfrac{1}{\ceiling{\log \pebblesupperlhs}}
      \FLOOR{\indexlhs^{1/(1 + 
      \ceiling{\log \pebblesupperlhs})}} - 1 \\
    &\leq \tfrac{1}{\ceiling{\log \pebblesupperlhs}}
      \indexlhs^{1/(1 + 
      \ceiling{\log \pebblesupperlhs})} - 2
      \eqperiod
  \end{align}
  The lemma follows.
\end{proof}

\makeatletter{}%
\newcommand{\boundaryexp}[3]%
    {$({#2},{#1},{#3})$\nobreakdash-boundary expander\xspace}
\newcommand{\nmboundaryexp}[5]%
    {${#1}\times{#2}$ \boundaryexp{#3}{#4}{#5}}
\newcommand{\boundaryexpnodeg}[2]%
    {$({#1},{#2})$\nobreakdash-boundary expander\xspace}
\newcommand{\nmboundaryexpnodeg}[4]{${#1}\times{#2}$ \boundaryexpnodeg{#3}{#4}}

\newcommand{\expguarantee}{\expansionguarantee}
\newcommand{\expfactor}{\expansionfactor}
\newcommand{\expdegree}{\expanderdegree}
\newcommand{\lsize}{\leftsize}
\newcommand{\rsize}{\rightsize}

\newcommand{\boundaryexpnodegstd}%
    {\boundaryexpnodeg{\expguarantee}{\expfactor}}
\newcommand{\boundaryexpstd}%
    {\boundaryexp{\expdegree}{\expguarantee}{\expfactor}}
\newcommand{\nmboundaryexpstd}%
    {\nmboundaryexp{\lsize}{\rsize}{\expdegree}{\expguarantee}{\expfactor}}
\newcommand{\nmboundaryexpnodegstd}%
    {\nmboundaryexpnodeg{\lsize}{\rsize}{\expguarantee}{\expfactor}}

\newcommand{\aboundaryexpnodegstd}%
    {an \boundaryexpnodeg{\expguarantee}{\expfactor}}
\newcommand{\aboundaryexpstd}%
    {an \boundaryexp{\expdegree}{\expguarantee}{\expfactor}}
\newcommand{\annmboundaryexpstd}%
    {an \nmboundaryexp{\lsize}{\rsize}{\expdegree}{\expguarantee}{\expfactor}}
\newcommand{\annmboundaryexpnodegstd}%
    {an \nmboundaryexpnodeg{\lsize}{\rsize}{\expguarantee}{\expfactor}}

\newcommand{\Aboundaryexpnodegstd}%
    {An \boundaryexpnodeg{\expguarantee}{\expfactor}}
\newcommand{\Aboundaryexpstd}%
    {An \boundaryexp{\expdegree}{\expguarantee}{\expfactor}}
\newcommand{\Annmboundaryexpstd}%
    {An \nmboundaryexp{\lsize}{\rsize}{\expdegree}{\expguarantee}{\expfactor}}
\newcommand{\Annmboundaryexpnodegstd}%
    {An \nmboundaryexpnodeg{\lsize}{\rsize}{\expguarantee}{\expfactor}}

\section{Hardness Condensation}
\label{sec:hardness-condensation}

In this section we establish \reflem{lem:hardnessCondensationXOR},
which shows how to convert an \XOR formula into a harder formula over
fewer variables. As discussed in the introduction, 
this part of our construction relies heavily on Razborov's recent
paper~\cite{Razborov16NewKind}. We follow his line of
reasoning closely below, but translate it from proof complexity to 
a pebble game argument for bounded variable logics.

A key technical concept in the proof is graph expansion. Let us define
the particular type of expander graphs we need and then discuss some
crucial properties of these graphs. We use standard 
notation, letting
$\expandergraph =
(\leftvertexset \disjointunion  \rightvertexset,E)$ 
denote a bipartite graph with left vertex set~$\leftvertexset$
and
right vertex set~$\rightvertexset$.
We let
$N^\graph \bigl( \leftvertexsubset \bigr) 
=
\Setdescr{v}{\{u,v\}\in E(\graph),u\in \leftvertexsubset}$
\ifthenelse{\boolean{conferenceversion}}
{denote the right neighbours}
{denote the set of neighbour vertices on the right}
of a left vertex subset~$\leftvertexsubset \subseteq \leftvertexset$
(and vice versa for right vertex subsets).

\begin{definition}[Boundary expander]
  A bipartite graph 
  $\expandergraph = (\leftvertexset \disjointunion
  \rightvertexset,E)$ 
  is an
  \introduceterm{\nmboundaryexpnodegstd{} graph}
  if
  \mbox{$\setsize{\leftvertexset}=\leftsize$},
  \mbox{$\setsize{\rightvertexset}=\rightsize$},
  and for every set 
  $\leftvertexsubset\subseteq\leftvertexset$,
  $\setsize{\leftvertexsubset}\leq \expansionguarantee$,
  it holds that
  $\Setsize{\boundary^{\expandergraph}(\leftvertexsubset)} \geq
  \expansionfactor\setsize{\leftvertexsubset}$, 
  where the \emph{boundary}
  $\boundary^{\expandergraph}(\leftvertexsubset)$ is the set of all 
  $v \in N^{\graph}(\leftvertexsubset)$
  having a unique neighbour in~$U'$, meaning that 
  \mbox{$\Setsize{N^{\graph}(v)\cap\leftvertexsubset}=1$}. 
  \Aboundaryexpstd is  \aboundaryexpnodegstd where additionally 
  $\Setsize{N^\graph(u)} \leq \expanderdegree$ 
  for all
  $u\in\leftvertexset$,
  \ie the graph has left degree bounded by~$\expanderdegree$. 
\end{definition}

In what follows, we will omit~$\graph$ from the notation when 
the graph is clear from context.

In any \boundaryexpnodegstd
\ifthenelse{\boolean{conferenceversion}}
{with $\expansionfactor > 0$}
{with expansion $\expansionfactor > 0$}
it holds that any left vertex subset
$\leftvertexsubset \subseteq \leftvertexset$
of size 
$\setsize{\leftvertexsubset} \leq \expansionguarantee$
has a partial matching into~$\rightvertexset$
\ifthenelse{\boolean{conferenceversion}}
{where the} 
{where in addition the} 
vertices in~$\leftvertexsubset$ can be ordered
in such a way that every vertex $u_i \in \leftvertexsubset$ is
matched to a vertex outside of the neighbourhood of
the preceding vertices $u_1, \ldots, u_{i-1}$.
The proof of this fact is sometimes
referred to as a \introduceterm{peeling argument}.

\begin{lemma}[Peeling lemma] 
  \label{lem:peeling_lemma}
  Let
  $\expandergraph = (\leftvertexset \disjointunion \rightvertexset,E)$ 
  be   \aboundaryexpnodegstd with
  $\expansionguarantee\geq1$
  and
  $\expansionfactor > 0$.
  Then
  for every set $\leftvertexsubset \subseteq \leftvertexset$,
  $|\leftvertexsubset|=\alternativetoell\leq\expansionguarantee$ there is an
  ordering $u_1,\ldots,u_\alternativetoell$ of its vertices and a sequence of
  vertices $v_1,\ldots,v_\alternativetoell \in \rightvertexset$ such that $v_i\in
  N(u_i)\setminus N(\{u_1,\ldots,u_{i-1}\})$. 
\end{lemma}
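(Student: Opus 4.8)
The plan is to prove the statement by a backward (``peeling'') induction that constructs the ordering as $u_\alternativetoell, u_{\alternativetoell-1}, \ldots, u_1$ in this reverse order, stripping one vertex off $\leftvertexsubset$ at a time. The single place where the expander hypothesis enters is the following elementary observation: if $W\subseteq\leftvertexset$ is nonempty with $\setsize{W}\leq\expansionguarantee$, then $\boundary W\neq\emptyset$, because $\Setsize{\boundary W}\geq\expansionfactor\setsize{W}>0$ by boundary expansion together with $\expansionfactor>0$. (For this we do not even need $\expansionfactor\geq1$; nonemptiness of the boundary is all that is required, and $\boundary W\subseteq\rightvertexset$ by definition.)

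Concretely, first I would set $W_\alternativetoell\defi\leftvertexsubset$ and then, for $i=\alternativetoell,\alternativetoell-1,\ldots,1$, maintain the invariant $\setsize{W_i}=i$ and proceed as follows. Since $1\leq i\leq\alternativetoell\leq\expansionguarantee$, the observation above applies to $W_i$, so I may pick some $v_i\in\boundary W_i$; by definition of the boundary, $v_i$ has a \emph{unique} neighbour in $W_i$, which I name $u_i$. I then set $W_{i-1}\defi W_i\setminus\{u_i\}$, which has size $i-1$, so the invariant is preserved and the construction runs all the way down to $W_0=\emptyset$.

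It then remains to verify the two assertions of the lemma. Since at each step a single, distinct vertex of $\leftvertexsubset$ is removed, the sequence $u_1,\ldots,u_\alternativetoell$ enumerates $\leftvertexsubset$ without repetition, and tracing the removals shows that $W_i=\{u_1,\ldots,u_i\}$ for every $i$. Now fix $i$: by construction $v_i\in N(u_i)$, and $v_i$ has $u_i$ as its unique neighbour inside $W_i=\{u_1,\ldots,u_i\}$, hence none of $u_1,\ldots,u_{i-1}$ is a neighbour of $v_i$, i.e.\ $v_i\notin N(\{u_1,\ldots,u_{i-1}\})$. Combining the two gives $v_i\in N(u_i)\setminus N(\{u_1,\ldots,u_{i-1}\})$, as required.

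I do not expect a real obstacle here: the argument is a routine induction. The only points to be careful about are to invoke the expansion property solely on sets of size at most $\expansionguarantee$ (which is automatic, since every $W_i$ satisfies $\setsize{W_i}=i\leq\alternativetoell\leq\expansionguarantee$) and to note that $\expansionguarantee\geq1$ ensures the process has a valid starting set while $\expansionfactor>0$ is exactly what keeps the boundary nonempty at each of the $\alternativetoell$ steps.
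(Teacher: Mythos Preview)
Your proof is correct and takes essentially the same approach as the paper: both arguments pick a vertex $v$ in the (nonempty) boundary of the current set, peel off its unique neighbour $u$ to serve as the last element of the ordering, and recurse on the remainder. The paper phrases this as an induction on $\alternativetoell$ while you unroll it into an explicit backward iteration, but the content is identical.
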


\ifthenelse{\boolean{conferenceversion}}
{%
\begin{proof}[Proof sketch]
  Fix any \mbox{$v_\alternativetoell \in \boundary(\leftvertexsubset)$} 
  and let
  $u_\alternativetoell \in \leftvertexsubset$ 
  be the unique vertex
  such that 
  $\Setsize{N(v_\alternativetoell)\cap\leftvertexsubset}=\{u_\alternativetoell\}$.
  Then  it holds that
  $v_\alternativetoell \in 
  N(u_\alternativetoell) \setminus 
  N \bigl(\leftvertexsubset\setminus\{u_\alternativetoell\}\bigr)$.
  By induction  we can now find sequences
  $u_1,\ldots,u_{\alternativetoell-1}$ and $v_1,\ldots,v_{\alternativetoell-1}$ 
  for~$\leftvertexsubset\setminus\{u_\alternativetoell\}$
  such that $v_i\in
  N(u_i)\setminus N(\{u_1,\ldots,u_{i-1}\})$,
  to which we can append
  $u_\alternativetoell$ and~$v_\alternativetoell$ at the end. 
  The lemma follows.
\end{proof}
}
{%
\begin{proof}
  The proof is by induction on $\alternativetoell$.
  The base case $\alternativetoell=1$ 
  is immediate
  since
  $\expansionguarantee\geq1$
  and
  $\expansionfactor > 0$ 
  implies that no left vertex can be isolated.
  For the 
  inductive step,
  suppose the lemma holds 
  for~$\alternativetoell - 1$.
  To 
  construct
  the sequence $v_1,\ldots,v_\alternativetoell$ we first fix
  $v_\alternativetoell$ to 
  be any vertex in $\boundary(\leftvertexsubset)$, which
  has to exist since
  $\Setsize{\boundary(\leftvertexsubset)}
  \geq \mbox{$\expansionfactor \setsize{\leftvertexsubset} > 0$}$.
  The fact that $v_\alternativetoell$ is in the boundary of~$\leftvertexsubset$
  means that there is a unique~$u_\alternativetoell \in
  \leftvertexsubset$
  such that 
  $\Setsize{N(v_\alternativetoell)\cap\leftvertexsubset}=\{u_\alternativetoell\}$.
  Thus, for this pair $(u_\alternativetoell, v_\alternativetoell)$ it holds that
  $v_\alternativetoell \in 
  N(u_\alternativetoell) \setminus 
  N \bigl(\leftvertexsubset\setminus\{u_\alternativetoell\}\bigr)$.
  By the induction hypothesis we can now find sequences
  $u_1,\ldots,u_{\alternativetoell-1}$ and $v_1,\ldots,v_{\alternativetoell-1}$ 
  for~$\leftvertexsubset\setminus\{u_\alternativetoell\}$
  such that $v_i\in
  N(u_i)\setminus N(\{u_1,\ldots,u_{i-1}\})$,
  to which we can append
  $u_\alternativetoell$ and~$v_\alternativetoell$ at the end. The lemma follows.
\end{proof}
}
For a right vertex subset
$\rightvertexsubset \subseteq \rightvertexset$ 
in
$\expandergraph = (\leftvertexset \disjointunion \rightvertexset,E)$
we define the \introduceterm{kernel} 
$\Ker \bigl( \rightvertexsubset \bigr) \subseteq \leftvertexset$
to be the set of all left vertices whose entire neighbourhood is
contained in 
$\rightvertexsubset$, \ie
\begin{equation}
  \label{eq:kernel}
  \Ker \bigl( \rightvertexsubset \bigr) =
  \Setdescr{\mbox{$u\in\leftvertexset$}}{N(u)\subseteq\rightvertexsubset}
  \eqperiod
\end{equation}
We let 
$\expandersubgraph{\expandergraph}{\rightvertexsubset}$ 
denote the subgraph of $\expandergraph$ induced on
$
\bigl(
\leftvertexset\setminus\Ker(\rightvertexsubset) 
\bigr)
\disjointunion
\bigl(
\rightvertexset\setminus\rightvertexsubset
\bigr)
$.
In other words,
we obtain
$\expandersubgraph{\expandergraph}{\rightvertexsubset}$
from~$\expandergraph$ by first deleting $\rightvertexsubset$ and
afterwards all isolated vertices from $\leftvertexset$
(assuming that there were no isolated left vertices before, which is
true if $\expandergraph$ is expanding). 

\ifthenelse{\boolean{conferenceversion}}
{The next lemma states that for any small enough right vertex
  set~$\rightvertexsubset$ in an expander~$\expandergraph$ we can}
{The next lemma states that if $\expandergraph$ is an expander graph,
  then for any small enough right vertex set~$\rightvertexsubset$ we
  can always}
find a \introduceterm{closure}
$\closure\bigl(\rightvertexsubset\bigr) \supseteq \rightvertexsubset$
with a small kernel such that 
\ifthenelse{\boolean{conferenceversion}}
{$\expandersubgraph{\expandergraph}{\closure(\rightvertexsubset)}$
  has good expansion.}   
{the subgraph
  $\expandersubgraph{\expandergraph}{\closure(\rightvertexsubset)}$
  has good boundary expansion.}   
\ifthenelse{\boolean{conferenceversion}}
{The proof of this lemma (albeit with slightly different parameters)
  can be found  in~\cite{Razborov16NewKind} and is also provided in
  the full-length version of this paper.} 
{The proof of this lemma (albeit with slightly different parameters)
  can be found  in~\cite{Razborov16NewKind}, but we also include it in
  \refapp{app:existence-expander}
  for completeness.}

\begin{lemma}[\cite{Razborov16NewKind}]
  \label{lem:ClosedSet}
  Let $\expandergraph$ be an
  \boundaryexpnodeg{\expguarantee}{2}.
  Then for every $\rightvertexsubset \subseteq \rightvertexset$ with
  $\setsize{\rightvertexsubset}\leq \expansionguarantee/2$ 
  there exists a subset
  $\closure\bigl(\rightvertexsubset\bigr) \subseteq \rightvertexset$ 
  with
  $\closure(\rightvertexsubset) \supseteq \rightvertexsubset$
  such that 
  $\Setsize{\Ker \bigl( \closure \bigl( \rightvertexsubset \bigr)\bigr)}
  \leq \Setsize{\rightvertexsubset}$
  and the induced subgraph
  $\expandersubgraph{\expandergraph}{\closure(\rightvertexsubset)}$ is
  an
  \boundaryexpnodeg{\expguarantee/2}{1}.
\end{lemma}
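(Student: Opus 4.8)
The plan is to construct $\closure(Z)$ by an iterative \emph{saturation} procedure and then bound the kernel by an amortised (potential) argument, closely following Razborov~\cite{Razborov16NewKind} but phrased in the language of boundary expanders. Set $Z_0 := Z$. Given $Z_i$, look at the induced subgraph $\expandersubgraph{\expandergraph}{Z_i}$ (left vertices $\leftvertexset \setminus \Ker(Z_i)$, right vertices $\rightvertexset \setminus Z_i$). If it is already an $(\expansionguarantee/2,1)$-boundary expander we stop and set $\closure(Z) := Z_i$. Otherwise there is a \emph{bad} left set $\leftvertexsubset \subseteq \leftvertexset \setminus \Ker(Z_i)$ with $\setsize{\leftvertexsubset} \le \expansionguarantee/2$ whose boundary in $\expandersubgraph{\expandergraph}{Z_i}$ has size $<\setsize{\leftvertexsubset}$; we pick such a $\leftvertexsubset$ that is minimal under inclusion and put $Z_{i+1} := Z_i \cup N^{\expandergraph}(\leftvertexsubset)$. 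Since $\leftvertexsubset \not\subseteq \Ker(Z_i)$ it has a neighbour outside $Z_i$, so $Z_i \subsetneq Z_{i+1}$, and finiteness of $\rightvertexset$ forces termination; by construction $\expandersubgraph{\expandergraph}{\closure(Z)}$ is an $(\expansionguarantee/2,1)$-boundary expander, $\closure(Z) \supseteq Z$, and moreover $\leftvertexsubset \subseteq \Ker(Z_{i+1})$ at every step since all of its neighbours have just been added.

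It then remains to prove $\Setsize{\Ker(\closure(Z))} \le \setsize{Z}$, which is the technical heart. Two elementary facts drive the argument. First — this is the only place the expansion of the \emph{ambient} graph $\expandergraph$ is used — a bad set is always strictly smaller than the current forbidden set: writing the boundary of $\leftvertexsubset$ in $\expandergraph$ as the disjoint union of its part outside $Z_i$ (which has size $<\setsize{\leftvertexsubset}$ because $\leftvertexsubset$ is bad) and its part inside $Z_i$, the $(\expansionguarantee,2)$-expansion $\Setsize{\partial^{\expandergraph}(\leftvertexsubset)} \ge 2\setsize{\leftvertexsubset}$ forces more than $\setsize{\leftvertexsubset}$ boundary vertices into $Z_i$, so $\setsize{\leftvertexsubset} < \setsize{Z_i}$. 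Second, minimality of $\leftvertexsubset$ together with the same boundary bookkeeping limits the number of non-unique neighbours of $\leftvertexsubset$ in $\expandergraph$, and hence controls both $\setsize{Z_{i+1}} - \setsize{Z_i} = \Setsize{N^{\expandergraph}(\leftvertexsubset)\setminus Z_i}$ and $\Setsize{\Ker(Z_{i+1})} - \Setsize{\Ker(Z_i)}$. The plan is to package these into a monotone invariant relating $\setsize{Z_i}$ and $\Setsize{\Ker(Z_i)}$ (Razborov's potential), maintained through every step; applying the expansion of $\expandergraph$ to $\Ker(Z_i)$, whose neighbourhood lies entirely inside $Z_i$, converts the bound on $\setsize{Z_i}$ into $\Setsize{\Ker(Z_i)} \le \setsize{Z}$, and the same observation applied at $i=0$ gives $\Setsize{\Ker(Z)} \le \setsize{Z}/2$, handling the base case.

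The main obstacle is precisely the design and verification of this invariant: adding $N^{\expandergraph}(\leftvertexsubset)$ can drag many left vertices into the kernel at once, not just $\leftvertexsubset$ itself, so a naive additive accounting of "kernel growth $=$ bad-set size" fails, and one genuinely needs the amortised comparison between the rate at which the kernel grows and the rate at which the forbidden set grows, made possible by the expansion factor being $2$ rather than $1$. Everything else — termination, the stopping condition giving an $(\expansionguarantee/2,1)$-boundary expander, and $\closure(Z) \supseteq Z$ — is immediate from the construction. I would carry out the bookkeeping in detail in \refapp{app:existence-expander}, keeping the steps parallel to~\cite{Razborov16NewKind} so that the translation from proof complexity to boundary expanders is transparent.
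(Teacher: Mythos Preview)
Your construction is exactly the one the paper uses: iteratively peel off non-expanding left sets and absorb their neighbourhoods into the forbidden right set until the residual graph is an $(\expansionguarantee/2,1)$-boundary expander. Termination, $\closure(\rightvertexsubset)\supseteq\rightvertexsubset$, and the expansion of the final graph all follow immediately, as you say. (Minor point: the paper does not require the bad set~$\leftvertexsubset$ to be inclusion-minimal; any witness to failed expansion works, so you can drop that and the ``second fact'' built on it.)

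The gap is in the kernel bound. Your plan is to maintain an invariant that controls $\setsize{Z_i}$ and then use $N^{\expandergraph}(\Ker(Z_i))\subseteq Z_i$ together with the ambient $(\expansionguarantee,2)$-expansion to conclude $\setsize{\Ker(Z_i)}\le\setsize{Z_i}/2\le\setsize{Z}$. This would require $\setsize{Z_i}\le 2\setsize{Z}$, but there is no such bound: each step adds the full neighbourhood $N^{\expandergraph}(\leftvertexsubset)$, which can have size up to $\expanderdegree\setsize{\leftvertexsubset}$, so $\setsize{Z_i}$ can grow without any control in terms of~$\setsize{Z}$. The paper never bounds~$\setsize{Z_i}$. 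Instead the inductive invariant (its Property~1) is that for every $\leftvertexsubset$ with $\Ker(Z_{i-1})\cup U_i\subseteq\leftvertexsubset\subseteq\Ker(Z_i)$ one has
\[
\Setsize{\partial^{\expandergraph}(\leftvertexsubset)\setminus Z_0}\ \le\ \setsize{\Ker(Z_i)}\eqperiod
\]
This is maintained because $\partial^{\expandergraph}(\leftvertexsubset)\setminus Z_0\subseteq\bigl(\partial^{\expandergraph}(\Ker(Z_{i-1}))\setminus Z_0\bigr)\cup\partial^{\expandersubgraph{\expandergraph}{Z_{i-1}}}(U_i)$, the first piece being bounded by $\setsize{\Ker(Z_{i-1})}$ inductively and the second by $\setsize{U_i}$ by badness. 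Applying it with $\leftvertexsubset=\Ker(Z_i)$ and combining with $2\setsize{\Ker(Z_i)}\le\setsize{\partial^{\expandergraph}(\Ker(Z_i))}\le\setsize{Z_0}+\Setsize{\partial^{\expandergraph}(\Ker(Z_i))\setminus Z_0}$ gives $\setsize{\Ker(Z_i)}\le\setsize{Z_0}$ directly. So the right amortisation is over the \emph{boundary of the kernel outside the original set~$Z_0$}, not over the size of~$Z_i$; your ``first fact'' $\setsize{\leftvertexsubset}<\setsize{Z_i}$ is correct but does not feed into the argument.
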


\ifthenelse{\boolean{conferenceversion}}
{In order for \reftwolems{lem:peeling_lemma}{lem:ClosedSet} to be
  useful, we need to know that there exist good expanders.
  This can be established by a standard probabilistic argument.
  A proof of the next lemma is given in~\cite{Razborov16NewKind} and can
  also be found in  the full version of this paper.}  
{Note that 
  \reflem{lem:ClosedSet} 
  is a purely existential result. We do not know how the closure is
  constructed and, in particular, 
  if we want to choose closures of minimal size, then 
  $\rightvertexset_1 \subseteq \rightvertexset_2$
  does not necessarily imply
  $\closure(\rightvertexset_1) \subseteq \closure(\rightvertexset_2)$.

  In order for \reftwolems{lem:peeling_lemma}{lem:ClosedSet} to be
  useful, we need to know that there exist good enough boundary expanders.
  To prove this, one can just fix a left vertex set~$\leftvertexset$ of
  size~$\leftsize$ and a right vertex set~$\rightvertexset$ of
  size~$\rightsize$ and then for every $u\in\leftvertexset$ choose
  $\expanderdegree$~neighbours from~$\rightvertexset$ uniformly and
  independently at random.
  \mbox{A standard} probabilistic argument shows that 
  with high probability
  this random graph  is an
  \nmboundaryexp{\leftsize}{\rightsize}{\expdegree}{\expguarantee}{2}
  for appropriately chosen parameters.
  We state this formally as a lemma below. A similar lemma is proven
  in~\cite{Razborov16NewKind} but we also 
  provide a proof in \refapp{app:existence-expander} 
  for the convenience of the reader.}

\begin{lemma}%
  \label{lem:expanderexistnew}
  There is 
  an absolute constant $\mindegree \in \Nplus$ 
  such that for all 
  integers
  $\expanderdegree$, 
  $\expansionguarantee$,  
  and~$\leftsize$
  satisfying
  $
  \expanderdegree \geq \mindegree
  $
  and 
  \mbox{$(\expansionguarantee\expanderdegree)^{2\expanderdegree} \leq \leftsize$}
  there exist
  \nmboundaryexp{\leftsize}{\Ceiling{{\leftsize}^{3/\expanderdegree}}}
  {\expdegree}{\expguarantee}{2}{}s.
\end{lemma}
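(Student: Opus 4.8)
The plan is to prove the lemma by the probabilistic method, following the first-moment argument of Razborov~\cite{Razborov16NewKind} but specialised to the parameters we need. Fix a left vertex set $\leftvertexset$ with $\setsize{\leftvertexset} = \leftsize$ and a right vertex set $\rightvertexset$ with $\setsize{\rightvertexset} = \Ceiling{\leftsize^{3/\expanderdegree}}$; note that by the hypothesis $(\expansionguarantee\expanderdegree)^{2\expanderdegree}\leq\leftsize$ we have $\setsize{\rightvertexset}\geq\leftsize^{3/\expanderdegree}\geq(\expansionguarantee\expanderdegree)^{6}$, which is comfortably larger than $\expanderdegree$. We draw a random bipartite graph $\expandergraph$ by letting each vertex $u\in\leftvertexset$ independently choose a uniformly random set of $\expanderdegree$ distinct neighbours in $\rightvertexset$. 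Then every left vertex has degree exactly $\expanderdegree$, so in particular at most $\expanderdegree$, and it remains to show that with positive probability every $\leftvertexsubset\subseteq\leftvertexset$ with $\setsize{\leftvertexsubset}\leq\expansionguarantee$ satisfies $\Setsize{\boundary{\leftvertexsubset}}\geq 2\setsize{\leftvertexsubset}$; any graph in the support with this property is the desired $\leftsize\times\Ceiling{\leftsize^{3/\expanderdegree}}$ boundary expander with left degree $\expanderdegree$, expansion guarantee $\expansionguarantee$ and factor $2$.

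The first step is to reduce boundary expansion to ordinary neighbourhood expansion. For any $\leftvertexsubset$ of size $t$, count the $t\expanderdegree$ edges leaving $\leftvertexsubset$ by grouping their right endpoints according to whether they have exactly one neighbour in $\leftvertexsubset$ (these are precisely the vertices of $\boundary{\leftvertexsubset}$) or at least two. This yields $t\expanderdegree\geq\Setsize{\boundary{\leftvertexsubset}} + 2\bigl(\setsize{N(\leftvertexsubset)}-\Setsize{\boundary{\leftvertexsubset}}\bigr)$, that is, $\Setsize{\boundary{\leftvertexsubset}}\geq 2\setsize{N(\leftvertexsubset)}-t\expanderdegree$. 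Hence it suffices to show that with positive probability every $\leftvertexsubset$ with $t:=\setsize{\leftvertexsubset}\leq\expansionguarantee$ satisfies $\setsize{N(\leftvertexsubset)}>(\expanderdegree+2)t/2$. For $t\leq 1$ this is automatic, since then $\boundary{\leftvertexsubset}=N(\leftvertexsubset)$ has size $\expanderdegree\geq\mindegree>2t$, so we may restrict attention to $2\leq t\leq\expansionguarantee$.

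The second step is a first-moment computation. Fix $t$ with $2\leq t\leq\expansionguarantee$ and set $\tau=\Floor{(\expanderdegree+2)t/2}$. For a fixed $\leftvertexsubset$ of size $t$ and a fixed set $W\subseteq\rightvertexset$ of size $\tau$, the event $N(\leftvertexsubset)\subseteq W$ means that each of the $t$ independently sampled neighbourhoods lands inside $W$, which happens with probability at most $(\tau/\setsize{\rightvertexset})^{\expanderdegree t}$. Taking a union over the $\binom{\leftsize}{t}$ choices of $\leftvertexsubset$ and the $\binom{\setsize{\rightvertexset}}{\tau}$ choices of $W$, the probability that some $\leftvertexsubset$ of size $t$ has $\setsize{N(\leftvertexsubset)}\leq\tau$ is at most $\binom{\leftsize}{t}\binom{\setsize{\rightvertexset}}{\tau}(\tau/\setsize{\rightvertexset})^{\expanderdegree t}$. (A variant with comparable bounds reveals the $t$ neighbourhoods one vertex at a time and notes that if the running union stays below $\tau$ then at least $\expanderdegree t-\tau\geq(\expanderdegree-2)t/2$ of the sampled neighbour slots must collide with an already-seen vertex, each collision having conditional probability at most $2\tau/\setsize{\rightvertexset}$.) Bounding $\binom{\leftsize}{t}\leq\leftsize^{t}$ and $\binom{\setsize{\rightvertexset}}{\tau}\leq(\setsize{\rightvertexset}/\tau)^{\tau}\cdot 2^{\tau}$, and substituting $\setsize{\rightvertexset}\geq\leftsize^{3/\expanderdegree}$ so that $\setsize{\rightvertexset}^{\expanderdegree t-\tau}\geq\leftsize^{(3/\expanderdegree)(\expanderdegree-2)t/2}$, the quantity collapses to $\bigl(\expanderdegree t\cdot\leftsize^{-(1/2-3/\expanderdegree)}\bigr)^{\Theta(\expanderdegree t)}$ up to a $2^{\Theta(\expanderdegree t)}$ factor. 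Choosing the absolute constant $\mindegree$ large enough that $1/2-3/\expanderdegree\geq 1/4$ for $\expanderdegree\geq\mindegree$, the hypothesis $(\expansionguarantee\expanderdegree)^{2\expanderdegree}\leq\leftsize$ together with $t\leq\expansionguarantee$ makes this per-$t$ failure probability at most $2^{-t}$, so summing over $2\leq t\leq\expansionguarantee$ gives total failure probability strictly below $1$ and the required expander exists.

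The routine but slightly delicate part is the final estimate in the previous paragraph: one has to check that the saving $\setsize{\rightvertexset}^{-\Theta(\expanderdegree t)}\leq\leftsize^{-\Theta(t)}$ coming from $\setsize{\rightvertexset}\geq\leftsize^{3/\expanderdegree}$ dominates both the $\leftsize^{t}$ from the number of choices of $\leftvertexsubset$ and the entropy term of order $(\expanderdegree t)^{\Theta(\expanderdegree t)}$. It is precisely here that the factor $2$ in the exponent of $(\expansionguarantee\expanderdegree)^{2\expanderdegree}\leq\leftsize$ — equivalently $\setsize{\rightvertexset}\geq(\expansionguarantee\expanderdegree)^{6}$ — is needed, and why $\mindegree$ must be a sufficiently large absolute constant, since the effective exponent forced on $\leftsize$ is $\Theta(\expanderdegree)$ rather than $\Theta(1)$ times $\log(\expanderdegree\expansionguarantee)$. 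This is exactly the calculation carried out (with slightly different constants) in~\cite{Razborov16NewKind}, and our version requires no new ideas.
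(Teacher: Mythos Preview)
Your proposal is correct and follows essentially the same probabilistic argument as the paper: sample $\expanderdegree$ random neighbours per left vertex, reduce boundary expansion to the neighbourhood bound $\setsize{N(\leftvertexsubset)}>(\expanderdegree+2)t/2$ via the edge-counting inequality, then union-bound over $\leftvertexsubset$ and candidate neighbourhood containers. One small slip: your ``collapses to $(\expanderdegree t\cdot\leftsize^{-(1/2-3/\expanderdegree)})^{\Theta(\expanderdegree t)}$'' overstates the $\leftsize$-saving---the actual exponent on $\leftsize$ is $-(1/2-3/\expanderdegree)t$, linear in $t$ rather than $\expanderdegree t$---but your final paragraph states the correct balance ($\leftsize^{-\Theta(t)}$ versus $(\expanderdegree t)^{\Theta(\expanderdegree t)}$), and the hypothesis $(\expansionguarantee\expanderdegree)^{2\expanderdegree}\leq\leftsize$ is exactly what closes the gap, just as in the paper's computation.
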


\ifthenelse{\boolean{conferenceversion}}
{}
{%
  For readers familiar with expander graphs from other contexts, 
  it might be worth pointing out that the parameters 
  above
  are different from what tends to be the standard expander graph
  settings of 
  $\expansionguarantee = \bigomega{\leftsize}$
  and
  $\expanderdegree = \bigoh{1}$.
  Instead, in  \reflem{lem:expanderexistnew} we  have
  $\expansionguarantee$ growing sublinearly  in~$\leftsize$
  and 
  $\expanderdegree$
  need not be constant
  (although we still need
  $\expanderdegree \lessapprox \log \leftsize / \log \log \leftsize$
  in order to satisfy the conditions of the lemma).
} %

In what follows, unless otherwise stated
$\expandergraph = (\leftvertexset \disjointunion \rightvertexset,E)$ 
will be an 
\boundaryexpnodeg{\expguarantee}{2} for
$\expguarantee =  2\pebblesk$.
We will use such expanders
when we do \XOR substitution in our formulas as described formally in the next
definition.  In words, variables in the \XOR formula are identified
with left vertices~$\leftvertexset$ in~$\expandergraph$, the pool of
new variables is the right vertex set~$\rightvertexset$, and every
variable $u \in \leftvertexset$ in an \XOR clause is replaced by
an exclusive or
$\bigoplus_{v \in N(u)} v$
over its neighbours
$v \in N(u)$.
\ifthenelse{\boolean{conferenceversion}}
{}
{We emphasize that in ``standard'' \xorification as found in the proof
  complexity literature all new substituted variables would be
  distinct, \ie 
  $N(u_1) \cap N(u_2) = \emptyset$
  for $u_1 \neq u_2$. While this often makes formulas harder, it also
  increases the number of variables. Here, 
  we use the approach in~\cite{Razborov16NewKind} to instead recycle
  variables from a much smaller set~$V$ in the substitutions, thus decreasing
  the total number of variables.}

\begin{definition}[\XOR substitution with recycling]
  \label{def:xor-substitution}
  Let $\xformf$ be an \XOR formula with 
  $\variables(\xformf) = \leftvertexset$
  and let
  $\expandergraph = (\leftvertexset \disjointunion \rightvertexset,E)$ 
  be a bipartite graph. 
  For every clause $\constraint = (u_1,\ldots,u_\alternativetoell,\Boolvala)$ in
  $\xformf$ 
  we let $\constraintsubst$ be the clause
  $(v_1^1,\ldots,v_1^{z_1},\ldots,
  v_\alternativetoell^1,
  \ldots,
  v_\alternativetoell^{z_\alternativetoell},\Boolvala)$, 
  where
  $N(u_i)=\{v^1_i,\ldots,v^{z_i}_i\}$ 
  for all $1\leq
  i\leq \alternativetoell$.
  Taking unions,   we let
  $\xformsubst$ be the \XOR formula
  $\xformsubst =
  \setdescr{\constraintsubst}{\constraint\in\xformf}$. 
\end{definition}

When using an
\nmboundaryexp{\leftsize}{{\leftsize}^{3/\expanderdegree}}
{\expdegree}{\expguarantee}{2}
as in \reflem{lem:expanderexistnew} 
for substitution in an $\leftsize$\nobreakdash-variable \XOR 
formula~$\xformorig$ as described in \refdef{def:xor-substitution},   
we obtain a new \XOR formula $\xformsubst$ 
where the number of variables have decreased significantly
to~$\leftsize^{3/\expanderdegree}$. 
The next lemma, which is at the heart of our logic-flavoured version
of hardness condensation, states that a round lower bound for the
\mbox{$\pebblesk$-pebble} game on~$\xformorig$ implies a round lower
bound for the \mbox{$\pebblesk$-pebble} game on~$\xformsubst$.  

\begin{lemma}
  \label{lem:HardnessCondensationGameStyle}
  Let
  $\pebblesk$
  be a positive integer and   let
  $\expandergraph$ 
  be an \nmboundaryexpnodeg{\lsize}{\rsize}{2 \pebblesk}{2}.
  Then if $\xformorig$ is an \XOR formula 
  over $\lsize$~variables such that
  \secondplayer wins the \mbox{$\roundorig$-round}
  \mbox{$\pebblesk$-pebble} game on~$\xformorig$,
  she also wins the \mbox{$\roundorig/(2\pebblesk)$-round}
  \mbox{$\pebblesk$-pebble} game on~$\xformsubst$.
\end{lemma}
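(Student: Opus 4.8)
The plan is to have \secondplayer play the game on $\xformsubst$ by simulating her winning strategy for the $\roundorig$-round $\pebblesk$-pebble game on $\xformorig$, transporting positions back and forth through the expander $\expandergraph$. Throughout the play on $\xformsubst$ she will maintain a \emph{committed} partial assignment $\rho$ to $\xformsubst$-variables whose domain is the closure $\closure(Y)$ of the current pebbled set~$Y$ (legal by \reflem{lem:ClosedSet}, since $\setsize{Y}\le\pebblesk$, which is half the expansion parameter $2\pebblesk$), such that $\rho$ agrees with the pebble values on~$Y$ and falsifies no clause of $\xformsubst$ whose variables lie entirely inside~$\closure(Y)$. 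By \reflem{lem:ClosedSet} we have $\setsize{\Ker(\closure(Y))}\le\setsize{Y}\le\pebblesk$, so the assignment $\alpha$ induced on the kernel by $\alpha(u)=\bigoplus_{v\in N(u)}\rho(v)$ is a legal position of the $\pebblesk$-pebble game on~$\xformorig$. The point of choosing the domain to be a closure is that a clause of $\xformsubst$ arising from an $\xformorig$-clause $C=(u_1,\dots,u_\ell,a)$ has all its variables inside $\closure(Y)$ precisely when every $u_i$ lies in $\Ker(\closure(Y))$, so ``$\rho$ falsifies no $\xformsubst$-clause inside $\closure(Y)$'' is equivalent to ``$\alpha$ is a legal position of $\xformorig$''. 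The invariant \secondplayer maintains is that after $j$ rounds on $\xformsubst$ she can still survive at least $\roundorig-2\pebblesk j$ rounds of the $\xformorig$-game from the current induced position~$\alpha$; since the empty position corresponds to the empty position, this holds at the start, and it shows she survives $\roundorig/(2\pebblesk)$ rounds on $\xformsubst$.

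For a single round, \firstplayer removes some pebbles and then asks for a variable~$v$, producing a new pebbled set $Y''=Y'\cup\set{v}$ with $Y'\subseteq Y$ and $\setsize{Y''}\le\pebblesk$. \secondplayer passes to the closure $\closure(Y'')$ and, inside the simulated $\xformorig$-game, drives the position to one supported on $\Ker(\closure(Y''))$: this costs at most $2\pebblesk$ moves — at most $\pebblesk$ removals to tear down the old position, then at most $\pebblesk$ single-variable queries, one per vertex of the new kernel, which has size $\le\pebblesk$. The answers she receives there define an assignment on $\Ker(\closure(Y''))$; she then lets $\rho''$ be any committed assignment on $\closure(Y'')$ agreeing with the kept pebbles on~$Y'$ that induces this assignment on the kernel, answers \firstplayer's query with $\rho''(v)$, and records $\rho''$ as her new committed assignment.

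The crux is the existence of such a $\rho''$. Fixing the values on $Y'$ and viewing $\rho''$ on $\closure(Y'')\setminus Y'$ as unknowns, the requirement that $\rho''$ induce the prescribed kernel values is a linear system over $\GF{2}$ with one equation per kernel vertex, and it is solvable for arbitrary right-hand sides provided the kernel vertices can be peeled off one at a time, each grabbing a \emph{private} pool variable in $\closure(Y'')\setminus Y'$ not shared with any earlier kernel vertex. This is exactly where expansion is needed: using \reflem{lem:peeling_lemma} together with the fact that $\expandergraph$ is a $(2\pebblesk,2)$-boundary expander — so boundaries are large enough to dodge the at most $\pebblesk-1$ pebbled variables of~$Y'$ — and with \reflem{lem:ClosedSet}, which guarantees that deleting the closure still leaves a $(\pebblesk,1)$-boundary expander and hence the residual graph keeps the one unit of expansion that makes the peeling go through, such a private matching exists. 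I expect this step to be the main obstacle, and here the plan is to follow Razborov's analysis in~\cite{Razborov16NewKind} closely; indeed the parameter choices $\expansionguarantee=2\pebblesk$ and expansion factor~$2$ are dictated precisely by the need to absorb up to $\pebblesk$ pebbled variables while retaining a surviving unit of boundary expansion.

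It then remains to check two routine points. First, the invariant is preserved: the re-establishment used at most $2\pebblesk$ rounds of the simulated game, so the surviving budget drops from $\ge\roundorig-2\pebblesk j$ to $\ge\roundorig-2\pebblesk(j+1)$, and the new induced position is again legal since it was reached while \secondplayer was still surviving. Second, \secondplayer never loses on $\xformsubst$: she could only lose if her pebbled assignment $\rho''|_{Y''}$ falsified some $\xformsubst$-clause, but then all variables of that clause lie in $Y''\subseteq\closure(Y'')$, so by the equivalence noted above the induced position would falsify an $\xformorig$-clause, contradicting legality. Hence \secondplayer survives the $\roundorig/(2\pebblesk)$-round $\pebblesk$-pebble game on~$\xformsubst$, as claimed.
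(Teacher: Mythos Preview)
Your overall plan is the direct (forward) version of what the paper does contrapositively, and the structure is right: maintain a committed assignment on the closure of the pebbled set, simulate \secondplayer's $\xformorig$-strategy on the kernel, and charge at most $2\pebblesk$ simulated rounds per round on~$\xformsubst$. But the step you flag as the main obstacle has a real gap. The assertion that ``boundaries are large enough to dodge the at most $\pebblesk-1$ pebbled variables of~$Y'$'' fails for small subsets of the new kernel $K_2=\Ker(\closure(Y''))$: for a singleton $\{u\}\subseteq K_2$ the $(2\pebblesk,2)$-expansion only guarantees $|N(u)|\ge 2$, so $N(u)$ may well lie entirely inside~$Y'$. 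When that happens the linear equation for~$u$ has no free variable and forces the induced value to equal $\bigoplus_{v\in N(u)}\rho(v)$, the \emph{old} value; but because you tear down the whole old simulated position before rebuilding, \secondplayer's fresh answer for~$u$ need not match, and then no $\rho''$ exists. Invoking \reflem{lem:ClosedSet} for $\closure(Y'')$ does not help either: the residual graph $\expandergraph\setminus\closure(Y'')$ has all of $K_2$ deleted, so there is nothing left to peel there.

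The fix is to route the transition through the \emph{intermediate} closure $\closure(Y')$: first move the simulated position from $\Ker(\closure(Y))$ to $\Ker(\closure(Y'))$, then from there to $\Ker(\closure(Y''))$. In each half-step you keep the intersection of consecutive kernels (so values forced by the retained pebbles are preserved automatically) and peel only the genuinely new kernel vertices, doing so in the residual graph $\expandergraph\setminus\closure(\cdot)$ of the \emph{starting} closure---which \reflem{lem:ClosedSet} guarantees is a $(\pebblesk,1)$-boundary expander, so \reflem{lem:peeling_lemma} applies cleanly. This two-step decomposition is precisely what the paper carries out (contrapositively, as the two subclaims following Claim~\ref{claim:inductionGameStyle}), and it is also why the per-round cost is $2\pebblesk$ rather than~$\pebblesk$.
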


\ifthenelse{\boolean{conferenceversion}}{}
{By way of comparison with~\cite{Razborov16NewKind},
  we remark that a straightforward translation of Razborov's technique
  would start with formulas on which \firstplayer can win with few
  pebbles, but needs an almost linear number of rounds to win the
  game, even if he has an infinite amount of 
  pebbles.%
  \footnote{In terms of resolution, this corresponds to formulas that are
    refutable in small width, but where every resolution refutation
    has almost linear depth.}
  Applying this without modification to Immerman's construction, we would
  obtain very weak bounds (and, in particular, nothing interesting for
  constant~$\pebblesk$). Instead, as input to our hardness condensation
  lemma we use a construction that has a round lower bound
  of~$\indexn^{1/\log \pebblesk}$, and show that for hardness
  condensation it is not necessary that the original formula is hard
  over the full range.}

Before embarking on a formal proof of 
\reflem{lem:HardnessCondensationGameStyle},
which is rather technical and will take the rest of this section, let
us discuss the intuition behind it.  The main idea to obtain a good
strategy for \secondplayer on the substituted formula~$\xformsubst$ is
to think of the game as being played on~$\xformorig$ and simulate the
survival strategy there for as long as possible 
(which is where 
\ifthenelse{\boolean{conferenceversion}}
{expansion} 
{boundary expansion} 
comes into play).

Let 
$\expandergraph = (\leftvertexset \disjointunion \rightvertexset,E)$
be an
\boundaryexpnodeg{2\pebblesk}{2} 
as stated in the lemma.
We have
$\variables(\xformorig) = \leftvertexset$ and $\variables(\xformsubst)
= \rightvertexset$. 
Given a strategy for \secondplayer in the $\roundorig$-round
$\pebblesk$-pebble game on $\xformorig$, we want to 
convert this into a winning strategy for \secondplayer for
the $\roundorig/(2\pebblesk)$-round $\pebblesk$-pebble
game on $\xformsubst$. 
A first approach 
(which will not quite work) is the following. 

While playing on the substituted formula $\xformsubst$, \secondplayer
simulates the game on~$\xformorig$. 
For every position $\possubst$ in the game on $\xformsubst$, she
maintains a corresponding position $\posorig$ on $\xformf$, which is
defined on all variables whose entire neighbourhood in the expander is
contained in the domain of  $\possubst$, \ie
$\variables(\posorig)=\Ker(\variables(\possubst))$. 
The assignments of $\posorig$ should be 
defined in such a way that they are
\emph{consistent with} $\possubst$, \ie so that
$\posorig(u)
=  %
\bigoplus_{v\in N(u)}\possubst(v)$. 
It then follows from 
\ifthenelse{\boolean{conferenceversion}}
{the definition of \xorification}
{the description of \xorification in \refdef{def:xor-substitution}}
that $\posorig$ falsifies an \XOR
clause of $\xformorig$ if and only if $\possubst$ falsifies an \XOR
clause of $\xformsubst$.  

Now \secondplayer wants to play
in such a way that if $\possubst$ changes to
$\possubst'$ in one round of the game on~$\xformsubst$, then the
corresponding position $\posorig$ also changes to $\posorig'$ in one
round of the game on~$\xformorig$.
Intuitively, this should 
be done as follows.
Suppose that starting from a position $\possubst$, \firstplayer asks
for a variable $v\in \rightvertexset$.  
If $v$ is not the last 
unassigned
vertex in a neighbourhood of some
$u\in\leftvertexset$, \ie
$\Ker(\variables(\possubst))=\Ker(\variables(\possubst)\cup\{v\})$,
then \secondplayer can make an arbitrary choice as $\posorig =
\posorig'$ is consistent with both choices. 
If $v$ is the last free vertex in the neighbourhood of exactly one
vertex $u$, \ie $\{u\}=\Ker(\variables(\possubst)\cup\{v\})\setminus
\Ker(\variables(\possubst))$, then \secondplayer assumes that she was
asked for $u$ in the simulated game on $\xformorig$.  
If
in her strategy for the $\roundorig$-round $\pebblesk$-pebble game on
$\xformorig$ she 
would answer
with an assignment $a\in\{0,1\}$ 
which would yield the new position 
$\posorig'=\posorig\cup\{u\mapsto a\}$,
then in the game on $\xformsubst$ she now sets $v$ to the right value
$b\in\{0,1\}$ 
so
that the new position
$\possubst'=\possubst\cup\{v\mapsto b\}$ satisfies the consistency
property $\posorig'(u)=\bigoplus_{v\in N(u)}\possubst'(v)$.  
If \secondplayer could follow this strategy, then the number of rounds
she would survive the game on $\xformsubst$ would be  lower-bounded by
the number of rounds she survives in the game on $\xformorig$. 

There is a gap in this intuitive argument, however, namely
how to handle the case when the
queried variable $v$ completes the neighbourhood of two (or more)
vertices
$u_1$, $u_2$ at the same time.
If it holds that
$\set{u_1,u_2} \subseteq \Ker(\variables(\possubst)\cup\{v\})\setminus
\Ker(\variables(\possubst))$,
then we have serious 
\ifthenelse{\boolean{conferenceversion}}
{problems, as $u_1$ and $u_2$ could guide to two different ways of
  assigning~$v$, implying}
{problems. Following the strategy above for $u_1$ and $u_2$ separately can
  yield two different and conflicting ways of assigning~$v$, meaning}
that for the new position~$\possubst'$ there will
be no  consistent assignment $\posorig'$ of $\Ker(\variables(\possubst'))$.  

To circumvent this problem and implement the proof idea above, 
we will use the boundary expansion of~$\expandergraph$
to ensure that this problematic case does not occur. 
For instance, suppose that the graph 
$\restrictedexpander = 
\expandersubgraph{\expandergraph}{\variables(\possubst)}$, 
which is the induced subgraph of~$\expandergraph$ on
$\leftvertexset\setminus\variables(\posorig)$ and 
$\rightvertexset\setminus\variables(\possubst)$,
has boundary expansion at least~$1$. 
Then the bad situation  described above with two variables $u_1,u_2$
having neighbourhood
$N^{\restrictedexpander}(u_1)=N^{\restrictedexpander}(u_2)=\{v\}$ in
$\restrictedexpander$ 
cannot arise, since this would imply
\mbox{$\boundary^{\restrictedexpander}(\{u_1,u_2\}) = \emptyset$}, 
contradicting the  expansion properties of~$\restrictedexpander$.
Unfortunately, we cannot ensure boundary expansion of
$\expandersubgraph{\expandergraph}{\variables(\possubst)}$ for every
position $\possubst$, but we can apply \reflem{lem:ClosedSet} and
extend the current position to a larger one that is defined on
$\closure(\variables(\possubst))$ and has the desired expansion
property.  
Since \reflem{lem:ClosedSet} ensures that the domain
$\Ker(\closure(\variables(\possubst)))$ of our assignment~$\posorig$
under construction is bounded by 
$\setsize{\posorig} \leq \setsize{\possubst} \leq \pebblesk$, 
such an extension will still be good enough. 

We now proceed to present a formal proof.  When doing so, it turns out
to be convenient for us to prove the contrapositive of the statement
discussed above. That is, instead of transforming a strategy for
\secondplayer in the $\roundorig$-round $\pebblesk$-pebble game
on~$\xformorig$ to a strategy for the $\roundorig/(2\pebblesk)$-round
$\pebblesk$-pebble game on~$\xformsubst$
for an \boundaryexpnodeg{2\pebblesk}{2}~$\expandergraph$,
we will show that a winning strategy for \firstplayer in the game
on the substituted formula~$\xformsubst$ can be used to obtain a
winning strategy for \firstplayer in the game on the original
formula~$\xformorig$. 

Suppose that $\possubst$ is any position in the $\pebblesk$-pebble game
on~$\xformsubst$, \ie a partial assignment of variables
in~$\rightvertexset$.
Since 
$\expandergraph$ is a \boundaryexpnodeg{2\pebblesk}{2} and
$\setsize{\possubst} \leq \pebblesk$, 
we can apply
\reflem{lem:ClosedSet} to obtain a superset
$\closure(\variables(\possubst))\supseteq \variables(\possubst)$
having the properties that
$\setsize{\Ker ( \closure ( \variables(\possubst) ) )}
\leq \setsize{\variables(\possubst)}$
and 
the induced subgraph
$\expandersubgraph{\expandergraph}{\closure(\variables(\possubst))}$ is
a \boundaryexpnodeg{\pebblesk}{1}.
For the rest of this section, fix a minimal such set
$\closure(\rightvertexsubset)$ for
every \mbox{$\rightvertexsubset = \variables(\possubst)$} corresponding to a
position~$\possubst$ in the $\pebblesk$-pebble game.  This will allow
us to define formally what we mean by \introduceterm{consistent}
positions in the two games on~$\xformorig$ and~$\xformsubst$ as
described next.

\begin{definition}[Consistent positions]
  \label{def:consistent-positions}
  Let $\posorig$ be a 
  position in the pebble game on~$\xformorig$, \ie a 
  partial assignment of variables in
  $\leftvertexset$, and let $\possubst$ be a partial assignment of
  variables in $\rightvertexset$
  corresponding to a position in the pebble game on~$\xformsubst$.
  We say that $\posorig$ \emph{is consistent with} $\possubst$
  if there exists an extension $\possubstextended\supseteq\possubst$ with
  $\variables(\possubstextended) =
  N \bigl( \variables(\posorig) \bigr) \cup \variables(\possubst)$
  such that
  for all
  $u\in\variables(\posorig)$
  it holds that
  \mbox{$\posorig(u)=\bigoplus_{v\in N(u)}\possubstextended(v)$}. 

  Let $\possubst$ be a position in the $\pebblesk$-pebble game on the
  XOR-substituted formula~$\xformsubst$ and let
  $\closure(\rightvertexsubset)$ be the fixed, minimal closure of
  $\possubst$ chosen above.
  Then we let
  $\posset(\possubst)$ 
  denote   the set of all positions~$\posorig$   with
  $\variables(\posorig)=\Ker(\closure(\variables(\possubst)))$ 
  in the pebble game on~$\xformorig$
  that are
  consistent with~$\possubst$. 
\end{definition}

Observe that for
$\posorig_1 \subseteq \posorig$
and
$\possubst_1 \subseteq \possubst_2$
it holds that if
$\posorig_2$ is consistent with~$\possubst_1$ then so is~$\posorig_1$, 
and if
$\posorig_1$ is consistent with~$\possubst_2$ then
$\posorig_1$ is consistent also with~$\possubst_1$.
Furthermore, by \reflem{lem:ClosedSet} we have 
$\setsize{\posorig} \leq \setsize{\possubst}$
for all
\mbox{$\posorig \in \posset(\possubst)$}.  
The next claim states the core inductive argument.

\begin{claim}
  \label{claim:inductionGameStyle}
  Let $\possubst$ be a position  on $\xformsubst$
  for an \boundaryexpnodeg{2\pebblesk}{2}~$\expandergraph$
  and suppose that 
  \firstplayer wins the $\roundi$-round $\pebblesk$-pebble game
  on~$\xformsubst$ from position~$\possubst$.  
  Then \firstplayer has a strategy to win the $\pebblesk$-pebble
  game on~$\xformorig$ within $2\pebblesk \roundi$ rounds from every
  position $\posorig\in\posset(\possubst)$. 
\end{claim}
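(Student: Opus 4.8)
The plan is to prove Claim~\ref{claim:inductionGameStyle} by induction on the round budget~$\roundi$, following the intuition already sketched above: \firstplayer's strategy on~$\xformorig$ runs a ``shadow'' copy of the $\pebblesk$-pebble game on~$\xformsubst$ in which he plays his given winning strategy, and he keeps the current real position $\posorig$ inside $\posset(\possubst)$ for the current shadow position~$\possubst$, so that a win in the shadow game translates into a win on~$\xformorig$.

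For the base case $\roundi = 0$, the hypothesis says that $\possubst$ already falsifies some clause $\constraintsubst \in \xformsubst$ coming from $\constraint = (u_1,\dots,u_\ell,\Boolvala) \in \xformorig$. Then all right variables occurring in $\constraintsubst$ lie in $\variables(\possubst)$, so $N(u_i) \subseteq \variables(\possubst) \subseteq \closure(\variables(\possubst))$ and hence $u_1,\dots,u_\ell \in \Ker(\closure(\variables(\possubst))) = \variables(\posorig)$ for every $\posorig \in \posset(\possubst)$. Unwinding the definitions of \XOR substitution and of consistency, $\bigoplus_i \posorig(u_i) = \bigoplus_i \bigoplus_{v \in N(u_i)} \possubstextended(v)$, where on the right-hand side any repeated variable cancels modulo~$2$; since $\possubstextended$ extends $\possubst$ this parity equals the falsified parity of $\constraintsubst$, so $\posorig$ falsifies $\constraint$ and \firstplayer has won the $0$-round game on~$\xformorig$.

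For the inductive step, consider \firstplayer's first move in his winning strategy for the $\roundi$-round game on~$\xformsubst$ from~$\possubst$: he deletes some pebbles, reaching $\possubst' \subseteq \possubst$ with $\setsize{\possubst'} < \pebblesk$, queries a variable $v$, and for each answer $\Boolvala \in \{0,1\}$ wins the $(\roundi-1)$-round game from $\possubst'' = \possubst' \cup \{v \mapsto \Boolvala\}$. On~$\xformorig$, \firstplayer will realize this single shadow move by at most $2\pebblesk$ real moves, in two halves. First, since $\possubst' \subseteq \possubst$, the observation following \refdef{def:consistent-positions} gives that $\posorig$ is consistent with~$\possubst'$; \firstplayer removes the pebbles on $\variables(\posorig) \setminus \Ker(\closure(\variables(\possubst')))$ and then re-queries the at most $\pebblesk$ vertices (using $\setsize{\Ker(\closure(\cdot))} \leq \pebblesk$ from \reflem{lem:ClosedSet}) of $\Ker(\closure(\variables(\possubst'))) \setminus \variables(\posorig)$ to reach a position in $\posset(\possubst')$. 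Second, once $v$ is committed, he analogously removes and re-queries at most $\pebblesk$ further vertices to reach a position in $\posset(\possubst'')$, the shadow answer~$\Boolvala$ for~$v$ being read off from the consistent extension $\possubstextended$ forced by \secondplayer's real answers. (If at any point \secondplayer's answer already falsifies a clause of~$\xformorig$, \firstplayer has won outright.) By the induction hypothesis \firstplayer finishes within $2\pebblesk(\roundi-1)$ more rounds from the resulting position in $\posset(\possubst'')$, for a total of at most $2\pebblesk\roundi$ rounds; all intermediate positions are legal because each kernel has size at most~$\pebblesk$.

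The crux, and the step I expect to be the main obstacle, is making the ``re-query'' phases work against an adversarial \secondplayer: when \firstplayer queries a new left vertex $u$, whatever bit \secondplayer plays must be extendable to a consistent assignment of the relevant right variables, and the induced assignment of the shadow variable~$v$ must be unambiguous. Here I would exploit boundary expansion: every new vertex $u \in \Ker(\closure(\variables(\possubst''))) \setminus \Ker(\closure(\variables(\possubst')))$ lives in the residual graph $\expandersubgraph{\expandergraph}{\closure(\variables(\possubst'))}$, which by \reflem{lem:ClosedSet} is a $(\pebblesk,1)$-boundary expander, so applying the peeling lemma (\reflem{lem:peeling_lemma}) to this set of size at most~$\pebblesk$ yields an order $u_1,\dots,u_m$ and distinct ``private'' neighbours $v_i \notin \closure(\variables(\possubst'))$ lying outside $N(\{u_1,\dots,u_{i-1}\})$; routing \secondplayer's answer $c_i$ through $v_i$ keeps the position consistent with~$\possubst'$, and the only right variable whose value gets pinned down along the way is~$v$ itself, which defines~$\Boolvala$. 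Checking carefully that these private neighbours never collide in a harmful way, i.e. that the consistency bookkeeping closes up, is where the real work lies, but this is exactly the pathological ``two left vertices sharing their last free neighbour'' situation that boundary expansion $\geq 1$ was set up to forbid.
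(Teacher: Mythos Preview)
Your proposal is correct and takes essentially the same approach as the paper: induction on~$\roundi$, with each shadow move on~$\xformsubst$ simulated by two phases of at most $\pebblesk$ real moves (first from $\posset(\possubst)$ to $\posset(\possubst')$, then on to $\posset(\possubst'\cup\{v\mapsto\Boolvala\})$), and in each phase the new left vertices are queried in the order supplied by the peeling lemma applied to the residual $(\pebblesk,1)$-expander so that every answer of \secondplayer can be absorbed by a private right neighbour outside the relevant closure. The paper packages the second phase slightly differently, landing in the union $\posset(\possubst'\ast v)=\posset(\possubst'\cup\{v\mapsto 0\})\cup\posset(\possubst'\cup\{v\mapsto 1\})$ rather than committing to a specific~$\Boolvala$ up front, but this is only a cosmetic difference from your ``read off~$\Boolvala$ afterwards'' formulation; one small point to be careful about is that for the \emph{first} phase the peeling must be done in $\expandersubgraph{\expandergraph}{\closure(\variables(\possubst))}$ (the closure of the larger position), not $\expandersubgraph{\expandergraph}{\closure(\variables(\possubst'))}$, since the vertices you are re-querying lie in $\Ker(\closure(\variables(\possubst')))$ and hence have no neighbours left in the latter graph.
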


\ifthenelse{\boolean{conferenceversion}}
{We note that   this claim 
  is just a stronger (contrapositive) version of 
  \reflem{lem:HardnessCondensationGameStyle}.}
{We note that this claim
  is just a stronger version of (the
  contrapositive of) \reflem{lem:HardnessCondensationGameStyle}.}

\begin{proof}[Proof of \reflem{lem:HardnessCondensationGameStyle}
  assuming Claim~\ref{claim:inductionGameStyle}]
  Note that if $\roundorig/(2\pebblesk)<1$, then the lemma is
  trivially true,
  as \firstplayer always needs at least one round to win the
  pebble game from the empty position. 
  Otherwise, we apply Claim~\ref{claim:inductionGameStyle} 
  with parameters
  $\possubst=\emptyset$ and 
  $\roundi=\roundorig/(2\pebblesk)$.
  Since
  \mbox{$\posset(\emptyset)=\{\emptyset\}$}, 
  we directly get the contrapositive statement of
  \reflem{lem:HardnessCondensationGameStyle}
  that if  \firstplayer{} wins the $\roundorig/(2\pebblesk)$-round
  $\pebblesk$-pebble game on $\xformsubst$, then he wins the
  $\roundorig$-round $\pebblesk$-pebble game on $\xformorig$. 
\end{proof}

All that remains for us to do now is to establish
Claim~\ref{claim:inductionGameStyle}, after which the hardness
condensation lemma will follow easily.

\begin{proof}[Proof of Claim~\ref{claim:inductionGameStyle}]
  The proof is by induction on $\roundi$.
  For the base case $\roundi=0$ we have to show that if $\possubst$
  falsifies an \XOR clause in~$\xformsubst$, then every assignment
  $\posorig\in\posset(\possubst)$ falsifies an \XOR clause
  in~$\xformorig$. 
  But if 
  $\possubst$ falsifies a clause of $\xformsubst$,
  which by construction has the form $\constraintsubst$ for some
  clause $\constraint$ from~$\xformorig$,
  then by
  \reftwodefs{def:xor-substitution}{def:consistent-positions}
  it holds that  every $\posorig\in\posset(\possubst)$ falsifies~$\constraint$. 
  
  For the induction step, suppose that the statement holds for $i-1$
  and assume that \firstplayer{} wins the $\roundi$-round
  $\pebblesk$-pebble game on $\xformsubst$ from position $\possubst$. 
  The $\roundi$th~round
  consists  of two steps:
  \begin{enumerate}
  \item 
    \firstplayer first chooses a subassignment
    $\possubstpart\subseteq\possubst$.
  \item
    He then asks for the value of one variable 
    $v\in \rightvertexset \setminus \Vars{\possubstpart}$,
    to which \secondplayer{} 
    chooses
    an assignment
    $\valueb \in \set{0,1}$ yielding the new position
    $\possubstpart \cup \set{v \mapsto \valueb}$. 
  \end{enumerate}
  As \firstplayer{} has a strategy to win from $\possubst$ within
  $\roundi$ rounds, it follows that he can win from both
  $\possubstpart \cup \set{v\mapsto 0}$ and
  $\possubstpart \cup \set{v\mapsto 1}$ 
  within $\roundi-1$ rounds. 
  By the inductive assumption we then 
  \ifthenelse{\boolean{conferenceversion}}
  {immediately obtain the following
    statement for the set of assignments
    \begin{equation}
      \label{eq:beta-star-v}
      \posset \bigl( \possubstpart\ast v \bigr) \defi 
      \textstyle \bigcup_{a \in \set{0,1}}
      \posset \bigl( \possubstpart \cup \set{v\mapsto a} \bigr) 
    \end{equation}}
  {deduce for the set of assignments
    \begin{equation}
      \label{eq:beta-star-v}
      \posset \bigl( \possubstpart\ast v \bigr) \defi
      \posset \bigl( \possubstpart \cup \set{v\mapsto 0} \bigr) \cup
      \posset \bigl( \possubstpart \cup \set{v\mapsto 1 } \bigr)    
    \end{equation}}
  consistent with either
  $\possubstpart \cup \set{v\mapsto 0}$ 
  \ifthenelse{\boolean{conferenceversion}}
  {or $\possubstpart \cup \set{v\mapsto 1}$.}
  {or $\possubstpart \cup \set{v\mapsto 1}$
    that the following statement holds.}

  \begin{subclaim}
    \label{subclaim:ihyp}
    \firstplayer{} can win the $\pebblesk$-pebble game on $\xformorig$
    within
    $2\pebblesk(\roundi-1)$ rounds
    from all positions in~%
    $\posset \bigl( \possubstpart\ast v \bigr)
    $.
  \end{subclaim}

  Note that a position is in 
  $\posset \bigl( \possubstpart\ast v \bigr)$
  if it is consistent with either 
  $\possubstpart\cup\{v\mapsto 0\}$ 
  or
  $\possubstpart\cup\{v\mapsto 1\}$.  
  Therefore,  
  $\posset \bigl( \possubstpart\ast v \bigr)$ 
  is the set of all positions  over 
  $\Ker \bigl( \closure \bigl( \possubstpart \bigr) \cup \set{v} \bigr)$ 
  that are consistent with~$\possubstpart$. 
  What remains to show is that from  every position
  $\posorig\in\posset(\possubst)$ \firstplayer{} can reach some
  position in
  $\posset \bigl( \possubstpart\ast v \bigr)$ within $2\pebblesk$~rounds.  
  We split the proof into two steps, corresponding to the two steps in
  the move of \firstplayer from position~$\possubst$.
  \begin{subclaim}
    \label{subclaim:statement-a}
    From every position
    $\posorig \in \posset(\possubst)$ \firstplayer{} can reach some
    position in $\posset \bigl( \possubstpart \bigr)$ 
    for 
    $\possubstpart \subseteq \possubst$
    within $\pebblesk$~rounds. 
  \end{subclaim}

  \begin{subclaim}
    \label{subclaim:statement-b}
    From every position
    $\posorig\in\posset \bigl( \possubstpart \bigr)$
    \firstplayer{} can reach some
    position in $\posset \bigl( \possubstpart\ast v \bigr)$
    within $\pebblesk$~rounds. 
  \end{subclaim}
  
  \ifthenelse{\boolean{conferenceversion}}
  {%
    We now establish Subclaim~\ref{subclaim:statement-b}. 
    The proof of Subclaim~\ref{subclaim:statement-a} is similar and
    deferred to the full-length version of the paper.}
  {%
    Let us establish Subclaims~\ref{subclaim:statement-a}
    and~\ref{subclaim:statement-b} in reverse order.}

  \begin{subproof}%
    [Proof of Subclaim~\ref{subclaim:statement-b}]
    \firstplayer starts with 
    \ifthenelse{\boolean{conferenceversion}}
    {some assignment
      $\posorigstart\in\posset\bigl( \possubstpart \bigr)$ defined over}
    {an assignment
      $\posorigstart\in\posset\bigl( \possubstpart \bigr)$, 
      which is defined over the variables}
    $\kstart = 
    \Ker\bigl( 
    \closure\bigl( \variables \bigl( \possubstpart \bigr) \bigr) 
    \bigr)$,
    and wants to reach some assignment 
    \mbox{$\posorigend \in \posset\bigl( \possubstpart\ast v \bigr)$}
    defined over the variables
    $\kend = 
    \Ker\bigl( 
    \closure \bigl( \variables \bigl( \possubstpart \bigr) 
    \cup \set{v} \bigr) 
    \bigr)$. 
    
    If 
    $
    \Ker\bigl( 
    \closure\bigl( \variables \bigl( \possubstpart \bigr) \bigr) 
    \bigr)
    =
    \Ker\bigl( 
    \closure \bigl( \variables \bigl( \possubstpart \bigr) 
    \cup \set{v} \bigr) 
    \bigr)$, 
    then \firstplayer can choose 
    $\posorigend = \posorigstart$.
    To see this, note that if $\posorigstart$ assigns a value to some
    $u \in N(v)$, then since
    $\posorigstart\in\posset\bigl( \possubstpart \bigr)$
    it holds by
    \refdef{def:consistent-positions}
    that
    $
    N(u) \subseteq
    \closure\bigl( \variables \bigl( \possubstpart \bigr) \bigr) 
    $,
    and thus
    $\posorigstart$
    is already consistent with
    $\possubstpart\cup\set{v\mapsto \valueb}$ 
    for some
    $\valueb \in \set{0,1}$.
    Hence, \firstplayer need not ask any question in this case, but
    the induction hypothesis immediately yields the desired
    conclusion.

    The more interesting case is when
    $
    \Ker\bigl( 
    \closure\bigl( \variables \bigl( \possubstpart \bigr) \bigr) 
    \bigr)
    \neq
    \Ker\bigl( 
    \closure \bigl( \variables \bigl( \possubstpart \bigr) 
    \cup \set{v} \bigr) 
    \bigr)$.
    Now \firstplayer first deletes all assignments of variables in
    $\kstart \setminus \kend$ from~$\posorigstart$
    to get~$\posorig_0$.
    Since
    $\posorig_0 \subseteq \posorigstart$
    and $\posorigstart$ is consistent with~$\possubstpart$ by assumption, 
    $\posorig_0$ is also consistent with~$\possubstpart$.
    Afterwards, he asks for all variables in 
    $\leftvertexsubset = \kend \setminus \kstart$.  
    We need to argue that regardless of how \secondplayer answers, it
    holds that \firstplayer reaches a position that is consistent
    with~$\possubstpart$ .This is where the peeling argument in
    \reflem{lem:peeling_lemma}
    is needed.

    As discussed above, by our choice of the closure
    $\closure\bigl( \variables\bigl( \possubstpart \bigr) \bigr)$
    (obtained using \reflem{lem:ClosedSet})
    we know that the bipartite graph
    $\restrictedexpander
    =
    \expandersubgraph{\expandergraph}
    {\closure\bigl( \variables\bigl( \possubstpart \bigr) \bigr)}$
    is 
    a \boundaryexpnodeg{\pebblesk}{1} 
    and furthermore that for
    $\leftvertexsubset = \kend \setminus \kstart$
    it holds that
    $\setsize{\leftvertexsubset}
    \leq 
    \setsize{\kend} 
    \leq
    \Setsize{ \variables\bigl( \possubstpart \bigr)\cup \set{v}} 
    \leq 
    \pebblesk
    $,
    as observed right after
    \refdef{def:consistent-positions}.
    Hence, we can apply 
    \reflem{lem:peeling_lemma} to~$\restrictedexpander$ 
    and~$\leftvertexsubset$ to get an ordered sequence
    $u_1,\ldots,u_\alternativetoell$ satisfying
    \mbox{$N^{\restrictedexpander}(u_i)\setminus
      N^{\restrictedexpander}(\{u_1,\ldots,u_{i-1}\})\neq \emptyset$}.  
    We will think of \firstplayer as querying the (at
    most~$\pebblesk$) vertices in~$\leftvertexsubset$ in this order, 
    after which he ends up with a position~$\posorigend$ defined on
    the variables~$\kend$. 

    To argue that the position~$\posorigend$ obtained in this way is
    consistent  with~$\possubstpart$ independently of how
    \secondplayer answers,  and is hence contained in $\posset\bigl(
    \possubstpart\ast v \bigr)$, we show inductively that all positions
    encountered during the     transition from $\posorigstart$
    to~$\posorigend$ are  consistent with $\possubstpart$.  
    As already noted, this holds for the position~$\posorig_0$
    obtained from~$\posorigstart$ by deleting all assignments  of
    variables in $\kstart \setminus \kend$. 
    For the induction step,
    let $i\geq 0$ and
    assume inductively
    that the current    position~$\posorig_i$ over 
    \begin{equation}
      \label{eq:transitional-position-domain}
      \leftvertexset_i
      \defi
      (\kstart \cap \kend) \cup
      \setdescr{u_j}{1\leq j \leq i}
    \end{equation}
    is consistent with $\possubstpart$.
    Now \firstplayer asks about the variable~$u_{i+1}$ and \secondplayer
    answers with a value~$\valuea_{i+1}$.  
    Since $\posorig_i$ is consistent with $\possubstpart$, there is an
    assignment $\possubstextended\supseteq \possubstpart$ that sets the
    variables $v\in N(\variables(\posorig_i))$ to the right values such
    that
    $\posorig_i(u) = \bigoplus_{v\in N(u)}\possubstextended(v)$ 
    for all
    $u \in \variables(\posorig_i)$.  
    By our ordering of
    \ifthenelse{\boolean{conferenceversion}}
    {$\leftvertexsubset = \set{u_1,\ldots,u_\alternativetoell}$}
    {$\leftvertexsubset = \Set{u_1,\ldots,u_\alternativetoell}$}
    chosen above
    we know that $u_{i+1}$ has at least one
    neighbour 
    on the right-hand side~$\rightvertexset$
    that is neither contained in
    \mbox{$N^{\expandergraph}(\leftvertexset_i)
      =
      N^{\expandergraph}(\variables(\posorig_i))$} 
    nor in the domain of~$\possubstpart$.  
    Hence, regardless of which value~$a_{i+1}$ \secondplayer chooses
    for her answer
    we can extend the assignment~$\possubstextended$ to the
    variables 
    $N^{\expandergraph}(u_{i+1})\setminus
    \bigl( N^{\expandergraph}\bigl( \variables(\posorig_i) \bigr)\cup
    \variables\bigl( \possubstpart \bigr) \bigr)$ 
    in such a way that
    $\bigoplus_{v\in
      N(u_{i+1})}\possubstextended(v)=\valuea_{i+1}$. 
    This shows that $\posorig_{i+1}$ defined over 
    $\leftvertexset_{i+1}
    =
    (\kstart \cap \kend) \cup
    \setdescr{u_j}{1\leq j \leq i+1}
    $
    is consistent with~$\possubstpart$.
    Subclaim~\ref{subclaim:statement-b} now follows by the induction
    principle. 
  \end{subproof}

  \ifthenelse{\boolean{conferenceversion}}{}
  {Before proving  Subclaim~\ref{subclaim:statement-a}, 
    we should perhaps point out why this claim is not vacuous. 
    Recalling the discussion just below
    \reflem{lem:ClosedSet}, this is because the condition
    $\rightvertexset_1 \subseteq \rightvertexset_2$
    does not allow us to conclude that
    $\closure(\rightvertexset_1) \subseteq \closure(\rightvertexset_2)$.

    \begin{subproof}%
    [Proof of Subclaim~\ref{subclaim:statement-a}]
    The proof is similar to that of
    Subclaim~\ref{subclaim:statement-b} above. 
    \firstplayer{} starts with an
    assignment~$\posorigstart\in\posset(\possubst)$ and wants to reach
    some 
    assignment in~$\posset(\possubstpart)$
    for $\possubstpart \subseteq \possubst$
    within $\pebblesk$~rounds.  
    By assumption,
    $\posorigstart$ is consistent with~$\possubst$ and
    therefore (since $\possubstpart\subseteq \possubst$) is also
    consistent with~$\possubstpart$.  
    \firstplayer deletes all assignments from the domain
    $\kstart=\Ker(\closure(\variables(\possubst)))$ of $\posorigstart$
    that do not occur in the domain
    $\kend=\Ker(\closure(\variables(\possubstpart)))$ of positions in
    $\posset(\possubstpart)$,
    resulting in the position
    $\posorig_0 \subseteq \posorigstart$ that is consistent
    with~$\possubstpart$.   
    Next, he applies 
    \reflem{lem:peeling_lemma} to
    $\restrictedexpander
    =
    \expandersubgraph{\expandergraph}{\closure(\variables(\possubst))}$
    to obtain an ordering of the remaining variables
    $\kend\setminus\kstart$. 
    In the same way as above he can query the variables in this order
    while maintaining the invariant that the current position is
    consistent with~$\possubstpart$. 
  \end{subproof}
  }%

  Combining
  Subclaims~\ref{subclaim:ihyp}, \ref{subclaim:statement-a}
  and~\ref{subclaim:statement-b},
  we conclude that
  \firstplayer wins from every position
  \mbox{$\posorig\in\posset(\possubst)$} 
  within $2\pebblesk\roundi$~rounds.  
  This concludes the proof of Claim~\ref{claim:inductionGameStyle}. 
\end{proof}

We are finally in a position to give a formal proof of
\reflem{lem:hardnessCondensationXOR}.

\begin{proof}[Proof of Lemma~\ref{lem:hardnessCondensationXOR}]
  Let $\mindegree \in \Nplus$ be the constant in
  \reflem{lem:expanderexistnew}.
  Suppose we are given an $\indexlhs$-variable \mbox{$\initialxorwidth$-XOR}
  formula~$\xformf$ and parameters $\pebbleslowerlhs$,
  $\pebblesupperlhs$, $\roundlowerbound$, $\expanderdegree$ 
  satisfying the conditions  in
  \reflem{lem:hardnessCondensationXOR}
  that
  $
  \pebblesupperlhs/\pebbleslowerlhs
  \geq
  \expanderdegree 
  \geq
  \mindegree
  $ 
  and  
  $(2\pebblesupperlhs
  \expanderdegree)^{2\expanderdegree} \leq \indexlhs$.

  Fix
  $\pebblesk \defi \pebblesupperlhs$
  and
  $\expansionguarantee \defi 2\pebblesupperlhs$.
  Since
  $(\expansionguarantee\expanderdegree)^{2\expanderdegree} \leq
  \indexlhs$ and $\expanderdegree\geq\mindegree$,
  we appeal to \reflem{lem:expanderexistnew} to obtain an
  \nmboundaryexp{\leftsize}{\ceiling{{\leftsize}^{3/\expanderdegree}}}
  {\expdegree}{\expguarantee}{2}{}
  $\expandergraph =  (\leftvertexset \disjointunion
  \rightvertexset,E)$,
  and applying \xorification \wrt~$\expandergraph$
  we construct the formula
  $\xformg \defi \xformf\substituted$. 
  Clearly, 
  $\xformg$
  is an $(\expanderdegree\initialxorwidth)$\nobreakdash-\XOR{} formula
  with
  $\Ceiling{\indexlhs^{3/\expanderdegree}}$ variables.
  We want to prove that \firstplayer has a winning strategy for the
  $(\expanderdegree\pebbleslowerlhs)$-pebble game
  on~$\xformg$
  as guaranteed by
  \reflem{lem:hardnessCondensationXOR}\ref{item:condensationrhs-a},
  but that he  does not win the 
  \mbox{$\pebblesupperlhs$-pebble} game on~$\xformg$ 
  within 
  ${\roundlowerbound}/{(2\pebblesupperlhs)}$~rounds
  as stated in
  \reflem{lem:hardnessCondensationXOR}\ref{item:condensationrhs-b}.

  For the upper bound 
  in   \reflem{lem:hardnessCondensationXOR}\ref{item:condensationrhs-a},
  we recall that
  \firstplayer{} has a winning strategy in the
  \mbox{$\pebbleslowerlhs$-pebble} game on~$\xformf$
  by assumption~\ref{item:condensationlhs-a} in the lemma.
  He can use this strategy to win the
  $(\expanderdegree\pebbleslowerlhs)$-pebble game on~$\xformg$ as
  follows. 
  Whenever his strategy tells him to ask for a variable $u\in
  \leftvertexset = \variables(\xformf)$, he instead asks for the at
  most $\expanderdegree$ variables in
  $N(u)\subseteq \rightvertexset = \variables(\xformg)$ 
  and assigns to $u$ the value
  that corresponds to the parity of the answers \secondplayer{}
  gives for~%
  $N(u)$. 
  In this way, he can simulate his strategy on $\xformf$ 
  until he reaches
  an assignment that contradicts an \XOR{} clause
  $\constraint$ from $\xformf$.  
  As the corresponding assignment of the variables $\setdescr{v}{v\in
    N(u),u\in \variables(\constraint)}$ falsifies the constraint
  $\subst{\constraint}{\expandergraph}\in \xformg$, 
  at this point \firstplayer wins the
  $(\expanderdegree\pebbleslowerlhs)$-pebble game on~$\xformg$. 
  
  The lower bound 
  in \reflem{lem:hardnessCondensationXOR}\ref{item:condensationrhs-b}
  follows immediately
  from \reflem{lem:HardnessCondensationGameStyle}.
  By assumption~\ref{item:condensationlhs-b} 
  in \reflem{lem:hardnessCondensationXOR},
  \firstplayer does not win the 
  \mbox{$\pebblesupperlhs$-pebble} game on~$\xformf$
  within $\roundlowerbound$~rounds.
  Since 
  $\expandergraph$ 
  is an \nmboundaryexpnodeg{\lsize}{\rsize}{2 \pebblesk}{2},
  \reflem{lem:HardnessCondensationGameStyle}
  says that that  he does not win the
  $\pebblesupperlhs$-pebble game on 
  $\xformg = \xformf\substituted$ 
  within
  ${\roundlowerbound}/{(2\pebblesupperlhs)}$~rounds either. 
  This concludes the proof of
  \reflem{lem:hardnessCondensationXOR}
\end{proof}

\makeatletter{}%

\section{Concluding Remarks}
\label{sec:conclusion}

In this paper we prove an
$n^{\bigomega{\pebblestd/\log \pebblestd}}$ lower bound
on the minimal quantifier depth of $\FOk$ and $\FOcntk$ sentences that
distinguish two finite $n$-element relational structures, nearly
matching the trivial $n^{\pebblestd-1}$ upper bound.  
By the known connection to the $\wldim$-dimensional Weisfeiler--Leman
algorithm, 
this result implies
near-optimal 
$n^{\bigomega{\wldim/\log \wldim}}$ lower bounds
also on the number of refinement steps of this algorithm. 
The key technical ingredient in our proof is the hardness condensation
technique recently introduced by Razborov~\cite{Razborov16NewKind} in
the context of proof complexity, which we translate into the language
of finite variable logics and use to reduce the domain size of
relational structures while maintaining the minimal quantifier depth
required to distinguish them.

An obvious open problem is to improve 
our
lower bound.  One way to
achieve this would be to strengthen the lower bound on the number of
rounds in the $\pebblestd$-pebble game on \mbox{$3$-\XOR{}} formulas
in \reflem{lem:pyramids} from 
$\indexn^{ 1 / \log \pebblestd}$ 
to $\indexn^\delta$ for some 
$\delta \gg 1 / \log \pebblestd$.  
By the hardness condensation lemma this would directly improve our
lower bound from $\indexn^{\bigomega{\wldim/\log\wldim}}$ to
$\indexn^{\bigomega{\delta\wldim}}$. 

The structures on which our lower bounds hold are $n$-element
relational structures of arity~$\bigtheta{\wldim}$ and
size~$\indexn^{\bigtheta{\wldim}}$.  We would have liked to have this
results also for structures of bounded arity, such as graphs.
However, the increase of the arity is inherent in the method of
amplifying hardness by making \XOR substitutions.
An optimal lower bound of $\indexn^{\bigomega{\pebblestd}}$ on the
quantifier depth required to distinguish two \mbox{$n$-vertex} graphs has
been obtained by \theauthorCB in an earlier work~\cite{Berkholz.2014}
for the \emph{existential-positive fragment} of~$\FOk$.  Determining
the quantifier rank of full $\FOk$ and $\FOcntk$ on \mbox{$n$-vertex}
graphs remains an open problem.

\newcommand{\inputSizeNotLittleN}{(\|\strucA\|+\|\strucB\|)}

\ifthenelse{\boolean{conferenceversion}}
{Another open question} 
{Another open question related to our results} 
concerns the complexity of finite variable equivalence for
non-constant $\pebblestd$.  What is the complexity of deciding, given
two structures and a parameter~$\pebblestd$, whether
the structures
are equivalent in $\FOk$ or~$\FOcntk$?  As this problem can be solved
in time $\inputSizeNotLittleN^{O(\pebblestd)}$, it is in \EXPTIME
if $\pebblestd$ is part of the input.  
It has been conjectured 
that this problem is
\mbox{\EXPTIME}-complete~\cite{GraedelKolaitisLibkinMarxSpencerVardiWeinstein.2007},
but it is not even known whether it is \mbox{\NP-hard.} 
Note that the quantifier depth is connected to the computational  
complexity of the equivalence problem by the fact that an upper bound
of the form $\indexn^{\bigoh{1}}$ on $\indexn$-element structures  
would have implied that testing equivalence is in \PSPACE{}. 
Hence, our lower bounds on the quantifier depth
can be seen as
a necessary requirement for establishing \EXPTIME-hardness of the
equivalence problem.

\makeatletter{}%

\ifthenelse{\boolean{conferenceversion}}
{\acks}
{\section*{Acknowledgements}}

We are very grateful to Alexander~Razborov for patiently explaining
\ifthenelse{\boolean{conferenceversion}}
{his  hardness condensation technique}
{the  hardness condensation technique in~\cite{Razborov16NewKind}}
during numerous and detailed discussions.
We also want to thank Neil~Immerman for
helping us find relevant references to previous works and explaining
some of the finer details in~\cite{Immerman.1981}.
\TheauthorCB wishes to acknowledge useful feedback from the
participants of  Dagstuhl Seminar~15511
\emph{The~Graph Isomorphism Problem}.
Finally, we are most indebted to the anonymous \emph{LICS} reviewers
for very detailed comments that helped improve the manuscript
considerably. 

Part of the work of \theauthorCB was performed while at KTH Royal
Institute of Technology supported by a fellowship within the
Postdoc-Programme of the German Academic Exchange Service (DAAD).
The research of \theauthorJN was supported by the
European Research Council under the European Union's Seventh Framework
Programme \mbox{(FP7/2007--2013) /} ERC grant agreement no.~279611
and by
Swedish Research Council grants 
\mbox{621-2010-4797}
and
\mbox{621-2012-5645}.

\bibliography{refArticles,refBooks,refOther,refLocal}%

\bibliographystyle{alpha}

\appendix

\makeatletter{}%
\section{Existence and Properties of Expander Graphs}
\label{app:existence-expander}

In this appendix we present proofs of
\reftwolems{lem:ClosedSet}{lem:expanderexistnew}, 
starting with the latter lemma.
We again remark that most of this material can already be
found in a similar form in~\cite{Razborov16NewKind}, although the
exact
parameters are somewhat different. 
It also seems appropriate to point out 
that there is a significant overlap with essentially
identical technical lemmas in~\cite{BN16Supercritical}.  

Just to avoid ambiguity, let us state explicitly that even though we
have the Euler number~$\eulernumbervariable$ appearing below, 
we still think of all logarithms as being taken to base~$2$ (though
this should not really matter).

\begin{lem:expanderexists}
  There is 
  an absolute constant $\mindegree \in \Nplus$ 
  such that for all 
  $
  \expanderdegree
  $, 
  $\expansionguarantee$,  
  $\leftsize$
  satisfying
  $
  \expanderdegree \geq \mindegree
  $
  and 
  \mbox{$(\expansionguarantee\expanderdegree)^{2\expanderdegree} \leq \leftsize$}
  there exist
  \nmboundaryexp{\leftsize}{\Ceiling{{\leftsize}^{3/\expanderdegree}}}
  {\expdegree}{\expguarantee}{2}{}s.
\end{lem:expanderexists}

\begin{proof}
  Let $\leftvertexset$ and $\rightvertexset$ be two disjoint sets of
  vertices of size $\setsize{\leftvertexset}=\leftsize$ and
  $\setsize{\rightvertexset}=\rightsize =
  \Ceiling{\leftsize^{3/\expanderdegree}}$.  
  For every $u\in\leftvertexset$ we choose $\expanderdegree$ times a
  neighbour $v\in\rightvertexset$ uniformly at random with
  repetitions.  
  This 
  yields
  a bipartite graph $\expandergraph = (\leftvertexset
  \disjointunion  \rightvertexset,E)$  of left-degree at most
  $\expanderdegree$.  
  In the sequel we show that 
  $\expandergraph$ is likely to be an
  \boundaryexp{\expdegree}{\expguarantee}{2}. 

  First note that for every set
  $\leftvertexsubset\subseteq\leftvertexset$ all neighbours $v\in
  \nbhd(\leftvertexsubset)\setminus\boundary(\leftvertexsubset)$ that
  are not in the boundary of~$\leftvertexsubset$ have at least two
  neighbours in~$\leftvertexsubset$.  
  Since there are at most
  $\expanderdegree\setsize\leftvertexsubset-\setsize{\boundary(\leftvertexsubset)}$
  edges between $\leftvertexsubset$ and
  $\nbhd(\leftvertexsubset)\setminus\boundary(\leftvertexsubset)$, it
  follows that
  $\setsize{\nbhd(\leftvertexsubset)\setminus\boundary(\leftvertexsubset)}\leq
  (\expanderdegree\setsize\leftvertexsubset-\setsize{\boundary(\leftvertexsubset)})/2$
  and hence  
  \begin{equation}
    \setsize{\nbhd(\leftvertexsubset)} 
    \leq
    \frac{\setsize{\boundary(\leftvertexsubset)} +
      \expanderdegree\setsize\leftvertexsubset}
    {2}
    \eqperiod
    \label{eq:neighbourhoodbound}
  \end{equation}
  If $\expandergraph$ is not an
  \boundaryexp{\expdegree}{\expguarantee}{2}, then there is a set
  $\leftvertexsubset$ of size
  $\Setsize{\leftvertexsubset} = 
  \leftvertexsubsetsize \leq
  \expansionguarantee$ 
  that has a boundary~$\boundary(\leftvertexsubset)$ 
  of size 
  $\Setsize{\boundary(\leftvertexsubset)} < 
  2 \leftvertexsubsetsize$ 
  and from~\eqref{eq:neighbourhoodbound} it then
  follows that
  $\setsize{\nbhd(\leftvertexsubset)}
  <
  (1+\expanderdegree/2)\leftvertexsubsetsize$.  
  By a union bound argument (and relaxing to non-strict inequalities)
  we obtain 
  \begin{subequations}
    \begin{align}
      &\Pr[\expandergraph \text{ is not an
        \boundaryexp{\expdegree}{\expguarantee}{2}}] 
      \\ 
       \leq \ & %
        \sum^{\expansionguarantee}_{\leftvertexsubsetsize=1}        
        \ \sum_{\leftvertexsubset \subseteq \leftvertexset ;\,
          \setsize{\leftvertexsubset} = \leftvertexsubsetsize}
      \Pr\big[
      \setsize{\boundary(\leftvertexsubset)} \leq
      2 \leftvertexsubsetsize 
      \big] 
      \\ 
      \leq \ & %
        \sum^{\expansionguarantee}_{\leftvertexsubsetsize=1}
        \ \sum_{\leftvertexsubset \subseteq \leftvertexset ;\,
          \setsize{\leftvertexsubset} = \leftvertexsubsetsize}
        \Pr\big[
        \setsize{\nbhd(\leftvertexsubset)}
        \leq
        (1+\expanderdegree/2)\leftvertexsubsetsize
        \big] 
      \\ 
      \leq \ & %
        \sum^{\expansionguarantee}_{\leftvertexsubsetsize=1}
        \binom{\leftsize}{\leftvertexsubsetsize}
        \binom{\rightsize}{(1+\expanderdegree/2)\leftvertexsubsetsize}
        \left(
        \frac{(1+\expanderdegree/2)\leftvertexsubsetsize}{\rightsize}
        \right)^{\expanderdegree\leftvertexsubsetsize} 
        \label{eq:binom}
      \\ 
      \leq \ & %
        \sum^{\expansionguarantee}_{\leftvertexsubsetsize=1}
        \leftsize^\leftvertexsubsetsize
        \left(
        \frac{\eulernumber\rightsize}
        {(1+\expanderdegree/2)\leftvertexsubsetsize}
        \right)^{(1+\expanderdegree/2)\leftvertexsubsetsize}
        \left(
        \frac{(1+\expanderdegree/2)\leftvertexsubsetsize}{\rightsize}
        \right)^{\expanderdegree\leftvertexsubsetsize} 
        \label{eq:binombound}
      \\ 
      = \ & %
        \sum^{\expansionguarantee}_{\leftvertexsubsetsize=1}
        \leftsize^\leftvertexsubsetsize
        (\eulernumber\rightsize)^{(1+\expanderdegree/2)\leftvertexsubsetsize}
        \left(
        (1+\expanderdegree/2)\leftvertexsubsetsize
        \right)^{(\expanderdegree/2-1)\leftvertexsubsetsize}
        \rightsize^{-\expanderdegree\leftvertexsubsetsize} 
      \\ 
      \label{eq:line-minus-two}
      \leq \ & %
        \sum^{\expansionguarantee}_{\leftvertexsubsetsize=1} 
        \rightsize^{(\expanderdegree/3)\ell} 
        (\eulernumber\rightsize)^{(1+\expanderdegree/2)\leftvertexsubsetsize}
        \left(
        (1+\expanderdegree/2)\leftvertexsubsetsize
        \right)^{(\expanderdegree/2-1)\leftvertexsubsetsize} 
        \rightsize^{-\expanderdegree\leftvertexsubsetsize} 
      \\
      \label{eq:line-minus-one}
      = \ & %
        \sum_{\ell=1}^\expansionguarantee
        \rightsize^{(\expanderdegree/3)\ell}
        \rightsize^{\frac{\log \eulernumber}{\log \rightsize} 
        (1+\expanderdegree/2)\leftvertexsubsetsize}
        \rightsize^{\frac{1}{\log \rightsize}
        \log\bigl((\expanderdegree/2+1)\ell \bigr) 
        (\expanderdegree/2-1)\leftvertexsubsetsize}
        \rightsize^{(-\expanderdegree/2+1)\ell}
      \\
      \label{eq:expander_proof_bigsum}
    \leq \ & %
      \sum_{\ell=1}^\expansionguarantee
      \rightsize^{\bigl(
      \frac{\log \eulernumber}
      {\log \rightsize}
      \expanderdegree+\frac{1}{\log \rightsize}
      \log ( \expanderdegree\expansionguarantee ) 
      (\expanderdegree/2 - 1) - \expanderdegree/6 + 1
      \bigr)
      \ell} 
      \eqcomma
    \end{align}
  \end{subequations}
where in going from~\eqref{eq:binom} to~\eqref{eq:binombound} we use
  the inequality
  $
  \binom{n}{k}
  \leq
  \bigl(
  \frac{\eulernumbervariable{}n}{k}
  \bigr)^k
  $
  for
  $\eulernumbervariable \approx \eulernumberfourdigits$
  denoting the Euler number,
  and in going from~\refeq{eq:line-minus-one}
  to~\refeq{eq:expander_proof_bigsum}
  we assume that
  $\expanderdegree \geq 2$ 
  and also use that
  $\ell\leq\expansionguarantee$.   

  In order to show that the
  expression~\refeq{eq:expander_proof_bigsum} is bounded away
  from~$1$---which implies that 
  $\expandergraph$ 
  is an 
  \boundaryexp{\expdegree}{\expguarantee}{2}
  with constant probability---%
  it suffices to study the exponent and prove that
  there is a constant $\smallepsilon > 0$ such that
    \begin{equation}
    \label{eq:exponent-lt-zero}
    \frac{\log \eulernumber}{\log \rightsize}
    \expanderdegree + 
    \frac{1}{\log \rightsize}
    \log ( \expanderdegree\expansionguarantee ) 
    \left( \frac{\expanderdegree}{2} - 1 \right) 
    - \frac{\expanderdegree}{6} + 1 
    \leq -\smallepsilon
    < 0
    \eqcomma
  \end{equation}
  which holds if there is a constant 
  $\smallepsilonalt = \smallepsilon / \expanderdegree$ 
  such that
  \begin{equation}
    \label{eq:exponent-div-Delta-lt-zero}
    \frac{\log \eulernumber}{\log \rightsize} +
    \frac{1}{\log \rightsize} 
    \log( \expanderdegree\expansionguarantee ) 
    \left(
      \frac{1}{2} - \frac{1}{\expanderdegree} 
    \right) 
    - \frac{1}{6} + \frac{1}{\expanderdegree}  
    \leq - \smallepsilonalt
    < 0
    \eqperiod
  \end{equation}
  Since
  $
  (\expansionguarantee\expanderdegree)^{2\expanderdegree} 
  \leq 
  \leftsize 
  \leq \rightsize^{\expanderdegree/3}$ 
  we have 
  $\expansionguarantee\expanderdegree \leq \rightsize^{1/6}$,
  and it follows that
  \begin{subequations}
    \begin{align}
      &\frac{\log \eulernumber}{\log \rightsize} +
        \frac{\log ( \expanderdegree\expansionguarantee ) 
        (1 / 2 - 1 / \expanderdegree)}
        {\log \rightsize}
        - \frac{1}{6} + \frac{1}{\expanderdegree} 
      \\
      \leq \ 
      &\frac{\log \eulernumber}{\log \rightsize} +
        \frac{\log \bigl( \rightsize^{1/6} \bigr)}{2 \log \rightsize}
        - \frac{1}{6} + \frac{1}{\expanderdegree} 
      \\
      \label{eq:before-exponent-punchline}
      = \ 
      &\frac{\log \eulernumber}{\log \rightsize} 
        - \frac{1}{12} + \frac{1}{\expanderdegree} 
      \\
      \label{eq:exponent-punchline}
      \leq \ & {-\smallepsilonalt} < 0
    \end{align}
  \end{subequations}
  where
  we can make the last inequality hold for
  $\smallepsilonalt$ small enough and
  $\rightsize$ and $\expanderdegree$ large enough.
  Fix $\smallepsilonalt$ satisfying $0 < \smallepsilonalt < 1/12$
  and choose~$\rightsize_0$ so that
  the inequality between
  \refeq{eq:before-exponent-punchline}
  and~\refeq{eq:exponent-punchline} holds for any
  $\rightsize \geq \rightsize_0$
  and
  $\expanderdegree\geq 13$.
  Then we obtain that
  \eqref{eq:expander_proof_bigsum}
  is bounded by 
  $\sum_{\ell=1}^\expansionguarantee \rightsize^{-\smallepsilonalt \ell}$. 
  Insisting in addition \mbox{that
  $\rightsize \geq  3^{1/\smallepsilonalt}$},
  we can upper-bound   \eqref{eq:expander_proof_bigsum} by 
  \begin{align}
    \sum_{\ell=1}^\expansionguarantee \rightsize^{-\smallepsilonalt
    \ell} 
    \leq 
    \sum_{\ell=1}^\infty \bigl(\tfrac13\bigr)^\ell 
    \leq 
    \tfrac12 
    \eqperiod
  \end{align}
  It remains to calculate how to set $\mindegree$
  to make sure that all of these conditions hold.
  Note that by assumption we have
  \mbox{$(\expansionguarantee\expanderdegree)^{2\expanderdegree}
    \leq \leftsize$},
  which implies that
  $\expanderdegree^{\expanderdegree} \leq \leftsize$.  
  It follows that 
  we will always have
  $
  \rightsize 
  =
  \Ceiling{\leftsize^{3/\expanderdegree}}
  \geq
  (\expanderdegree^\expanderdegree)^{3/\expanderdegree} 
  = 
  \expanderdegree^3 
  \geq 
  (\mindegree)^3$
  and hence it is sufficient to choose
  $
  \mindegree
  \geq
  \max \bigl( 
  {\rightsize_0}^{1/3} ,
  3^{1/3\smallepsilonalt} , 
  13
  \bigr)$.  
  This concludes the proof of the lemma.
\end{proof}

We next prove that in a good enough boundary expander it holds that
for any small enough right vertex set~$\rightvertexsubset$ there is a superset
$\closure\bigl(\rightvertexsubset\bigr) \supseteq \rightvertexsubset$
with a small kernel such that the induced subgraph
$\expandersubgraph{\expandergraph}{\closure(\rightvertexsubset)}$  
(obtained from~$\expandergraph$ by deleting $\rightvertexsubset$ and
then all isolated vertices from $\leftvertexset$) 
is also a good boundary expander.
Recall that we refer to this set $\closure\bigl(\rightvertexsubset\bigr)$
as the \introduceterm{closure} of~$\rightvertexsubset$.

\begin{lem:closedset}
  Let $\expandergraph$ be an
  \boundaryexpnodeg{\expguarantee}{2}.
  Then for every $\rightvertexsubset \subseteq \rightvertexset$ with
  $\setsize{\rightvertexsubset}\leq \expansionguarantee/2$ 
  there exists a subset
  $\closure\bigl(\rightvertexsubset\bigr) \subseteq \rightvertexset$ 
  with
  $\closure(\rightvertexsubset) \supseteq \rightvertexsubset$
  such that 
  $\Setsize{\Ker \bigl( \closure \bigl( \rightvertexsubset \bigr)\bigr)}
  \leq \Setsize{\rightvertexsubset}$
  and the induced subgraph
  $\expandersubgraph{\expandergraph}{\closure(\rightvertexsubset)}$ is
  an
  \boundaryexpnodeg{\expguarantee/2}{1}.
\end{lem:closedset}

\newcommand{\vset}{V}
\newcommand{\uset}{U}
\newcommand{\utilde}{\leftvertexsubset}
\newcommand{\ubar}{\overline{U}}
\renewcommand{\ubar}{U^*}
\newcommand{\finindex}{\tau}

\begin{proof}
  Let 
  $\expandergraph=(\leftvertexset \disjointunion \rightvertexset,E)$ 
  be an 
  \boundaryexpnodeg{\expguarantee}{2} and let
  $\rightvertexsubset\subseteq\rightvertexset$
  have size
  \mbox{$\setsize{\rightvertexsubset}\leq \expansionguarantee/2$}. 
  We construct an increasing sequence
  $\rightvertexsubset=\vset_0\subset \vset_1 \subset \cdots \subset
  \vset_\finindex=\closure(\rightvertexsubset)$ such that
  $\expandersubgraph{\expandergraph}{\vset_\finindex}$ is an
  \boundaryexpnodeg{\expguarantee/2}{1}
  as follows.

  If
  $\expandersubgraph{\expandergraph}{\vset_0}$ 
  is not an
  \boundaryexpnodeg{\expguarantee/2}{1}, then there exists a set
  $\uset_1$ of size 
  $\setsize{\uset_1} \leq \expguarantee/2$ 
  such that
  $\Setsize{\boundary^{\expandersubgraph{\expandergraph}{\vset_0}}(\uset_1)}
  \leq \setsize{\uset_1}$.  
  Delete
  $\uset_1$ and all its neighbours
  from~$\expandersubgraph{\expandergraph}{\vset_0}$.  
  If the resulting graph is not an
  \boundaryexpnodeg{\expguarantee/2}{1}, we repeat this process and
  iteratively delete vertex sets that do not satisfy the expansion
  condition. 
  Formally, for $i\geq 1$ fix $\uset_i$ to be any set of size 
  $\setsize{\uset_i} \leq \expguarantee/2$ 
  such that  
  \begin{equation}
    \label{eq:U-i-set-condition}
    \Setsize{\boundary^{\expandersubgraph{\expandergraph}{\vset_{i-1}}}(\uset_i)}
    \leq
    \setsize{\uset_i}
    \eqcomma    
  \end{equation}
  where we set 
  \begin{equation}
    \label{eq:V-i-set-definition}
    \vset_{i} \defi \vset_0
    \cup
    \bigcup_{j=1}^{i} \nbhd^{\expandergraph}(\uset_j)
  \end{equation}
  (and where we note that, formally speaking, what is deleted at the
  \mbox{$i$th step} is $\nbhd^{\expandergraph}(\uset_i)$ together with the
  kernel~$\ker(\nbhd^{\expandergraph}(\uset_i))$ of this right vertex
  set).  
  Since all sets~$\uset_i$ constructed above are non-empty,
  this process must terminate  for some
  $i=\finindex$ and the resulting graph
  $\expandersubgraph{\expandergraph}{\vset_\finindex}$ is then an
  \boundaryexpnodeg{\expguarantee/2}{1}
  (note that an empty graph without vertices vacuously satisfies the
  expansion  condition).     It remains to verify that the size condition
  $\setsize{\Ker(\vset_\finindex)}\leq \setsize{\vset_0}$
  for the kernel of the closure of~$\rightvertexsubset$ holds.
  This is immediately implied by the following inductive claim. 
  
  \begin{claim}
    \label{claim:closure}
    Let 
    $\vset_{-1} = \uset_0 = \emptyset$ and suppose that $i\geq0$. 
    Then for $\uset_i$ satisfying~\refeq{eq:U-i-set-condition}
    and $\vset_i$ defined by~\refeq{eq:V-i-set-definition}
    the following properties hold:
    \begin{enumerate}
    \item
      \label{item:proofclosure1} 
      For all $\leftvertexsubset$ such that $\Ker(\vset_{i-1})\cup
      \uset_{i}\subseteq \leftvertexsubset \subseteq \Ker(\vset_i)$ 
      we have
      $\Setsize{\boundary^{\expandergraph}(\leftvertexsubset)\setminus\vset_0}
      \leq
      \setsize{\Ker(\vset_i)}$. 
    \item
      \label{item:proofclosure2} 
      The kernel of~$\vset_i$ has size
      $\setsize{\Ker(\vset_i)}\leq \setsize{\vset_0}$.
    \end{enumerate} 
  \end{claim}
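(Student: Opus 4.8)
My plan is to prove parts~\ref{item:proofclosure1} and~\ref{item:proofclosure2} simultaneously by induction on~$i$, deriving part~\ref{item:proofclosure1} at level~$i$ from the inductive hypothesis at level~$i-1$ and then feeding it into part~\ref{item:proofclosure2} at level~$i$. The base case $i=0$ is essentially trivial: with $\vset_0=\rightvertexsubset$ and $\vset_{-1}=\uset_0=\emptyset$, any $\leftvertexsubset\subseteq\Ker(\vset_0)$ satisfies $\nbhd^{\expandergraph}(\leftvertexsubset)\subseteq\vset_0$ by definition of the kernel, so $\boundary^{\expandergraph}(\leftvertexsubset)\setminus\vset_0=\emptyset$, and applying the $(\expansionguarantee,2)$-boundary expansion of $\expandergraph$ to $\Ker(\vset_0)$, together with $\boundary^{\expandergraph}(\Ker(\vset_0))\subseteq\nbhd^{\expandergraph}(\Ker(\vset_0))\subseteq\vset_0$ and $\setsize{\vset_0}\le\expansionguarantee/2$, gives $\setsize{\Ker(\vset_0)}\le\setsize{\vset_0}$. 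Before the inductive step I would record three book-keeping facts that follow directly from the construction: since $\uset_i$ is chosen among the left vertices $\leftvertexset\setminus\Ker(\vset_{i-1})$ of $\expandersubgraph{\expandergraph}{\vset_{i-1}}$ we have $\uset_i\cap\Ker(\vset_{i-1})=\emptyset$; since $\nbhd^{\expandergraph}(\uset_i)\subseteq\vset_i$ we have $\uset_i\subseteq\Ker(\vset_i)$; and since $\nbhd^{\expandergraph}(\uset_j)\subseteq\vset_j\subseteq\vset_{i-1}$ for $j<i$ we have $\uset_j\subseteq\Ker(\vset_{i-1})$ for all $j<i$.

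For the inductive step I would fix an admissible set $\leftvertexsubset$ with $\Ker(\vset_{i-1})\cup\uset_i\subseteq\leftvertexsubset\subseteq\Ker(\vset_i)$ and classify every $v\in\boundary^{\expandergraph}(\leftvertexsubset)$ by the location of its unique neighbour $u$ in $\leftvertexsubset$: either $u\in\Ker(\vset_{i-1})$ or $u\in\leftvertexsubset\setminus\Ker(\vset_{i-1})$. In the first case, since $\Ker(\vset_{i-1})\subseteq\leftvertexsubset$, the vertex $v$ has exactly one neighbour in $\Ker(\vset_{i-1})$, so $v\in\boundary^{\expandergraph}(\Ker(\vset_{i-1}))$, and the inductive hypothesis~\ref{item:proofclosure1} applied to $\Ker(\vset_{i-1})$ (which is admissible at level $i-1$) bounds the number of such $v$ outside $\vset_0$ by $\setsize{\Ker(\vset_{i-1})}$. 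In the second case I would argue that no such $v$ lies in $\vset_{i-1}\setminus\vset_0=\bigcup_{j<i}\nbhd^{\expandergraph}(\uset_j)$ — such a $v$ would have a neighbour in some $\uset_j\subseteq\Ker(\vset_{i-1})$, contradicting that its unique $\leftvertexsubset$-neighbour lies outside $\Ker(\vset_{i-1})$ — and that every such $v$ outside $\vset_{i-1}$ lies in $\boundary^{\expandersubgraph{\expandergraph}{\vset_{i-1}}}(\uset_i)$. This last inclusion is precisely where the hypothesis $\leftvertexsubset\subseteq\Ker(\vset_i)$ enters: for $u\in\Ker(\vset_i)\setminus\Ker(\vset_{i-1})$ one has $\nbhd^{\expandergraph}(u)\subseteq\vset_i$, so the neighbourhood of $u$ in $\expandersubgraph{\expandergraph}{\vset_{i-1}}$ is contained in $\vset_i\setminus\vset_{i-1}\subseteq\nbhd^{\expandergraph}(\uset_i)\setminus\vset_{i-1}=\nbhd^{\expandersubgraph{\expandergraph}{\vset_{i-1}}}(\uset_i)$; hence such a $v$ has at least one neighbour in $\uset_i$, and at most one because $\uset_i\subseteq\leftvertexsubset$. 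By~\refeq{eq:U-i-set-condition} the number of vertices of the second type outside $\vset_0$ is then at most $\setsize{\uset_i}$, so $\setsize{\boundary^{\expandergraph}(\leftvertexsubset)\setminus\vset_0}\le\setsize{\Ker(\vset_{i-1})}+\setsize{\uset_i}=\setsize{\Ker(\vset_{i-1})\cup\uset_i}\le\setsize{\Ker(\vset_i)}$, which is part~\ref{item:proofclosure1} for~$i$.

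To deduce part~\ref{item:proofclosure2} for~$i$ I would combine part~\ref{item:proofclosure1} with the global expansion of~$\expandergraph$. Choosing an admissible $\leftvertexsubset$ with $\setsize{\leftvertexsubset}=\min\bigl(\setsize{\Ker(\vset_i)},\expansionguarantee\bigr)$ — which is possible since $\setsize{\Ker(\vset_{i-1})\cup\uset_i}\le\setsize{\vset_0}+\setsize{\uset_i}\le\expansionguarantee$ — and applying $(\expansionguarantee,2)$-expansion to $\leftvertexsubset$ while using $\boundary^{\expandergraph}(\leftvertexsubset)\subseteq\vset_i$, the sharper boundary count $\setsize{\Ker(\vset_{i-1})}+\setsize{\uset_i}$ from part~\ref{item:proofclosure1}, the inductive hypothesis $\setsize{\Ker(\vset_{i-1})}\le\setsize{\vset_0}\le\expansionguarantee/2$, and $\setsize{\uset_i}\le\expansionguarantee/2$, one gets $2\setsize{\leftvertexsubset}\le\setsize{\vset_0}+\setsize{\Ker(\vset_{i-1})}+\setsize{\uset_i}\le\tfrac32\expansionguarantee$, so $\setsize{\leftvertexsubset}<\expansionguarantee$. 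This forces $\leftvertexsubset=\Ker(\vset_i)$ (in particular $\setsize{\Ker(\vset_i)}\le\expansionguarantee$), and re-running the estimate with $\leftvertexsubset=\Ker(\vset_i)$ now gives $2\setsize{\Ker(\vset_i)}\le\setsize{\vset_0}+\setsize{\Ker(\vset_i)}$, i.e.\ $\setsize{\Ker(\vset_i)}\le\setsize{\vset_0}$. Part~\ref{item:proofclosure2} with $i=\finindex$ then yields the bound $\setsize{\Ker(\closure(\rightvertexsubset))}\le\setsize{\rightvertexsubset}$ required for \reflem{lem:ClosedSet}.

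The step I expect to be the main obstacle is the case analysis in part~\ref{item:proofclosure1}: one must ensure that every boundary vertex of $\leftvertexsubset$ is charged exactly once and to the right set, that the two admissibility conditions $\Ker(\vset_{i-1})\subseteq\leftvertexsubset$ and $\leftvertexsubset\subseteq\Ker(\vset_i)$ are invoked precisely where they are needed, and that the notion of ``having a unique neighbour'' is transported correctly between $\expandergraph$ and the induced subgraph $\expandersubgraph{\expandergraph}{\vset_{i-1}}$ (using that deleting right vertices and then isolated left vertices creates no new edges). The only other delicate point is the constant bookkeeping in part~\ref{item:proofclosure2}, where one relies on $\setsize{\vset_0}\le\expansionguarantee/2$ and $\setsize{\uset_i}\le\expansionguarantee/2$ to force $\setsize{\leftvertexsubset}<\expansionguarantee$.
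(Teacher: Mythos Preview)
Your proposal is correct and follows essentially the same inductive scheme as the paper: prove Property~\ref{item:proofclosure1} from the inductive hypothesis and the non-expansion condition~\refeq{eq:U-i-set-condition}, then feed the resulting bound $\setsize{\Ker(\vset_{i-1})}+\setsize{\uset_i}$ together with global expansion into Property~\ref{item:proofclosure2}. The organisational differences are minor: for Property~\ref{item:proofclosure1} the paper first reduces from an arbitrary admissible $\leftvertexsubset$ to the fixed set $\Ker(\vset_{i-1})\cup\uset_i$ and then decomposes its boundary, whereas you split $\boundary^{\expandergraph}(\leftvertexsubset)$ directly by the location of the unique neighbour; for Property~\ref{item:proofclosure2} the paper does an explicit case split on $\setsize{\Ker(\vset_i)}\le\expansionguarantee$ versus $>\expansionguarantee$, while your $\min(\setsize{\Ker(\vset_i)},\expansionguarantee)$ device merges the two cases into one pass. (One small cleanup: in your base case you apply expansion to $\Ker(\vset_0)$ without first checking $\setsize{\Ker(\vset_0)}\le\expansionguarantee$; the same $\min$ trick you use in the inductive step disposes of this.)
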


\newcommand{\propclosureone}{Property~\ref{item:proofclosure1}\xspace} 
\newcommand{\propclosuredtwo}{Property~\ref{item:proofclosure2}\xspace} 
  \newcommand{\claimclosure}{Claim~\ref{claim:closure}\xspace}

  For $i=0$, 
  \propclosureone  in  \claimclosure
  follows because 
  $\leftvertexsubset
  \subseteq \Ker(\vset_0)$ 
  implies
  that
  $\boundary^{\expandergraph}(\leftvertexsubset)\subseteq \vset_0$.  
  For 
  \propclosuredtwo,
  suppose that $\setsize{\Ker(\vset_0)}\leq
  \expansionguarantee$.  
  Then expansion implies 
  $
  2\setsize{\Ker(\vset_0)} \leq
  \setsize{\boundary^{\expandergraph}(\Ker(\vset_0))}$,
  and combining this with
  $\boundary^{\expandergraph}(\ker(\vset_0)) \subseteq \vset_0$
  we obtain
  $\setsize{\Ker(\vset_0)} \leq \frac{1}{2}\setsize{\vset_0}$.
  If instead
  $\setsize{\Ker(\vset_0)} > \expansionguarantee$,
  then we can find a subset
  $\leftvertexsubset\subseteq \Ker(\vset_0)$ of size
  $\setsize{\leftvertexsubset} = \expansionguarantee$.  
  By expansion we have
  $\setsize{\boundary^{\expandergraph}(\leftvertexsubset)}
  \geq
  2\expansionguarantee$,
  which is a contradiction because
  as argued above we should have
  $\Setsize{\boundary^{\expandergraph}(\leftvertexsubset)}
  \leq 
  \setsize{\vset_0}
  \leq 
  \expansionguarantee/2$.

  For the induction step, suppose that both properties hold
  \mbox{for $i-1$}. 
  Let $\ubar = \Ker(\vset_{i-1})\cup \uset_{i}$
  and consider any 
  $\leftvertexsubset$
  satisfying
  $
  \ubar
  \subseteq \leftvertexsubset \subseteq \Ker(\vset_i)$. 
  We claim that every boundary element in
  $\boundary^{\expandergraph}(\utilde)$ is
  either a boundary element from 
  $\boundary^{\expandergraph}(\ubar)$ or
  is
  contained in
  $\vset_0$.  
  To see this, note that since
  $\utilde\subseteq \Ker(\vset_i)$ we have
  $\boundary^{\expandergraph}(\utilde) 
  \subseteq \vset_i 
  = \vset_0 \cup
  \bigcup_{j=1}^{i} \nbhd^{\expandergraph}(\uset_j)$.  
  Furthermore, it can be observed that
  $\bigcup_{j=1}^{i} \uset_j \subseteq
  \ubar\subseteq \utilde$
  (this is basically due to the fact that
  $\nbhd(\Ker(\rightvertexsubset)) \subseteq \rightvertexsubset$
  for any~$\rightvertexsubset$). 
  Hence,
  if
  $v \in \boundary^{\expandergraph}(\utilde) \setminus \vset_0$,
  then it must hold that
  $v \in \bigcup_{j=1}^{i} \nbhd^{\expandergraph}(\uset_j)$,
  and so the unique neighbour of~$v$ on the left is
  contained in
  $\bigcup_{j=1}^{i} \uset_j$ 
  and 
  therefore
  also in $\ubar$, implying that
  $v\in \boundary(\ubar)$. 
  This yields that
  \begin{equation}
    \label{eq:boundary-U-prime}
    \boundary^{\expandergraph}(\utilde)\setminus \vset_0 \subseteq
    \boundary^{\expandergraph}(\ubar)\setminus \vset_0 
  \end{equation}
  as claimed, and in what follows we will show
  \begin{equation}
    \label{eq:boundary-U-star}
    \Setsize{\boundary^{\expandergraph}(\ubar) \setminus \vset_0}
    =
    \Setsize{\boundary^{\expandergraph}(\Ker(\vset_{i-1}) \cup
      \uset_{i}) \setminus \vset_0}
    \leq 
    \setsize{\Ker(\vset_i)}
  \end{equation}
  in order
  to prove \propclosureone.

  Note that by
  construction
  every vertex in $\vset_{i-1}\setminus\vset_0$
  has at least one neighbour in $\Ker(\vset_{i-1})$. 
  It follows that all new boundary vertices 
  in
  $
  \boundary^{\expandergraph}(\Ker(\vset_{i-1})\cup\uset_{i})
  \setminus
  \boundary^{\expandergraph}(\Ker(\vset_{i-1}))
  $ 
  are either from~$\vset_0$  or from the boundary
  $\boundary^{\expandersubgraph{\expandergraph}{\vset_{i-1}}}(\uset_i)$
  of $\uset_i$ outside of $\vset_{i-1}$.   
  Therefore we have
  \begin{equation}
    \label{eq:boundary-ker-V-iminus1-U-i}
    \boundary^{\expandergraph}(\ubar) \setminus \vset_0
    =
    \boundary^{\expandergraph} \bigl( \Ker(\vset_{i-1}) \cup
    \uset_{i} \bigr)
    \setminus \vset_0 
    \subseteq
    \bigl( \boundary^{\expandergraph}(\Ker(\vset_{i-1})) \setminus \vset_0\bigr)
    \disjointunion
    \boundary^{\expandersubgraph{\expandergraph}{\vset_{i-1}}}(\uset_i)    
    \eqperiod
  \end{equation}
  Since by assumption $\uset_i$  does not satisfy the expansion
  condition we know that 
  \begin{equation}
    \label{eq:boundary-G-induced-U-i}
    \Setsize{\boundary^{\expandersubgraph{\expandergraph}{\vset_{i-1}}}(\uset_i)}
    \leq \setsize{\uset_i}    
  \end{equation}
  and by the inductive hypothesis concerning \propclosureone we have
  \begin{equation}
    \label{eq:using-IH-prop-1}
    \Setsize{\boundary^{\expandergraph}(\Ker(\vset_{i-1})) \setminus \vset_0}
    \leq
    \setsize{\Ker(\vset_{i-1})} \eqperiod     
  \end{equation}
  Combining
  \refeq{eq:boundary-U-prime}
  with
  \mbox{\refeq{eq:boundary-ker-V-iminus1-U-i}--\refeq{eq:using-IH-prop-1}}
  we deduce that
  \begin{equation}
    \label{eq:induction-step-punchline-prop1}
    \Setsize{\boundary^{\expandergraph}(\leftvertexsubset)
      \setminus \vset_0}
    \leq
    \Setsize{\boundary^{\expandergraph}(\Ker(\vset_{i-1})
      \cup \uset_{i})
      \setminus \vset_0} 
    \leq 
    \setsize{\Ker(\vset_{i-1})} + \setsize{\uset_i}
    \leq
    \setsize{\Ker(\vset_i)}
    \eqcomma
  \end{equation}
  where the final inequality holds since
  $\Ker(\vset_{i-1})$ and $\uset_i$ are disjoint subsets of
  $\Ker(\vset_i)$.   
  This concludes the inductive step for \propclosureone.

  To establish \propclosuredtwo, assume first that
  $\setsize{\Ker(\vset_i)} \leq   \expansionguarantee$.  
  Then by expansion and 
  \propclosureone applied to 
  $\utilde = \Ker(\vset_i)$ 
  we have 
  \begin{equation}
    \label{eq:prop2-1}
    2\setsize{\Ker(\vset_i)}
    \leq
    \Setsize{\boundary^{\expandergraph}(\Ker(\vset_i))} 
    \leq 
    \setsize{\vset_0} + \setsize{\Ker(\vset_i)}    
  \end{equation}
  and hence 
  \begin{equation}
    \label{eq:prop2-2}
    \setsize{\Ker(\vset_i)}
    \leq
    \setsize{\vset_0}
  \end{equation} 
  as desired.
  If instead
  $\setsize{\Ker(\vset_i)} >
  \expansionguarantee$,  
  then by the inductive hypothesis we know that
  $\setsize{\Ker(\vset_{i-1})}\leq \expansionguarantee/2$ 
  and by
  construction we have
  $\setsize{\uset_i}\leq\expansionguarantee/2$. 
  Therefore 
  we can find
  a set 
  $\leftvertexsubset$ 
  of size
  $\setsize{\leftvertexsubset} = \expansionguarantee$
  satisfying the condition
  $\Ker(\vset_{i-1})\cup\uset_i\subseteq\leftvertexsubset\subseteq
  \Ker(\vset_i)$
  in \propclosureone.
  From the expansion properties of~$\expandergraph$ we conclude that
  $\setsize{\boundary(\leftvertexsubset)}\geq2\expansionguarantee$,
  which is a contradiction because 
  for sets~$\leftvertexsubset$ 
  satisfying the conditions  in \propclosureone we
  derived~\refeq{eq:induction-step-punchline-prop1}, 
  which implies that
  $\setsize{\boundary(\leftvertexsubset)}\leq \setsize{\vset_0}
  +\setsize{\Ker(\vset_{i-1})} +\setsize{\uset_i} \leq
  3\expansionguarantee/2$. 
\end{proof}

\end{document}

